\documentclass[sigconf, nonacm]{acmart}

\usepackage{pvldb}

\usepackage{graphicx}
\usepackage{subcaption}
\usepackage{bm}
\usepackage{amsmath}
\usepackage{amsthm}
\usepackage{mathtools}
\usepackage{algorithmicx}
\usepackage{algorithm}
\usepackage[noend]{algpseudocode}
\algnewcommand\Input{\item[\textbf{Input:}]}
\algnewcommand\Output{\item[\textbf{Output:}]}
\usepackage[capitalize,noabbrev]{cleveref}
\usepackage{booktabs}
\usepackage{makecell}

\newtheorem{definition}{Definition}

\newtheorem{lemma}{Lemma}
\newtheorem{theorem}{Theorem}
\newtheorem{corollary}{Corollary}

\DeclareMathOperator*{\argmin}{argmin}
\DeclareMathOperator*{\argmax}{argmax}

\usepackage[most]{tcolorbox}

\newtcolorbox{definition_box}{
    colback=white,
    colframe=black,
    boxrule=0.5pt,
    arc=0pt,
    breakable,
    left=1.5mm, right=1.5mm, top=1.0mm, bottom=1.0mm,
    before skip=4pt, after skip=4pt
}

\newtcolorbox{theorem_box}{
    colback=black!9!white,
    colframe=black!9!white,
    boxrule=0pt,
    arc=0pt,
    breakable,
    left=1.5mm, right=1.5mm, top=1.0mm, bottom=1.0mm,
    before skip=4pt, after skip=4pt
}

\usepackage{enumitem}

\renewcommand\vldbavailabilityurl{https://github.com/atsukisato/pgm-attack}

\begin{document}
\title{Poisoning Attacks on the PGM-index}

\author{Atsuki Sato}
\orcid{0009-0001-5366-4842}
\email{a\_sato@hal.t.u-tokyo.ac.jp}
\affiliation{%
    \institution{Graduate School of Information Science and Technology, The University of Tokyo}
    \city{Tokyo}
    \country{Japan}
}

\author{Martin Aum\"uller}
\email{maau@itu.dk}
\affiliation{%
    \institution{Algorithms Group, IT University of Copenhagen}
    \city{Copenhagen}
    \country{Denmark}
}

\author{Yusuke Matsui}
\email{matsui@hal.t.u-tokyo.ac.jp}
\affiliation{%
    \institution{Graduate School of Information Science and Technology, The University of Tokyo}
    \city{Tokyo}
    \country{Japan}
}

\begin{abstract}
The PGM-index (Ferragina and Vinciguerra, VLDB'20) is one of the most practical learned indexes, owing to its theoretical elegance and consistently strong empirical performance.
It is built on optimal piecewise linear approximations (PLAs) that minimize the number of segments.
In this paper, we ask how sensitive this optimal PLA itself is to poisoning attacks.
We propose \textsc{PGM-attack}, an efficient poisoning attack that sequentially inserts adversarial keys to inflate the resulting number of segments, and we develop a method for deriving theoretical upper bounds on the number of segments attainable under arbitrary insertions.
Our experiments show that poisoning only 10\% of the keys allows \textsc{PGM-attack} to increase the segment count by up to $120\times$.
On every evaluated instance, our instance-dependent upper bound is at most $1.92\times$ the segment count attained by \textsc{PGM-attack}, certifying that \textsc{PGM-attack} achieves at least 52\% of the optimum.
This increase in the number of segments enlarges the PGM-index by up to $120\times$.
Moreover, the attack also transfers to other learned indexes, substantially inflating the index size of PLA-based ones in particular.
Our results reveal that, despite the optimality of its PLAs, the PGM-index has an intrinsic vulnerability rooted in its optimization objective, motivating robustness-aware objective design for future learned indexes.
\end{abstract}

\maketitle


\begin{figure}[t]
    \centering
    \includegraphics[width=\columnwidth]{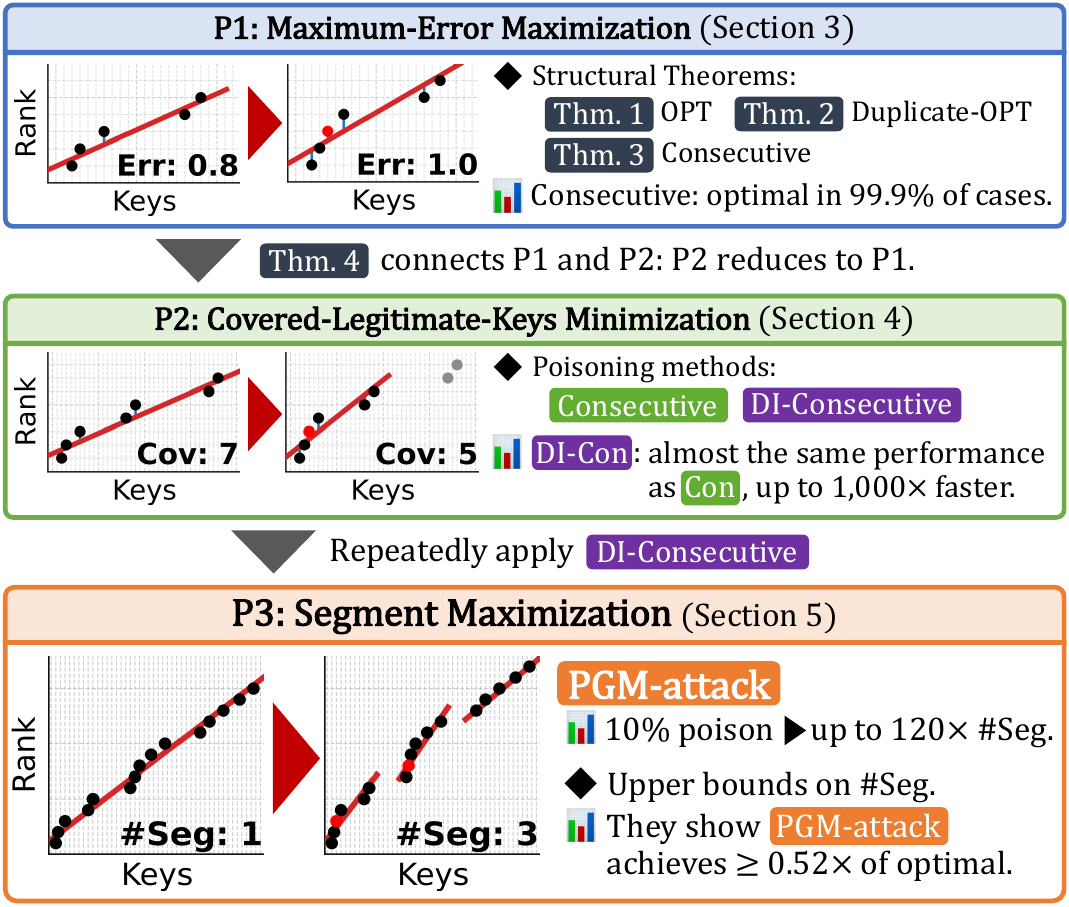}
    \caption{Our bottom-up approach progresses from maximum-error maximization (\textbf{P1}), through covered-key minimization (\textbf{P2}), to segment maximization (\textbf{P3}). The \textbf{P3} panel illustrates how our \textsc{PGM-attack} can effectively degrade a PLA: before poisoning, a single segment suffices to approximate the ranks of all keys within error $\varepsilon = 1$, whereas inserting only two poison keys increases the required number of segments to three.}
    \label{fig:overview}
\end{figure}

\section{Introduction}
\label{sec:introduction}

In recent years, learned indexes~\cite{kraska2018case_li} have attracted considerable attention as space-efficient, high-performance alternatives to traditional index structures such as B+-trees.
The PGM-index~\cite{ferragina2020pgm} is a leading example.
It recursively constructs \textit{optimal piecewise linear approximations (PLAs)} of the cumulative distribution function (CDF), where an optimal PLA is one that uses the minimum number of segments.
This optimality allows the PGM-index to provide worst-case theoretical guarantees while achieving a favorable trade-off between memory usage and query time.
These properties make the PGM-index an appealing choice for large-scale settings where both practical efficiency and worst-case guarantees matter, and it has begun to be adopted in high-performance database systems and search engines~\cite{manticore_manual_introduction,infiniflow_medium_infinity_010}.

As learned indexes become increasingly practical, understanding their robustness becomes correspondingly important.
By quantifying how much an index's memory efficiency and query performance can deteriorate after the insertion of a small amount of data, we can assess its stability and deploy it with greater confidence.
For traditional indexes, such effects are often straightforward to characterize: for example, a B+-tree requires $\Theta(n)$ space and supports queries in $\Theta(\log n)$ time, making the impact of a small increase in the number of keys readily predictable.
For learned indexes, however, the impact is far less clear because their performance can depend sensitively on the data distribution.
Motivated by this concern, recent studies have investigated attacks against several learned indexes, including poisoning attacks against mean-squared-error (MSE) minimizing linear models and Recursive Model Indexes (RMIs)~\cite{kraska2018case_li,kornaropoulos2022price,sato2026foundations}, as well as algorithmic-complexity attacks against ALEX~\cite{ding2020alex,yang2023algorithmic,schuster2025learned}.

We note that, although the PGM-index enjoys several theoretical guarantees, its robustness against poisoning (or, more generally, against data insertions) remains largely uncharacterized, which is an important concern in practical deployments.
For example, the PGM-index has a worst-case guarantee of provably superior performance to conventional B+-trees~\cite{ferragina2020pgm}, but this guarantee does not imply robustness against poisoning:
the strength of the PGM-index lies in the fact that it typically performs far better than B+-trees, and if a small number of poison keys can drive its performance toward this worst case, this strength is lost.
Moreover, in terms of sensitivity to attacks, i.e., the discrepancy between the performance before and after an attack, the PGM-index may even be more sensitive than a conventional B+-tree.
Such sensitivity poses a serious problem when deploying the PGM-index in real systems:
database administrators are often drawn to the PGM-index for its high performance, yet they would have to worry that its benefits might be largely eliminated by only a small number of key insertions.

In this paper, we propose \textsc{PGM-attack}, a poisoning attack against PLA-based learned indexes, including the PGM-index.
\textsc{PGM-attack} substantially increases the number of segments in the optimal PLA, which forms the bottom-level PLA of the PGM-index.
The key-rank plots at the bottom of \Cref{fig:overview} show an illustrative example; inserting only two poison keys increases the number of segments from one to three.
Because these bottom-level segments dominate the model size of the PGM-index, increasing their number reduces compression efficiency and can degrade query performance through more cache misses and larger upper-level structures.

Analyzing poisoning attacks against PLAs is challenging for several reasons.
First, this problem is difficult to analyze theoretically because even the basic building block of PLA construction (maximum-error-minimizing linear regression) does not admit a simple closed-form solution.
This is clearly distinct from the MSE-minimizing regression considered in prior work~\cite{kornaropoulos2022price,sato2026foundations}, which admits a closed-form solution.
Second, inserting a poison key shifts the ranks of all subsequent keys and can change the boundaries and sizes of later segments, causing cascading changes throughout the segmentation.
Prior attacks against RMIs~\cite{kornaropoulos2022price} do not need to account for such global effects because they assume that each linear regression model covers a fixed number of keys.
Finally, simple strategies such as inserting a consecutive run of keys, as used in algorithmic-complexity attacks against ALEX~\cite{yang2023algorithmic,schuster2025learned}, are ineffective at increasing the total number of PLA segments.
Such keys can typically be absorbed into a single PLA segment and therefore have little effect on the segment count.

To address these challenges, we adopt a bottom-up approach that proceeds from simpler subproblems to our final objective, as illustrated in \Cref{fig:overview}.
We first analyze \textbf{P1}, the problem of maximizing the maximum error of linear regression.
The resulting insights allow us to solve \textbf{P2}, which minimizes the number of legitimate keys covered by a single segment.
Finally, by repeatedly solving \textbf{P2}, we address \textbf{P3}, which maximizes the total number of segments in the PLA, thereby constructing \textsc{PGM-attack}.
We also propose algorithms for deriving upper bounds on the number of segments attainable under arbitrary key insertions.
Our experiments show that poisoning only 10\% of the keys can increase the number of segments by up to $120\times$.
These results reveal that, despite the optimality of its PLAs, the PGM-index has an intrinsic vulnerability rooted in its optimization objective.
This motivates the design of future learned indexes with more robustness-aware objectives.

\textbf{Contributions.}
Our contributions are summarized as follows:
\begin{itemize}[leftmargin=1.5em]
    \item We formulate three new poisoning problems on the CDF: \textbf{P1}, maximizing the maximum error in linear regression; \textbf{P2}, minimizing the number of covered legitimate keys in linear regression; and \textbf{P3}, maximizing the number of segments in the PLA.
    \item For \textbf{P1}, we characterize optimal poison sets (\cref{sec:poisoning_max_error}) and show experimentally that appropriately selected consecutive-integer poison keys are almost always optimal (\cref{sec:experiment_poisoning_max_error}).
    \item For \textbf{P2}, we prove a reduction to \textbf{P1} and develop a near-linear-time algorithm (\cref{sec:poisoning_num_coverable_keys}). We experimentally show that this algorithm generates near-optimal consecutive-integer poison keys (\cref{sec:experiment_poisoning_num_coverable_keys}).
    \item For \textbf{P3}, we propose \textsc{PGM-attack}, which repeatedly generates candidate poison sets for \textbf{P2} and sequentially selects among them (\cref{sec:method_for_obtaining_poison_solutions_m_opt}). Injecting 10\% poison keys increases the segment count and PGM-index size by up to $120\times$. The attack also transfers to other learned indexes, particularly inflating the size of PLA-based ones (\cref{sec:experiment_poisoning_segment_number,sec:experiment_poisoning_index_performance}).
    \item For \textbf{P3}, we also propose algorithms for upper-bounding the number of segments attainable under arbitrary key insertions (\cref{sec:m_opt_instance_agnostic_upper_bound,sec:m_opt_instance_dependent_upper_bound}). On all evaluated instances, the resulting upper bound is at most $1.92\times$ the segment count obtained by \textsc{PGM-attack}, certifying that the attack achieves at least 52\% of the optimum on these instances (\cref{sec:experiment_poisoning_segment_number}).
\end{itemize}

\section{Preliminaries}
\label{sec:preliminaries}

We consider the \emph{indexing problem}, which requires building a data structure that stores a multiset $\mathcal{X}$ and supports operations such as \emph{member} (given an element $x$, is $x \in \mathcal{X}$?), \emph{predecessor} (given an element $x$, output the largest $y \leq x$ such that $y \in \mathcal{X}$), and \emph{insert} (add $x$ to $\mathcal{X}$).
A simple but efficient way to implement such a data structure is to store the elements of $\mathcal{X}$ in a sorted array $A$.
Then, all operations boil down to efficiently finding the \emph{rank} of an element $x$, i.e., the number of elements in $\mathcal{X}$ that are smaller than $x$~\cite{ferragina2020pgm}.

The idea of a \emph{learned index} as popularized by Kraska et al.~\cite{kraska2018case_li} is that the index employs a \emph{model} that learns the rank function based on the data.
In other words, the model learns the cumulative distribution function (CDF) of the given dataset.
Given an element $x$, the model predicts the rank of $x$ and then uses a second verification step, for example using binary search, to compute the exact answer.

\subsection{PLA Algorithm}
Piecewise linear approximation (PLA) is a central building block of learned indexes: it underlies not only the PGM-index~\cite{ferragina2020pgm} but also a broad range of designs~\cite{galakatos2019fiting,kipf2020radixspline}, including recent state-of-the-art learned indexes~\cite{liu2024learned,leying2026line}.
A PLA algorithm approximates the rank function by a sequence of linear segments: given a \emph{maximum allowed error parameter} $\varepsilon$, it repeatedly extends the current segment as far as possible while keeping the maximum error at most $\varepsilon$, and then starts a new segment.
The PLA construction problem (formally defined in \cref{sec:poisoning_m_opt}) thus builds on the segment extension problem (\cref{sec:poisoning_num_coverable_keys}), which in turn builds on the problem of minimizing the maximum error of a single segment (\cref{sec:poisoning_max_error}).

PLA algorithms come in optimal variants that minimize the number of segments~\cite{o1981line} and faster variants that construct near-optimal approximations~\cite{elmeleegy2009online}.
Since the PGM-index adopts an optimal PLA algorithm, we focus on this setting.
Accordingly, although the PGM-index serves as our primary application, our study targets the PLA algorithm itself rather than any PGM-specific implementation.

\subsection{Importance of the PLA Segment Count}
Although the PGM-index has a recursive multi-level structure, its space and query-time complexity are dominated by the bottom-level PLA over the CDF~\cite{ferragina2020pgm}: for a dataset of $n$ keys with optimal segment count $m_\mathrm{opt}$, the index requires $O(m_\mathrm{opt})$ space and $O(\log m_\mathrm{opt} + \log \varepsilon)$ query time.
Attacks that increase $m_\mathrm{opt}$ therefore directly degrade the index's practical performance.
Our theoretical analysis focuses on this fundamental PLA component, and our experiments (\Cref{sec:experiment}) validate its end-to-end impact on the PGM-index and other learned indexes.

\subsection{Threat Model}
Following prior work~\cite{kornaropoulos2022price,sato2026foundations}, we adopt a white-box setting in which the attacker knows all legitimate keys and can inject poison keys before the index is constructed; for the problems studied in \cref{sec:poisoning_num_coverable_keys,sec:poisoning_m_opt}, the attacker also knows the hyperparameter $\varepsilon$.
This is the standard starting point in attack studies~\cite{kornaropoulos2022price,sato2026foundations}: it isolates the inherent sensitivity of the PLA construction objective from deployment-specific restrictions and provides a basis for weaker-knowledge attacks.
For example, a black-box attacker could first estimate the data distribution or $\varepsilon$ from limited observations and then apply a white-box attack to the resulting surrogate~\cite{yang2023algorithmic}.
We also follow prior work~\cite{kornaropoulos2022price,sato2026foundations} by restricting poison keys to lie strictly between the minimum and maximum legitimate keys.
This restriction conservatively limits the attacker's capabilities by excluding poison keys outside the legitimate data range, which could be easily detected and removed as outliers.
We show that even under this restriction, the attacker can substantially increase the number of segments.

\section{Maximum-Error Maximization}
\label{sec:poisoning_max_error}

\begin{figure*}[t]
    \centering
    \begin{subfigure}[b]{0.24\textwidth}
        \centering
        \includegraphics[width=\textwidth]{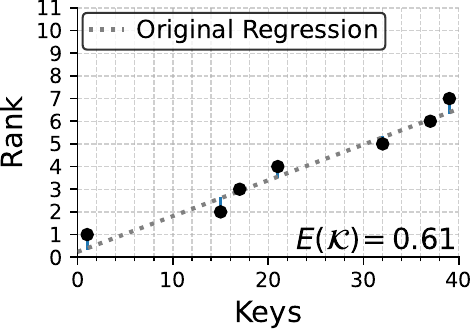}
        \caption{Regression Before Poisoning.}
        \label{fig:maxerror_poisoning_a}
    \end{subfigure}
    \begin{subfigure}[b]{0.24\textwidth}
        \centering
        \includegraphics[width=\textwidth]{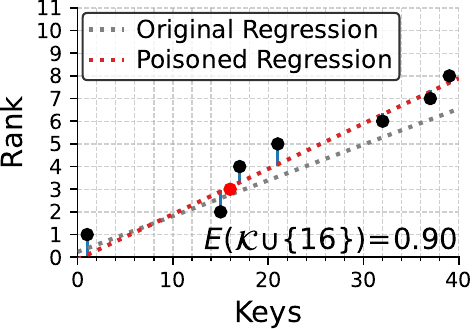}
        \caption{Optimal 1-point Poisoning.}
        \label{fig:maxerror_poisoning_b}
    \end{subfigure}
    \begin{subfigure}[b]{0.24\textwidth}
        \centering
        \includegraphics[width=\textwidth]{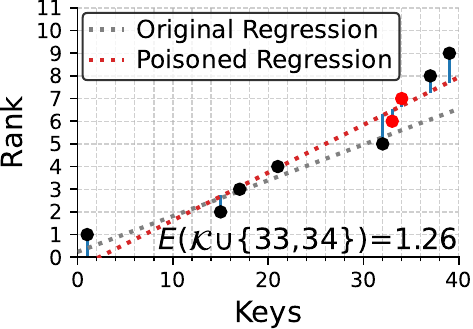}
        \caption{Optimal 2-point Poisoning.}
        \label{fig:maxerror_poisoning_c}
    \end{subfigure}
    \begin{subfigure}[b]{0.24\textwidth}
        \centering
        \includegraphics[width=\textwidth]{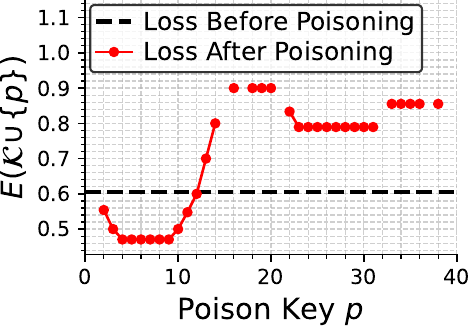}
        \caption{Loss under 1-point poisoning.}
        \label{fig:maxerror_poisoning_d}
    \end{subfigure}
    \caption{Poisoning attack on $\mathcal{K} = \{1, 15, 17, 21, 32, 37, 39\}$ under the maximum-error objective.
    By inserting poison points, the rank of each key greater than a poison increases.
    The linear regression model is then fitted on the poisoned dataset, leading to a change in the maximum error.
    Note that the optimal 2-point attack $\{33, 34\}$ does not contain the optimal 1-point attack.}
    \label{fig:maxerror_poisoning}
\end{figure*}

\subsection{Problem Setting}
We refer to the keys present in the original (pre-attack) training data as legitimate keys and denote the set of legitimate keys by $\mathcal{K}$, with $n \coloneq |\mathcal{K}|$.
We use $\mathcal{X}$ to denote a general key set that may include both legitimate and poison keys, and let $N \coloneq |\mathcal{X}|$.
For a positive integer $m$, we write $[m] \coloneq \{1,2,\dots,m\}$.

We begin by defining linear regression on CDFs.
\begin{definition_box}
\begin{definition}[\textbf{Linear Regression on CDFs (Maximum Error)}]
\label{def:linear_regression_on_cdfs_max_error}
Let $\mathcal{X} = \{x_1, x_2, \dots, x_N\}$ be a multiset of natural numbers such that $x_1 \leq x_2 \leq \dots \leq x_N$.
For each $i \in [N]$, define the rank of $x_i$ as $r_i \coloneq i$.
The maximum error of linear regression on CDFs for $\mathcal{X}$ is defined by
\begin{equation}
    E(\mathcal{X}) \coloneq \min_{w,b} \max_{i \in [N]} |wx_i+b-r_i|.
\end{equation}
\end{definition}
\end{definition_box}
Our definition differs from prior work~\cite{kornaropoulos2022price, sato2026foundations} only in the choice of loss function:
instead of minimizing the MSE~\cite[Eq. 1--2]{sato2026foundations},
\begin{equation}
\min_{w,b} \frac{1}{N}\sum_{i\in[N]}(wx_i + b - r_i)^2,
\end{equation}
we minimize the maximum error, which enables a more direct connection to PLA algorithms.
Computing $E(\mathcal{X})$ can be framed as a three-dimensional linear programming problem~\cite{boyd2004convex} and solved in $\mathcal{O}(N)$ time via Megiddo's algorithm~\cite{megiddo1984linear}; unlike MSE, no simple closed-form solution is known.

We now define the corresponding poisoning problem (again identical to the MSE-based formulation~\cite{kornaropoulos2022price,sato2026foundations} except for the objective).
\begin{definition_box}
\begin{definition}[\textbf{Maximum-Error Maximization Problem}]
\label{def:poisoning_linear_regression_on_cdfs_max_error}
Let $\mathcal{K} = \{k_1, k_2, \dots, k_n\} \subset \mathbb{N}$ be $n$ distinct legitimate keys such that
$k_1 < k_2 < \dots < k_n$, and let $\lambda \in \mathbb{N}$.
The \emph{maximum-error maximization problem} is to choose up to $\lambda$ integers from $\{k_1, k_1+1, \dots, k_n\} \setminus \mathcal{K}$ so as to maximize the maximum error:
\begin{equation}
\argmax_{\mathcal{P} ~ \mathrm{s.t.} ~ |\mathcal{P}| \leq \lambda,\;
\mathcal{P} \subseteq \{k_1, k_1+1, \dots, k_n\} \setminus \mathcal{K}}
E(\mathcal{K} \cup \mathcal{P}).
\end{equation}
\end{definition}
\end{definition_box}
\textbf{Connection to the PGM-index.}
This problem is fundamental to attacking the PGM-index.
The PLA algorithm extends each segment as far as possible while keeping the maximum error within $\varepsilon$.
Therefore, increasing $m_\mathrm{opt}$ amounts to forcing the maximum error to exceed $\varepsilon$ earlier, thereby shortening segments.
We revisit this connection in \cref{sec:poisoning_num_coverable_keys,sec:poisoning_m_opt}.

\textbf{Difficulty of the Problem.}
This problem involves several intrinsic challenges.
\textbf{Rank shifting:}
As in the MSE setting, inserting a poison key increases the rank of every subsequent key and poison by one.
This global interaction makes the optimal choice of poison keys highly non-trivial.
\textbf{Restricted integer choices:}
Integers already used as legitimate keys or previously selected poison keys cannot be reused.
This discrete feasibility constraint complicates analytical reasoning.
\textbf{No closed-form solution:}
Unlike ordinary least squares under MSE, which admits a closed-form solution, the maximum-error regression problem has no known closed-form characterization.
This significantly complicates theoretical analysis.

\textbf{Illustrative Example.}
\cref{fig:maxerror_poisoning} illustrates an example of a maximum-error poisoning attack on $\mathcal{K} = \{1, 15, 17, 21, 32, 37, 39\}$.
As in the MSE setting, inserting a poison key shifts the ranks of all subsequent keys and poisons by one.
For example, comparing \cref{fig:maxerror_poisoning_a,fig:maxerror_poisoning_b}, the insertion of a poison does not affect the ranks of keys less than the poison, while it increases the rank of each key greater than the poison by one.
\cref{fig:maxerror_poisoning_d} plots $E(\mathcal{K} \cup \{p\})$ for each candidate poison $p$.
Integers already present in $\mathcal{K}$ are excluded from the candidate set.
In each interval, the poison that maximizes the maximum error is adjacent to a legitimate key.
This structural property is formally established later in \cref{cor:single_point_attack}.

\textbf{Duplicate-Allowed Setting.}
We also introduce a relaxed setting in which duplicate values are allowed.
In this variant, poison keys may coincide with legitimate keys, and the same integer may be selected multiple times as poison.
The optimal value in this relaxed setting provides an upper bound for the original (duplicate-forbidden) problem.

\subsection{The Structure of Optimal Attacks}
We prove that there exists an optimal solution to the maximum-error poisoning problem (\cref{def:poisoning_linear_regression_on_cdfs_max_error}) in which every poison key is adjacent to a legitimate key, either directly or through a chain of adjacent poison keys.
Formally, we have the following theorem.
\begin{theorem_box}
\begin{theorem}
\label{thm:maxerror_attack_structure}
There exists an optimal solution $\mathcal{P}^\ast$ to the maximum-error poisoning problem
(\cref{def:poisoning_linear_regression_on_cdfs_max_error}) satisfying
\begin{equation}
    \forall p \in \mathcal{P}^\ast,~ \exists k \in \mathcal{K}, ~ \{i \in \mathbb{N} \mid \min(p,k) < i < \max(p,k)\} \subset \mathcal{P}^\ast.
\end{equation}
\end{theorem}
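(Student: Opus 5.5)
The plan is an exchange argument. Start from an arbitrary optimal $\mathcal{P}$ and slide any poison block that does not touch a legitimate key until it does, without decreasing $E$. Call a maximal run of consecutive integers in $\mathcal{P}$ a \emph{block}. A block violates the condition of \cref{thm:maxerror_attack_structure} exactly when it lies strictly inside a gap $(k_j,k_{j+1})$ and both of its outer neighbours are free integers, i.e.\ in neither $\mathcal{K}$ nor $\mathcal{P}$. Call such a block \emph{floating}.

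Fix a floating block $B=\{q,q+1,\dots,q+s-1\}$ and consider the shifted set $\mathcal{P}_t=(\mathcal{P}\setminus B)\cup(B+t)$. Let $t$ range over the integer interval $[t_-,t_+]$ on which $B+t$ stays inside the free integers of its gap; at each endpoint the block becomes adjacent to a legitimate key or to another poison block. Throughout this interval the relative order of all points of $\mathcal{K}\cup\mathcal{P}_t$ is unchanged. Hence every rank $r_i$ is constant in $t$, and only the $x$-coordinates of the $s$ block points move, each by exactly $t$.

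The key step is to show that $t\mapsto E(\mathcal{K}\cup\mathcal{P}_t)$ is quasiconvex on $[t_-,t_+]$, so that its maximum over the interval is attained at $t_-$ or $t_+$.

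\begin{itemize}
\item I would not work with the min--max definition directly, because the term $w(x+t)$ is bilinear in $(w,t)$.
\item Instead I would use the Chebyshev alternation (Helly-type) characterization of the best uniform linear fit. For points with distinct $x$-coordinates, $E(\mathcal{X})$ equals the maximum, over triples $x_a<x_b<x_c$, of the minimax error of those three points. That error is
\begin{equation*}
e_{abc}=\frac{\bigl|(x_c-x_b)(r_b-r_a)-(x_b-x_a)(r_c-r_b)\bigr|}{2(x_c-x_a)}.
\end{equation*}
\item Since the keys are distinct and poisons are distinct from keys, this characterization applies. It follows from LP duality for the three-variable LP, whose optimal dual is supported on three constraints. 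I would state and prove it as an auxiliary lemma.
\item For a fixed triple, shifting the block moves $0$, $1$, $2$ or $3$ of $x_a,x_b,x_c$ by $t$. The numerator inside the absolute value is then affine in $t$. The denominator is affine in $t$ and stays positive on the interval, since the order is preserved.
\item So each $e_{abc}(t)$ is the absolute value of a linear-fractional function with positive denominator. Such a ratio is monotone in $t$, so its absolute value has interval sublevel sets, i.e.\ it is quasiconvex.
\item A pointwise maximum of quasiconvex functions is quasiconvex. Therefore $E$ is maximized at an endpoint of $[t_-,t_+]$.
\end{itemize}

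Shifting $B$ to the better endpoint gives a feasible poison set of the same size with $E$ at least as large, so it is still optimal. At that endpoint $B$ either becomes adjacent to a legitimate key or merges with another poison block.

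To finish, I would iterate this move. Termination comes from a potential such as the number of floating blocks plus the total distance from floating blocks to their nearest legitimate key.

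\begin{itemize}
\item A move that attaches $B$ to a legitimate key removes one floating block.
\item A move that merges $B$ with another block either reduces the number of blocks, or produces a block that is no longer floating because the other block was attached.
\end{itemize}

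When no floating block remains, every poison $p$ lies in a run of consecutive poisons touching some $k\in\mathcal{K}$. The integers strictly between $p$ and $k$ are then all in $\mathcal{P}^\ast$, which is the claimed property.

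The main obstacle is justifying the three-point characterization of $E$ rigorously, including degenerate cases where the optimum is attained by two points or the optimal dual has a smaller support. If that route proves awkward, the fallback is to work with the dual LP directly: for each fixed dual-feasible multiplier vector, the dual objective should be linear-fractional in $t$, and the same quasiconvexity argument would then apply.
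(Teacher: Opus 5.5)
Your proof is correct, and its outer layer is the same as the paper's. The paper also takes an optimal solution, slides an isolated (your ``floating'') block to whichever end of its free stretch does not decrease $E$, and concludes by minimality of the number of isolated blocks. As your own case analysis shows, that count alone drops by one at every move, so the distance term in your potential is not needed.

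The difference lies in how you prove that $E$ is maximized at an endpoint of the shift range, which is the content of the paper's \cref{lem:block_moving}. The paper changes variables to $(u,v)=(1/w,b/w)$, where each constraint becomes $(r_i-\varepsilon)u-v\le x_i\le(r_i+\varepsilon)u-v$. A block shift by $\delta$ then merely translates the block's convex feasible region by $(0,-\delta)$, so the set of shifts with error at most $\varepsilon$ is a slice of the convex Minkowski difference $\mathcal{G}-\mathcal{H}$, hence an interval. You instead write $E$ as a maximum of three-point errors, each of the form $|\text{affine}|/\text{positive affine}$ in $t$, and use quasiconvexity of such ratios.

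The step you flag as the main obstacle is easy. Each constraint $|wx_i+b-r_i|\le\varepsilon$ is a convex strip in the $(w,b)$-plane, so Helly's theorem in $\mathbb{R}^2$ gives $E(\mathcal{X})\le\varepsilon$ if and only if every triple is $\varepsilon$-feasible. There are no degenerate cases, because all $x$-coordinates are distinct here, so you need neither LP duality nor your fallback.

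What each route buys: yours is elementary and explicit, and it never needs a feasible line with $w>0$, on which the paper's substitution $u=1/w$ depends. The paper asserts positivity of the optimal slope only briefly, although the $\varepsilon$-feasible region can contain non-positive slopes. The paper's route, in exchange, never uses distinctness of the $x_i$. The same lemma therefore holds verbatim for multisets and is reused in Step 2 of the proof of \cref{thm:duplicate_allowed_structure}, whereas your closed-form triple error would need separate treatment of coincident $x$-coordinates.
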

\end{theorem_box}

To prove \cref{thm:maxerror_attack_structure}, we first establish the following lemma, which shows that there exists at least one direction (either left or right) in which shifting a contiguous block of keys does not decrease the maximum error.
\begin{theorem_box}
\begin{lemma}
\label{lem:block_moving}
Let $\mathcal{X} = \{x_1, x_2, \dots, x_N\}$ be a multiset with $x_1 \leq x_2 \leq \dots \leq x_N$, and let $1 < l \leq r < N$.
Define
\begin{equation}
    \mathcal{A} \coloneq \{x_i\}_{i=1}^{l-1} \uplus \{x_i\}_{i=r+1}^{N},\quad
    \mathcal{B}(\delta) \coloneq \{x_i + \delta\}_{i=l}^{r},
\end{equation}
for $\delta \in [x_{l-1} - x_l,\, x_{r+1} - x_r]$, and let $\mathcal{B} \coloneq \mathcal{B}(0)$.
Then, for any $\delta_{-} \in [x_{l-1} - x_l, 0]$ and $\delta_{+} \in [0, x_{r+1} - x_r]$,
\begin{equation}
\label{eq:block_moving_condition}
    E(\mathcal{A}\uplus\mathcal{B}) \leq \max(E(\mathcal{A}\uplus\mathcal{B}(\delta_{-})), E(\mathcal{A}\uplus\mathcal{B}(\delta_{+}))),
\end{equation}
where $\uplus$ denotes multiset addition.
\end{lemma}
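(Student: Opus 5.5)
The plan is to show that $f(\delta) \coloneq E(\mathcal{A}\uplus\mathcal{B}(\delta))$ is a \emph{quasiconvex} function of $\delta$ on the interval $[x_{l-1}-x_l,\, x_{r+1}-x_r]$. Quasiconvexity immediately gives $f(0) \le \max(f(\delta_-), f(\delta_+))$ whenever $\delta_- \le 0 \le \delta_+$, which is exactly \cref{eq:block_moving_condition}.

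\textbf{Step 1 (ranks are fixed).} For every $\delta$ in the allowed range the sorted order of the multiset is preserved. The shifted block stays between $x_{l-1}$ and $x_{r+1}$, and ties can be broken consistently, so each key keeps its rank $r_i = i$. Hence
\begin{equation*}
f(\delta) = \min_{w,b} g(w,b,\delta),
\end{equation*}
where
\begin{equation*}
g(w,b,\delta) \coloneq \max\Bigl(\max_{i\notin[l,r]} |wx_i+b-i|,\; \max_{i\in[l,r]} |w(x_i+\delta)+b-i|\Bigr).
\end{equation*}
The point of this step is that $\delta$ enters only through the block keys, and the rank-shifting difficulty disappears.

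\textbf{Step 2 (reparametrize to remove the bilinear term).} The term $w\delta$ is bilinear, so $g$ is not jointly convex in $(w,b,\delta)$. To handle this, I would fix $w$ and set $c \coloneq w\delta$. The residuals are then affine in $(b,c)$ for fixed $w$. Define
\begin{equation*}
S_t \coloneq \{\delta : f(\delta) \le t\},
\end{equation*}
the set of $\delta$ for which some line achieves error at most $t$. It is enough to show that $S_t$ is an interval for every $t$. Since $f$ is continuous (it is a min of a max of affine functions over an essentially compact set of lines), this yields quasiconvexity.

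\textbf{Step 3 (the interval property, the main obstacle).} Suppose $\delta_1 < \delta_2$ both lie in $S_t$, witnessed by lines $(w_1,b_1)$ and $(w_2,b_2)$. We need a witness for each intermediate $\delta$. The first attempt is a convex combination of the two lines, with the shift chosen so that $w\delta$ interpolates $w_1\delta_1$ and $w_2\delta_2$. Concretely, take $\theta\in[0,1]$, set $w = \theta w_1 + (1-\theta)w_2$ and $b = \theta b_1 + (1-\theta)b_2$. Then every non-block residual is bounded by $t$ by convexity. A block residual equals
\begin{equation*}
wx_i + b - i + c, \qquad c = \theta w_1\delta_1 + (1-\theta)w_2\delta_2,
\end{equation*}
and is likewise bounded by $t$. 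It remains to check that $c = w\delta$ for some $\delta \in [\delta_1,\delta_2]$, where $\delta = c/w$. This is a weighted average of $\delta_1,\delta_2$ with weights $\theta w_1$ and $(1-\theta)w_2$, so it lies in $[\delta_1,\delta_2]$ provided $w_1,w_2$ have the same sign, say both $\ge 0$.

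As $\theta$ varies from $0$ to $1$, $\delta$ moves continuously from $\delta_2$ to $\delta_1$, so it covers the whole interval. The obstacle is therefore to guarantee nonnegative (or same-sign) optimal slopes. I would argue that for any $t$ below the trivial error attained by the horizontal line $w=0$, a witnessing line must have $w\ge 0$: if $w<0$, the residual $wx_i+b-i$ decreases in $i$ by at least $1$ per step, so its spread exceeds that of the constant line. Slopes $w\le0$ can then be replaced by $w=0$ without increasing the error, because the ranks are increasing in the keys. The degenerate case $w=0$ gives $f$ independent of $\delta$ and is harmless.

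With $S_t$ shown to be an interval for all $t$, $f$ is quasiconvex, and the lemma follows.
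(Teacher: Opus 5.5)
Your proof is correct and is essentially the paper's argument. The paper passes to $(u,v)=(1/w,b/w)$, where the block shift becomes a translation along $v$ and the set of feasible shifts is convex as a Minkowski difference of convex regions; your slope-weighted combination, in which $\delta=c/w$ is a weighted average with weights $\theta w_1,(1-\theta)w_2$, is exactly a straight-line convex combination of the two witnesses in those $(u,v)$ coordinates. Your explicit treatment of nonpositive slopes (any line with $w\le 0$ has error at least $(N-1)/2$, which the horizontal line attains for every $\delta$) makes rigorous a step the paper only asserts when it declares the optimal slope strictly positive.
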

\end{theorem_box}
\begin{proof}[Proof Sketch of \cref{lem:block_moving}]
Fix $\varepsilon \geq 0$ and consider the feasible sets of line parameters.
By reinterpreting the block shift in a suitably transformed parameter space, we can show that
\begin{equation}
\label{eq:block_moving_condition_epsilon}
    E(\mathcal{A}\uplus\mathcal{B}(\delta_{-})) \leq \varepsilon ~\land~
    E(\mathcal{A}\uplus\mathcal{B}(\delta_{+})) \leq \varepsilon
    \;\Rightarrow\;
    E(\mathcal{A}\uplus\mathcal{B}) \leq \varepsilon,
\end{equation}
and $\varepsilon = \max\left(E(\mathcal{A}\uplus\mathcal{B}(\delta_{-})),\, E(\mathcal{A}\uplus\mathcal{B}(\delta_{+}))\right)$ yields the claim.
\end{proof}

\begin{figure}[t]
    \centering
    \includegraphics[width=\columnwidth]{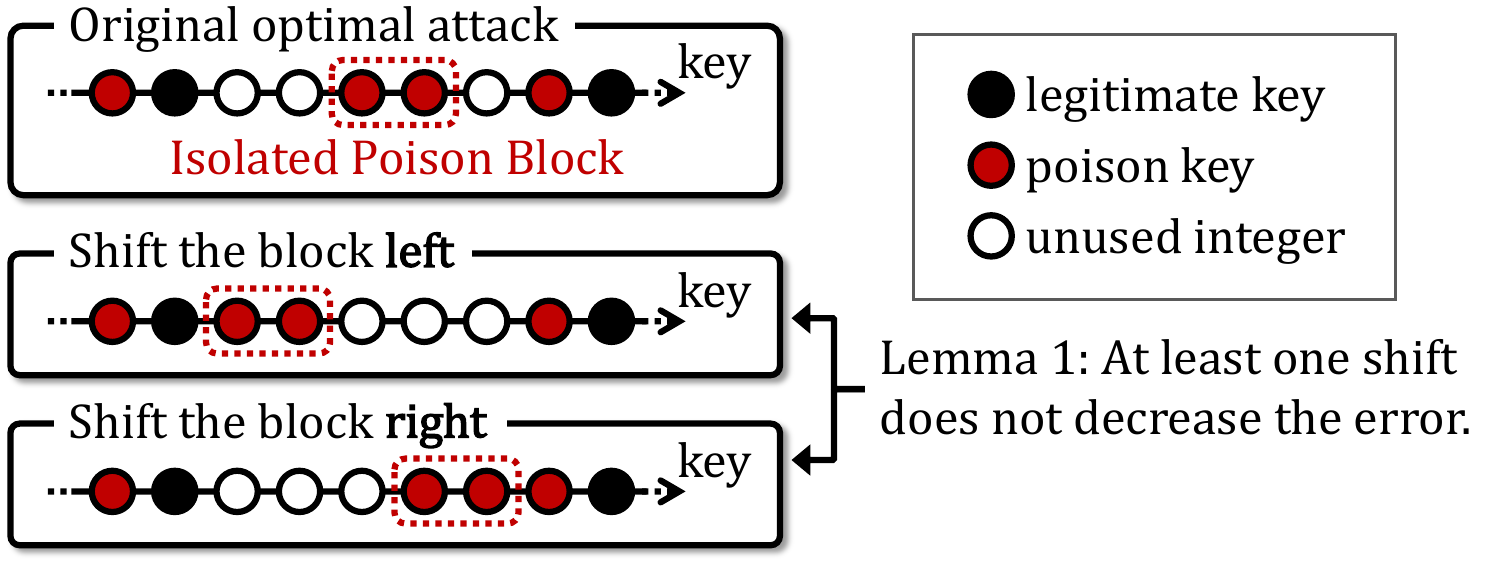}
    \caption{Illustration of the proof of \cref{thm:maxerror_attack_structure}.
    By \cref{lem:block_moving}, shifting an isolated poison block reduces the number of isolated blocks without decreasing the maximum error.}
    \label{fig:proof_maxerror_theorem}
\end{figure}

Using \cref{lem:block_moving}, we can prove \cref{thm:maxerror_attack_structure}.
\begin{proof}[Proof Sketch of \cref{thm:maxerror_attack_structure}]
Consider an optimal solution with the fewest isolated poison blocks.
By \cref{lem:block_moving}, any isolated block can be shifted to a legitimate key or another poison block without decreasing the error (see \cref{fig:proof_maxerror_theorem}).
This contradicts the minimality of the optimal solution, stating that such an optimal solution has no isolated poison blocks.
\end{proof}

\cref{thm:maxerror_attack_structure} implies that, to find an optimal solution, it suffices to consider poison sets in which every poison key is adjacent to a legitimate key, either directly or via a chain of adjacent poison keys.
The number of such candidates is at most $\binom{2n-3+\lambda}{\lambda}$; without this structural result, the number of possible poison sets is roughly $\binom{k_n-k_1}{\lambda}$.
Thus, \cref{thm:maxerror_attack_structure} reduces the search space from depending on the key-universe size to depending on $n$, substantially reducing the cost of enumeration.

\textbf{Corollary: Single-Point Attack.}
From \cref{thm:maxerror_attack_structure}, we obtain the following corollary for the single-point attack:

\begin{theorem_box}
\begin{corollary}
\label{cor:single_point_attack}
For the single-point attack ($\lambda = 1$), there exists an optimal solution where the poison is directly adjacent to a legitimate key.
Formally, there exists an optimal $\mathcal{P}^\ast$ satisfying
\begin{equation}
    p^\ast \in \{k + 1 \mid k \in \mathcal{K}\} \cup \{k - 1 \mid k \in \mathcal{K}\}.
\end{equation}
\end{corollary}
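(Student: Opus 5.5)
The plan is to specialize \cref{thm:maxerror_attack_structure} to $\lambda = 1$ and then handle the degenerate case of an empty poison set. By \cref{thm:maxerror_attack_structure}, some optimal solution $\mathcal{P}^\ast$ satisfies the stated adjacency condition, and since $\lambda = 1$ we have $|\mathcal{P}^\ast| \leq 1$.

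\textbf{Case $|\mathcal{P}^\ast| = 1$.} Write $\mathcal{P}^\ast = \{p^\ast\}$. The condition gives a legitimate key $k \in \mathcal{K}$ such that every integer strictly between $p^\ast$ and $k$ lies in $\mathcal{P}^\ast$. No such integer can equal $p^\ast$, so the open integer interval between $p^\ast$ and $k$ must be empty. Since $p^\ast \notin \mathcal{K}$, we have $p^\ast \neq k$, and hence $|p^\ast - k| = 1$. Therefore $p^\ast \in \{k+1 \mid k\in\mathcal{K}\}\cup\{k-1 \mid k\in\mathcal{K}\}$, as claimed.

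\textbf{Case $\mathcal{P}^\ast = \emptyset$.} This case needs separate care, because the corollary asserts the existence of an actual poison $p^\ast$. We may assume the candidate set $\{k_1,\dots,k_n\}\setminus\mathcal{K}$ is nonempty; otherwise the statement is vacuous. Pick any candidate $q$, so that $\{q\}$ is a feasible single-point poison set. Adding a point to a multiset cannot decrease $E$, even though the ranks of later keys shift. To see this, suppose a line $(w,b)$ achieves error $\varepsilon$ on $\mathcal{K}\cup\{q\}$. The keys of $\mathcal{K}$ below $q$ keep their ranks, and those above have ranks increased by exactly one. Because $q$'s own rank is sandwiched between the ranks of its neighbours, the monotonicity of the line in $x$ should let one construct a fit on $\mathcal{K}$ with error at most $\varepsilon$. \emph{This is the main obstacle and is not immediate}, so I would avoid it where possible.

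The alternative is to apply \cref{thm:maxerror_attack_structure} directly to a maximizer restricted to $|\mathcal{P}|=1$. The theorem's proof only shifts blocks and never changes $|\mathcal{P}|$, so the same argument produces an optimal single poison adjacent to a legitimate key among one-point sets. It then remains to compare against $\mathcal{P}=\emptyset$.

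For that comparison, take a candidate of the form $k_1+1$ or $k_n-1$ when one is available. If neither is available, take any gap key adjacent to some $k_i$, which exists whenever a candidate exists. I would then check that $E(\mathcal{K}) \le E(\mathcal{K}\cup\{p\})$ via the feasibility-set argument of \cref{lem:block_moving}. Concretely, view $p$ as a block that can be slid to coincide with a neighbour, where duplicates are allowed in \cref{lem:block_moving}, and compare the result with the unpoisoned configuration. Combining the two cases yields an optimal $\mathcal{P}^\ast$ whose element is adjacent to a legitimate key.
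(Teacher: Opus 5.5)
Your case $|\mathcal{P}^\ast|=1$ is exactly the paper's argument. The paper gives no separate proof; it states the corollary as a direct specialization of \cref{thm:maxerror_attack_structure}. Your observation that the integers strictly between $p^\ast$ and $k$ would have to lie in $\{p^\ast\}$, so there are none, is the whole content.

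The gap is your case $\mathcal{P}^\ast=\emptyset$, where you try to find a feasible singleton with $E(\mathcal{K}\cup\{p\})\ge E(\mathcal{K})$. That claim is false, so neither your monotonicity route nor your $k_1+1$/$k_n-1$ route can be completed. Take $\mathcal{K}=\{1,2,4\}$:
\begin{itemize}
\item For three points, the minimax error is half the vertical distance of the middle point from the chord through the outer two.
\item The chord through $(1,1)$ and $(4,3)$ takes the value $5/3$ at $x=2$, so $E(\mathcal{K})=\tfrac12\cdot\tfrac13=\tfrac16$.
\item The only feasible poison is $3=k_n-1$. The set $\mathcal{K}\cup\{3\}=\{1,2,3,4\}$ is fit exactly by $y=x$, so $E(\mathcal{K}\cup\{3\})=0$.
\end{itemize}
So inserting a key can strictly decrease $E$, because the rank shift can straighten the CDF. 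Here the unique optimum for $\lambda=1$ is $\emptyset$. \Cref{lem:block_moving} cannot supply the comparison either: it only relates configurations with the same number of points and fixed ranks. Sliding $p$ onto a neighbour produces $\mathcal{K}\uplus\{k_i\}$, not $\mathcal{K}$.

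So the empty case cannot be eliminated; it has to be settled by how the statement is read. There are two readings:
\begin{itemize}
\item The corollary inherits the vacuous reading of \cref{thm:maxerror_attack_structure}: if an optimal set contains a poison, that poison is adjacent to a legitimate key. Then your first case is the entire proof.
\item The single-point attack means maximizing $E(\mathcal{K}\cup\{p\})$ over single poisons $p$. This is how the paper actually uses it, in the $\mathcal{O}(n^2)$ scan over adjacent integers and in \textsc{Greedy}. For this reading your own side remark is the right proof: the block-shifting argument behind \cref{thm:maxerror_attack_structure} never changes $|\mathcal{P}|$, so applying it to a maximizer over one-point sets yields an adjacent maximizer directly.
\end{itemize}
In either reading, drop the comparison with $\emptyset$.
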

\end{theorem_box}

By this corollary, we can find an optimal single-point attack by checking all integers adjacent to legitimate keys.
Since there are $\mathcal{O}(n)$ such candidates and each evaluation takes $\mathcal{O}(n)$ time, we can compute the optimal single-point poison in $\mathcal{O}(n^2)$ time.

Note, however, that greedily repeating the optimal single-point attack does not necessarily yield a globally optimal solution.
For example, in \cref{fig:maxerror_poisoning}, the optimal two-point attack is $\mathcal{P}=\{33,34\}$ (see \cref{fig:maxerror_poisoning_c}).
Neither $\mathcal{P}=\{33\}$ nor $\mathcal{P}=\{34\}$ is an optimal single-point attack, so the greedy repetition of optimal single-point attacks fails to recover the optimal two-point attack.
Note also that our theorems assert existence of an optimal adjacent solution, not that every optimum is adjacent; e.g., in \cref{fig:maxerror_poisoning}, $\{19\}$ is also optimal for the single-point attack.

\textbf{Duplicate-Allowed Setting.}
We now present a structural theorem for the duplicate-allowed setting.
\begin{theorem_box}
\begin{theorem}[Structure of an Optimal Attack in the Duplicate-Allowed Setting]
\label{thm:duplicate_allowed_structure}
In the duplicate-allowed setting, every optimal solution $\mathcal{P}^\ast$ to the maximum-error poisoning problem uses the full budget, i.e., $|\mathcal{P}^\ast|=\lambda$.
Moreover, there exists an optimal solution $\mathcal{P}^\ast$ such that all poison keys take the same value, and this value is a legitimate key in $\mathcal{K}$.
\end{theorem}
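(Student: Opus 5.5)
The plan is to treat the poison configuration as a vector of multiplicities and exploit convexity of $E$ in that vector. This handles the "same value" part. The "legitimate key" part then follows from \cref{lem:block_moving}, and full-budget usage from a separate strict-increase argument.

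\textbf{Step 1 (convexity in multiplicities).} Let $U=\{k_1,\dots,k_n\}$ be the integer range. In the duplicate-allowed setting, encode a poison multiset by $c=(c_u)_{u\in U}\in\mathbb{Z}_{\ge 0}^{U}$, and relax to $c\in\mathbb{R}_{\ge 0}^{U}$. Adopt the convention that poison copies at value $u$ precede the legitimate key at $u$; since duplicates do not change $E$ under reordering of equal values, this convention is harmless. Under it, the legitimate key $k_j$ has rank $j+\sum_{u\le k_j} c_u$. A poison copy at $u$ has a rank that is also affine in $c$ (the copies at $u$ fill a vertical segment whose endpoints are affine in $c$).

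For fixed $(w,b)$, every residual is therefore affine in $c$. Hence the maximum absolute residual is jointly convex in $(c,w,b)$. For the poison points it suffices to take the two endpoints of each vertical segment, since the residual is affine along the segment. Minimizing a jointly convex function over $(w,b)$ preserves convexity, so $c\mapsto E(c)$ is convex. A convex function on the polytope $\{c\ge 0,\ \sum_u c_u\le\lambda\}$ attains its maximum at a vertex. The vertices are $0$ and $\lambda e_u$, and they are integral. Hence some optimal solution places all $\lambda$ poisons at a single integer $p$; the vertex $0$ is excluded by Step 3.

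\textbf{Step 2 (moving the common value onto a legitimate key).} Suppose $k_j<p<k_{j+1}$. Apply \cref{lem:block_moving} with $\mathcal{B}$ the $\lambda$ copies at $p$, which form a contiguous block $l\le i\le r$ with $1<l$ and $r<N$. The admissible shifts are $\delta_-=k_j-p$ and $\delta_+=k_{j+1}-p$. The lemma says one of the two moved configurations has error at least $E$. Either one is again "all $\lambda$ copies at one legitimate key", so optimality is preserved.

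\textbf{Step 3 (every optimum uses the full budget).} This is the main obstacle, because it requires a strict increase and not merely monotonicity. I would show that for any multiset $\mathcal{X}$ with at least three distinct values, some added duplicate strictly increases $E$. By Chebyshev alternation for the 3-variable LP, $E(\mathcal{X})$ equals the three-point error $E(\{x_i,x_j,x_k\})$ of some alternating triple $x_i<x_j<x_k$. This error is half the vertical distance of $(x_j,r_j)$ from the chord through $(x_i,r_i)$ and $(x_k,r_k)$. Moreover, $E$ of any superset is at least $E$ of the corresponding triple with its updated ranks, since restricting to a subset only lowers the optimum.

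There are two cases, according to the position of the middle point.
\begin{itemize}
\item If the middle point lies above the chord, add a copy of $x_j$. This raises $r_j$ and $r_k$ by $1$ and leaves $r_i$ fixed. The chord value at $x_j$ then rises only by $(x_j-x_i)/(x_k-x_i)<1$, so the gap strictly grows.
\item If the middle point lies below the chord, add a copy of $x_k$. This raises $r_k$ alone, lifting the chord at $x_j$ by a positive amount and widening the gap.
\end{itemize}
Degenerate cases, where fewer than three distinct values occur or $E=0$, are handled directly: with $\lambda\ge 1$ and $n\ge 3$ the duplicate step still creates a positive error. Note that $x_j,x_k$ can be chosen among legitimate keys or poison values, and all of these lie in $U$.

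Combining the steps: any solution with $|\mathcal{P}|<\lambda$ can be strictly improved, so every optimum has $|\mathcal{P}^\ast|=\lambda$. Steps 1 and 2 then give an optimal solution consisting of $\lambda$ copies of one legitimate key. I expect the careful verification of Step 3, namely the alternation characterization with ties and the choice of which key to duplicate, to take the most work.
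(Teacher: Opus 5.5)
Your Step 3 is essentially the paper's Step 1. Your Step 1 is a genuinely different route to concentration than the paper's Steps 2--3, but as justified it has a real gap: the family of residuals is not fixed. A non-legitimate integer $u$ carries data points only when $c_u\ge 1$, and this breaks the argument under either reading.
\begin{itemize}
\item If you take residuals at $u$ only when $c_u>0$, the index set of the maximum depends on $\mathrm{supp}(c)$. Joint convexity in $(c,w,b)$ then fails.
\item If instead you take, for every $u\in U$, the two absolute residuals at the affine endpoints $R_{<u}(c)+1$ and $R_{<u}(c)+c_u$ (with $R_{<u}(c)$ the number of points below $u$), then at $c_u=0$ they impose $f(u)\in[R_{<u}+1-\varepsilon,\,R_{<u}+\varepsilon]$. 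The real problem has no such constraint. For $\mathcal{K}=\{1,L\}$ and $c=0$ this forces error at least $1/2$, whereas $E=0$. The same spurious constraint appears at a legitimate key with $c_u=0$ if you treat its empty poison segment separately from the key.
\end{itemize}
In both readings, the convex function whose maximum you place at a vertex is not $E$, neither at the given configuration nor at the vertices $\lambda e_u$. So the vertex conclusion does not transfer to $E$.

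There are two ways to repair this.

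(i) For the whole run at each $u\in U$, keep only the two outer one-sided constraints $f(u)\le R_{<u}(c)+1+\varepsilon$ and $f(u)\ge R_{<u}(c)+c_u+[u\in\mathcal{K}]-\varepsilon$. Also restrict to $w\ge 0$. This loses nothing: a line with $w<0$ satisfies $N-\varepsilon\le f(x_N)<f(x_1)\le 1+\varepsilon$, so the flat line at $(N+1)/2$ does at least as well. For occupied $u$ these are exactly the run constraints. For empty $u$ they follow from the nearest occupied values on either side, since $f$ is nondecreasing. All of them are affine in $(c,w,b)$, so $E$ is genuinely convex on $\mathbb{R}^U_{\ge 0}$, and your Steps 1--2 then go through in your order.

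(ii) Alternatively, reverse the order. First apply \cref{lem:block_moving} to the copies at each non-legitimate value, merging them into an adjacent occupied value; this is the paper's Step 2. Then run your convexity argument on $\mathbb{R}^{\mathcal{K}}_{\ge 0}$. There the run at $k_j$ has rank endpoints $a_j(c)=j+\sum_{l<j}c_l$ and $a_j(c)+c_j$, which is a fixed family of $2n$ affine constraints.

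Once repaired, your argument subsumes the paper's Step 3. The paper's averaged line and Properties 1--2 amount to checking convexity along a single edge from $\bm{d}_-$ to $\bm{d}_+$, and that merge must be iterated. You get the vertex conclusion over the whole simplex at once. With (i) you also need \cref{lem:block_moving} only once, for your final move onto a legitimate key.

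A smaller point in Step 3: for multisets, $E(\mathcal{X})$ need not equal the error of any triple with distinct abscissae. The binding certificate can be the top and bottom of a single run. For example, take $\mathcal{K}=\{1,\dots,9\}$ with nine poison copies at $5$. The run at $5$ forces $E=4.5$, while every triple with distinct abscissae has error at most $3.6$. This case is not among your listed degenerate ones, but it is easy: duplicating that value raises the run's half-length from $(L-1)/2$ to $L/2$. LP duality for the three-variable problem shows that this case and a strictly alternating triple $x_i<x_j<x_k$ are the only possible certificates, so with it your case split is complete. The paper's Step 1 leaves the same point implicit. Your choice of which point to duplicate is valid: the paper duplicates an endpoint of the triple in both sign patterns, while you duplicate the middle point when it lies above the chord, and both give a strict increase.
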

\end{theorem_box}
\begin{proof}[Proof Sketch of \cref{thm:duplicate_allowed_structure}]
Adding a poison key strictly increases the maximum error, so every optimal solution uses the full budget.
By \cref{lem:block_moving}, all poison keys can be moved onto legitimate keys.
Finally, poison mass at distinct legitimate keys can be merged at one of them without decreasing the error, yielding an optimal solution in which all poison keys equal a single legitimate key.
\end{proof}
This theorem suggests that concentrating poisons at a single location is a promising strategy.
Motivated by this theorem, in the next section we define a class of poison sets in which poisons are concentrated within a single contiguous region, and we propose an efficient algorithm for finding optimal solutions within this class.

This result provides a clear contrast to prior observations for the MSE setting.
Prior work conjectured that, in the duplicate-allowed MSE setting, an optimal attack can always be realized by concentrating poisons on at most three legitimate keys: the two extremes ($k_1$ and $k_n$) and one internal key~\cite[Conj.~1]{sato2026foundations}.
By contrast, our theorem shows that under the maximum-error objective, there always exists an optimal solution that concentrates all poisons on a single legitimate key.
This highlights a structural difference between the MSE and maximum-error objectives.

\subsection{Methods for Obtaining Poison Solutions}

Based on the theoretical results above, we propose four poisoning methods: three for the setting without duplicate keys (\textsc{Optimal}, \textsc{Greedy}, and \textsc{Consecutive}) and one for the setting that allows duplicates (\textsc{Duplicate-Opt}).
The main takeaway is that the \textsc{Consecutive} method offers the best overall trade-off: it is faster than the \textsc{Greedy} method and, as we show later in \cref{sec:experiment_poisoning_max_error}, typically attains the global optimum in the vast majority of cases.

\textbf{Optimal Method.}
By \cref{thm:maxerror_attack_structure}, we can restrict the candidate optimal solutions to
$\mathcal{O}\left(\binom{2n-3+\lambda}{\lambda}\right)$ possibilities.
Since evaluating each candidate takes $\mathcal{O}(n+\lambda)$ time, the \textsc{Optimal} method yields an exact optimal solution in $\mathcal{O}\left(\binom{2n-3+\lambda}{\lambda}(n+\lambda)\right)$ time.

\textbf{Greedy Method.}
The \textsc{Greedy} method repeatedly applies the optimal single-point attack.
At each iteration, we compute the optimal single poison for the current set $\mathcal{K}\cup\mathcal{P}$ and add it to $\mathcal{P}$.

The total time complexity is $\mathcal{O}(\lambda (n+\lambda)^2)$.
In each iteration, there are $\mathcal{O}(n+\lambda)$ candidates for the optimal single-point attack, and evaluating each candidate takes $\mathcal{O}(n+\lambda)$ time.
This process is repeated $\mathcal{O}(\lambda)$ times.

\textbf{Consecutive Method.}
Motivated by \cref{thm:duplicate_allowed_structure}, we consider solutions restricted to \emph{consecutive poisons}, defined as follows.
\begin{definition_box}
\begin{definition}
\label{def:consecutive_poisons}
A poison set $\mathcal{P}$ is called \emph{consecutive} if there exist integers $l,r$ with $k_1 < l \le r < k_n$ such that
\begin{equation}
    \mathcal{P} = \{l,l+1,\dots,r\}\setminus\mathcal{K}
    \quad \land \quad
    |\mathcal{P}|=\lambda.
\end{equation}
\end{definition}
\end{definition_box}

The \textsc{Consecutive} method searches for an optimal solution under this restriction.
The following theorem allows us to further reduce the candidate space.

\begin{theorem_box}
\begin{theorem}
\label{thm:consecutive_structure}
There exists an optimal consecutive solution whose left or right boundary is adjacent to a legitimate key.
Formally, there exists an optimal consecutive poison set $\mathcal{P}_{\mathrm{con}}^\ast$ such that
\begin{equation}
    \min(\mathcal{P}_{\mathrm{con}}^\ast)-1 \in \mathcal{K}
    \;\lor\;
    \max(\mathcal{P}_{\mathrm{con}}^\ast)+1 \in \mathcal{K}.
\end{equation}
\end{theorem}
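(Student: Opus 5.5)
Take an optimal consecutive poison set $\mathcal{P}_{\mathrm{con}}^\ast = \{l,\dots,r\}\setminus\mathcal{K}$ with $|\mathcal{P}_{\mathrm{con}}^\ast|=\lambda$. We may assume the range is tight, so $l,r\notin\mathcal{K}$ and $l=\min\mathcal{P}_{\mathrm{con}}^\ast$, $r=\max\mathcal{P}_{\mathrm{con}}^\ast$. Among all optimal consecutive sets, choose one minimizing the gap $g \coloneq \min\bigl(l-1-\kappa_-,\; \kappa_+-(r+1)\bigr)$. Here $\kappa_- \coloneq \max\{k\in\mathcal{K}\mid k<l\}$ and $\kappa_+ \coloneq \min\{k\in\mathcal{K}\mid k>r\}$ are the legitimate keys just outside the window; they exist because $k_1<l\le r<k_n$. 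If $g=0$ we are done, so assume $g>0$: both $l-1$ and $r+1$ are non-legitimate integers, and we derive a contradiction with minimality.

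The idea is to view the consecutive set as a block and apply \cref{lem:block_moving}. This cannot be done directly, because the window may contain legitimate keys, which must not move. Instead, I would use the following \emph{sliding} operations, which preserve both consecutiveness and $|\mathcal{P}|=\lambda$. Sliding right by one deletes the poison $l$ and adds $r+1$; sliding left deletes $r$ and adds $l-1$. More generally, sliding right by $t$ steps removes the smallest $t$ poisons and appends the next $t$ non-legitimate integers after $r$, and sliding left is symmetric.

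Both operations leave every legitimate key and every interior poison in place. In sorted order, only one point changes position: the extreme poison jumps past the stretch of elements between them. The key reformulation is that sliding right by one is equivalent to moving the sub-block $\{l\}\cup(\text{everything in }(l,r])$ as a whole by $+1$, up to relabeling the identical points. This holds provided the window contains no legitimate key. If it does contain legitimate keys, shifting the entire segment $[l,r]$ (legitimate keys included) is not an admissible move.

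To handle this, I would therefore apply \cref{lem:block_moving} only to the block of poisons between $l$ and the first legitimate key inside the window, or to the whole window when it contains no legitimate key. Take $\mathcal{B}$ to be the leftmost maximal run of consecutive poisons $\{l,\dots,l+s\}$, together with the block shift by $\delta\in[x_{l-1}-x_l,\,x_{r+1}-x_r]$ in the lemma's indices. The left boundary is $\kappa_-$ (distance $l-\kappa_->1$), so the lemma permits a shift of $-(l-1-\kappa_-)$, which makes the block adjacent to $\kappa_-$. The problem is the right side: the neighbour of this run is a legitimate key at distance $1$, so the positive shift is trivial. To get around this, I would argue with both end runs simultaneously, using the $\varepsilon$-feasibility form \eqref{eq:block_moving_condition_epsilon}. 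The feasible $(w,b)$-region of the whole configuration is an intersection of strips, and the lemma's proof shows that the set of shift values $\delta$ with error $\le\varepsilon$ is ``convex'' (an interval). Applying this along the one-parameter family of slides $t\mapsto$ (window slid by $t$) shows that the error, restricted to slides with $t\in[-(l-1-\kappa_-),\,\kappa_+-r-1]$, is quasi-convex in $t$. Its maximum over the interval is therefore attained at an endpoint, and at either endpoint the window touches $\kappa_-$ or $\kappa_+$, contradicting $g>0$.

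The main obstacle is justifying this quasi-convexity. A slide is not literally one rigid block shift when legitimate keys lie inside the window, since the poison that leaves on one side reappears on the other. I would try to write the slide by $t$ as a simultaneous shift in which each poison moves to the next free integer. In the ``transformed parameter space'' of the lemma (the rank-shift viewed as translating the $x$-coordinates of a subset with fixed ranks), this is still a linear family of constraints in $t$, so the same convexity argument applies. If that fails, the fallback is to prove the theorem under the reduction that each slide step is a rigid shift of a single extreme point across a poison-free gap, and then apply \cref{lem:block_moving} repeatedly.
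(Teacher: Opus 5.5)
Your treatment of the case where the window $[l,r]$ contains no legitimate key is fine. It matches the paper's one-line argument: shift the block with \cref{lem:block_moving} until it touches a legitimate key. The gap is the other case, where legitimate keys lie inside $[l,r]$. There you rest everything on quasi-convexity of the error over the full slide range $t\in[-(l-1-\kappa_-),\,\kappa_+-r-1]$, and the justification you sketch does not work. Once a slide has consumed an entire end run of poisons, the legitimate key that bounded that run drops out of the translated block and joins the fixed part. Along the full range, the configurations are therefore not translates of one fixed convex feasible region. They are a concatenation of rigid shifts of \emph{different} blocks, and quasi-convexity on each piece does not give quasi-convexity on the union. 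Your fallback also fails: moving a single extreme poison produces a non-consecutive set, so the alternative the lemma guarantees may lie outside the class over which $\mathcal{P}^\ast_{\mathrm{con}}$ is optimal.

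The missing observation is that your claim that shifting $[l,r]$ together with its legitimate keys is inadmissible is false for small shifts. The error $E$ depends only on the multiset $\mathcal{K}\cup\mathcal{P}$, and every integer in $[l,r]$ is either legitimate or a poison. Hence $\mathcal{K}\cup\mathcal{P}^\ast_{\mathrm{con}}=\mathcal{A}\uplus\{l,\dots,r\}$, where $\mathcal{A}$ is the set of legitimate keys outside $[l,r]$. Let $L=\{l,\dots,a\}$ and $R=\{b,\dots,r\}$ be the two end runs of poisons, so $a+1,\,b-1\in\mathcal{K}$.

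For $0\le s\le\min(\kappa_+-1-r,\,|L|)$, sliding right by $s$ gives exactly $\mathcal{A}\uplus\{l+s,\dots,r+s\}$, because no legitimate key enters or leaves the window. Sliding left by $s\le\min(l-1-\kappa_-,\,|R|)$ behaves symmetrically. So \cref{lem:block_moving} applies directly to the block $\{l,\dots,r\}$, legitimate keys included, with $\delta_+=\min(\kappa_+-1-r,|L|)$ and $\delta_-=-\min(l-1-\kappa_-,|R|)$.

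At $\delta_+$ there are two possibilities:
\begin{itemize}
\item the new maximum is $\kappa_+-1$, which is adjacent to $\kappa_+$;
\item $L$ has been used up, and the new minimum poison $z$ (the least element of $\mathcal{P}^\ast_{\mathrm{con}}\setminus L$) satisfies $z-1\in\mathcal{K}$, since every integer in $[a+1,z-1]$ lies in the old window and is not a poison.
\end{itemize}
The case $\delta_-$ is symmetric. Both endpoint sets are consecutive of size $\lambda$, so optimality of $\mathcal{P}^\ast_{\mathrm{con}}$ forces one of them to be optimal, and that one has the required adjacency. This closes the proof with a single application of the lemma. It needs neither your extremal choice of $g$ nor quasi-convexity over the whole range, and it is the rigorous content behind the paper's one-line proof.
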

\end{theorem_box}

\begin{proof}[Proof of \cref{thm:consecutive_structure}]
By \cref{lem:block_moving}, any consecutive poison block not adjacent to a legitimate key can be shifted without decreasing the maximum error until it becomes adjacent to one, yielding the claim.
\end{proof}

Using \cref{thm:consecutive_structure}, \textsc{Consecutive} runs in $\mathcal{O}(n(n+\lambda))$ time;
there are $\mathcal{O}(n)$ candidate intervals to examine, and evaluating each candidate takes $\mathcal{O}(n+\lambda)$ time.

\textbf{Duplicate-Opt Method.}
The \textsc{Duplicate-Opt} method computes the optimal solution in the duplicate-allowed setting.
By \cref{thm:duplicate_allowed_structure}, it suffices to try $n$ candidates, one for each $k\in\mathcal{K}$, where all $\lambda$ poisons are placed at $k$.
Evaluating each candidate takes $\mathcal{O}(n+\lambda)$ time.
Hence, the total time complexity is $\mathcal{O}(n(n+\lambda))$, and this yields the optimal solution in the duplicate-allowed setting.

\section{Covered-Legitimate-Keys Minimization}
\label{sec:poisoning_num_coverable_keys}

This section bridges the attack studied in the previous section, namely maximizing maximum error for a fixed set of covered keys, to the problem that is directly relevant to the PGM-index.

\subsection{Problem Setting}

We consider the following setting.
Given a maximum allowed error parameter $\varepsilon$, a linear regression model is built so that it covers as many keys as possible while keeping the maximum error at most $\varepsilon$.
\begin{definition_box}
\begin{definition}[\textbf{Linear Regression on CDFs (Number of Covered Keys)}]
\label{def:linear_regression_on_cdfs_number_of_covered_keys}
Let $\mathcal{X} = \{x_1, x_2, \dots, x_N\}$ be a multiset of natural numbers with $x_1 \leq x_2 \leq \dots \leq x_N$.
For each $i \in [N]$, define the rank of $x_i$ as $r_i \coloneq i$.
Given $\varepsilon \geq 0$, the maximum number of covered keys of linear regression on CDFs for $\mathcal{X}$ under the maximum allowed error parameter $\varepsilon$ is defined by
\begin{equation}
\label{eq:linear_regression_on_cdfs_number_of_covered_keys}
    S_{\varepsilon}(\mathcal{X})
    \coloneq
    \max_{w,b}
    \max\left\{
        i \in [N]
        \;\middle|\;
        \max_{j \in [i]} |wx_j + b - r_j| \le \varepsilon
    \right\}.
\end{equation}
\end{definition}
\end{definition_box}
This problem can be solved in $\mathcal{O}(S_{\varepsilon}(\mathcal{X}))$ time using the algorithm of~\cite{o1981line}.
The PGM-index construction repeats this procedure to obtain a PLA with as few segments as possible.

We next define the poisoning variant of CDF linear regression under the number-of-covered-keys objective.
\begin{definition_box}
\begin{definition}[\textbf{Covered-Legitimate-Keys Minimization Problem}]
\label{def:poisoning_linear_regression_on_cdfs_number_of_covered_keys}
Let $\mathcal{K} = \{k_1, k_2, \dots, k_n\} \subset \mathbb{N}$ be $n$ distinct legitimate keys such that
$k_1 < k_2 < \dots < k_n$, let $\varepsilon \ge 0$, and let $\lambda \in \mathbb{N}$.
For a poison set $\mathcal{P} \subseteq \{k_1, k_1+1, \dots, k_n\} \setminus \mathcal{K}$ with $|\mathcal{P}| \le \lambda$, let $\mathcal{X} \coloneq \mathcal{K} \cup \mathcal{P} = \{x_1, x_2, \dots, x_{N}\}$ be the sorted key set.
The \emph{number of covered legitimate keys} is defined as
\begin{equation}
\label{eq:poisoning_linear_regression_on_cdfs_number_of_covered_keys}
    C_{\varepsilon}(\mathcal{K}, \mathcal{P})
    \coloneq
    \left| \mathcal{K} \cap \{x_1, x_2, \dots, x_{S_{\varepsilon}(\mathcal{X})}\} \right|.
\end{equation}
The \emph{covered-legitimate-keys minimization problem} is to find a poison set minimizing the number of covered legitimate keys:
\begin{equation}
    \argmin_{\mathcal{P} ~ \mathrm{s.t.} ~ |\mathcal{P}| \leq \lambda,\;
    \mathcal{P} \subseteq \{k_1, k_1+1, \dots, k_n\} \setminus \mathcal{K}}
    C_{\varepsilon}(\mathcal{K}, \mathcal{P}).
\end{equation}
\end{definition}
\end{definition_box}
The key-rank plots in the \textbf{P2} block of \cref{fig:overview} illustrate this problem: before poisoning, a segment covers 7 legitimate keys, but after inserting a single poison key, it covers only 5.

Since, on the model-construction side, one cannot distinguish legitimate keys from poisoned keys, it fits a model that maximizes $S_{\varepsilon}(\mathcal{K} \cup \mathcal{P})$.
The attacker, however, chooses $\mathcal{P}$ to minimize $C_{\varepsilon}(\mathcal{K}, \mathcal{P})$.
This objective captures the attacker's goal of minimizing the \emph{effective coverage size} of a segment, i.e., the number of \emph{legitimate} keys covered.
Repeatedly solving this problem increases the number of segments in the resulting PGM-index (detailed in \cref{sec:poisoning_m_opt}).

\subsection{Reducing Covered-Keys Minimization to Maximum-Error Maximization}
\label{sec:bridging_maxerror_and_coverable_keys}

We now relate the maximum-error maximization problem (\cref{def:poisoning_linear_regression_on_cdfs_max_error}) to the covered-legitimate-keys minimization problem (\cref{def:poisoning_linear_regression_on_cdfs_number_of_covered_keys}) through the following theorem.
\begin{theorem_box}
\begin{theorem}
\label{thm:problem_b_c_relationship}
Let $\mathcal{P}^\ast_{\mathrm{cov}}$ be an optimal solution to the covered-legitimate-keys minimization problem (\cref{def:poisoning_linear_regression_on_cdfs_number_of_covered_keys}),
and let $C^\ast$ denote the corresponding number of covered legitimate keys.
If $C^\ast < n$, let $\mathcal{P}^\ast_{\mathrm{err}}$ be an optimal solution to 
the maximum-error maximization problem (\cref{def:poisoning_linear_regression_on_cdfs_max_error}) for the key set $\mathcal{K}_{\leq C^\ast + 1} \coloneq \{k_1, k_2, \dots, k_{C^\ast + 1}\}$.
Then, $\mathcal{P}^\ast_{\mathrm{err}}$ is also an optimal solution to the covered-legitimate-keys minimization problem.
\end{theorem}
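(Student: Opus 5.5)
The plan is to characterize the event ``$C_{\varepsilon}(\mathcal{K},\mathcal{P}) \le c$'' purely in terms of the maximum error of a prefix of the poisoned key set. The reduction to \cref{def:poisoning_linear_regression_on_cdfs_max_error} on $\mathcal{K}_{\le C^\ast+1}$ should then be almost immediate.

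\textbf{Step 1 (monotonicity of prefix errors).} For a sorted multiset $\mathcal{X}=\{x_1,\dots,x_N\}$, write $\mathcal{X}_{[i]} \coloneq \{x_1,\dots,x_i\}$. The ranks in $\mathcal{X}_{[i]}$ coincide with those in $\mathcal{X}$. Hence, by \cref{def:linear_regression_on_cdfs_number_of_covered_keys}, $S_{\varepsilon}(\mathcal{X}) \ge i$ holds if and only if $E(\mathcal{X}_{[i]}) \le \varepsilon$; here I use that the minimum in \cref{def:linear_regression_on_cdfs_max_error} is attained, since it is a linear program. Moreover, $E(\mathcal{X}_{[i]})$ is nondecreasing in $i$, because any line feasible for a longer prefix is feasible for a shorter one.

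\textbf{Step 2 (characterizing $C_\varepsilon \le c$).} Fix $c<n$ and a feasible $\mathcal{P}$. Let $t$ be the position of $k_{c+1}$ in $\mathcal{X}=\mathcal{K}\cup\mathcal{P}$. The number of legitimate keys among the first $S_\varepsilon(\mathcal{X})$ entries is at least $c+1$ exactly when $S_\varepsilon(\mathcal{X}) \ge t$. By Step 1, this gives
\begin{equation*}
C_{\varepsilon}(\mathcal{K},\mathcal{P}) \le c \iff E\bigl(\mathcal{K}_{\le c+1} \cup \mathcal{P}_{<k_{c+1}}\bigr) > \varepsilon,
\end{equation*}
where $\mathcal{P}_{<k_{c+1}} \coloneq \mathcal{P}\cap(k_1,k_{c+1})$. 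Here I use that poisons lie strictly above $k_1$ and that keys are integers with no duplicates, so $\mathcal{X}_{[t]}$ is exactly $\mathcal{K}_{\le c+1}\cup\mathcal{P}_{<k_{c+1}}$. Poisons above $k_{c+1}$ are irrelevant.

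\textbf{Step 3 (the reduction).} Apply Step 2 with $c=C^\ast$ to $\mathcal{P}^\ast_{\mathrm{cov}}$. The restricted set $\mathcal{P}' \coloneq (\mathcal{P}^\ast_{\mathrm{cov}})_{<k_{C^\ast+1}}$ is a feasible poison set for the maximum-error problem on $\mathcal{K}_{\le C^\ast+1}$: it has size at most $\lambda$ and lies in $\{k_1,\dots,k_{C^\ast+1}\}\setminus\mathcal{K}$. It also satisfies $E(\mathcal{K}_{\le C^\ast+1}\cup\mathcal{P}')>\varepsilon$. By optimality, $E(\mathcal{K}_{\le C^\ast+1}\cup\mathcal{P}^\ast_{\mathrm{err}}) \ge E(\mathcal{K}_{\le C^\ast+1}\cup\mathcal{P}') > \varepsilon$.

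Next, $\mathcal{P}^\ast_{\mathrm{err}} \subseteq (k_1,k_{C^\ast+1})\setminus\mathcal{K} \subseteq \{k_1,\dots,k_n\}\setminus\mathcal{K}$ with $|\mathcal{P}^\ast_{\mathrm{err}}|\le\lambda$, so it is feasible for the covered-keys problem on $\mathcal{K}$. Its restriction below $k_{C^\ast+1}$ is itself. Step 2 in the reverse direction then gives $C_\varepsilon(\mathcal{K},\mathcal{P}^\ast_{\mathrm{err}}) \le C^\ast$. Minimality of $C^\ast$ forces equality, so $\mathcal{P}^\ast_{\mathrm{err}}$ is optimal.

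The main obstacle I expect is Step 2: getting the indexing exactly right. The prefix cut must fall precisely at $k_{c+1}$, including the poisons strictly between $k_c$ and $k_{c+1}$. One must also argue that restricting the poison set does not change the ranks of that prefix. The assumption $C^\ast<n$ is exactly what makes $k_{C^\ast+1}$ exist, and it should be invoked there.
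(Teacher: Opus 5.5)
Your proof is correct and follows the same route as the paper's, which argues $E(\mathcal{K}_{\le C^\ast+1}\cup\mathcal{P}^\ast_{\mathrm{err}}) \ge E(\mathcal{K}_{\le C^\ast+1}\cup\mathcal{P}^\ast_{\mathrm{cov}}) > \varepsilon$ and then concludes $C_\varepsilon(\mathcal{K},\mathcal{P}^\ast_{\mathrm{err}}) \le C^\ast$. Your Step~2 equivalence and your restriction of $\mathcal{P}^\ast_{\mathrm{cov}}$ to the poisons below $k_{C^\ast+1}$ spell out what the paper's one-line proof leaves implicit; the restriction is what makes $\mathcal{P}^\ast_{\mathrm{cov}}$ actually feasible for the prefix problem, so that the comparison with $\mathcal{P}^\ast_{\mathrm{err}}$ is justified.
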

\end{theorem_box}

\begin{proof}[Proof of \cref{thm:problem_b_c_relationship}]
From the definition of $\mathcal{P}^\ast_{\mathrm{cov}}$ and $\mathcal{P}^\ast_{\mathrm{err}}$, we have
$E(\mathcal{K}_{\leq C^\ast + 1} \cup \mathcal{P}^\ast_{\mathrm{err}}) \geq E(\mathcal{K}_{\leq C^\ast + 1} \cup \mathcal{P}^\ast_{\mathrm{cov}}) > \varepsilon$.
Therefore, we have $C_{\varepsilon}(\mathcal{K},\mathcal{P}^\ast_{\mathrm{err}}) \leq C^\ast$; thus, $\mathcal{P}^\ast_{\mathrm{err}}$ is optimal for \cref{def:poisoning_linear_regression_on_cdfs_number_of_covered_keys}.
\end{proof}

This theorem shows that, to find an optimal solution for \cref{def:poisoning_linear_regression_on_cdfs_number_of_covered_keys}, it suffices to search for optimal poison sets for \cref{def:poisoning_linear_regression_on_cdfs_max_error} over prefixes of the legitimate keys.
Thus, we can leverage the insights from \cref{sec:poisoning_max_error} to design efficient poisoning methods.

\begin{figure}[t]
    \centering
    \begin{minipage}{0.48\columnwidth}
        \centering
        \includegraphics[width=\linewidth]{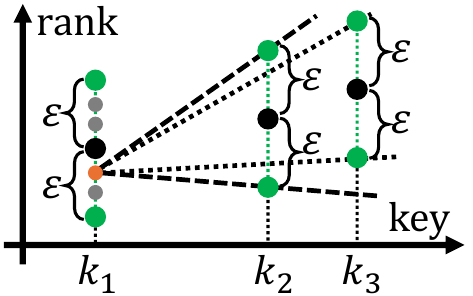}
        \caption{%
            Discrete-intercept linear regression.
            For each candidate intercept, extend the segment while feasible and select the longest one.
        }
        \label{fig:swing_extension}
    \end{minipage}
    \hfill
    \begin{minipage}{0.48\columnwidth}
        \centering
        \includegraphics[width=\linewidth]{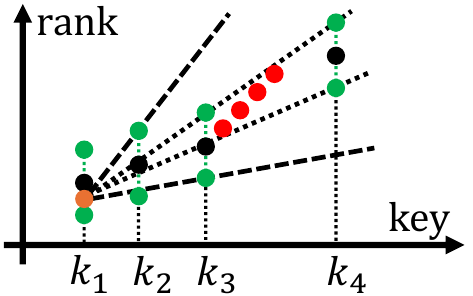}
        \caption{%
            \textsc{DI-Consecutive}.
            With precomputed feasible slope intervals, each poison set candidate can be evaluated in $\mathcal{O}(|\mathcal{I}|\log c)$ time.
        }
        \label{fig:swing_consec_algo}
    \end{minipage}
\end{figure}

\subsection{Methods for Obtaining Poison Solutions}
\label{sec:methods_for_obtaining_poison_solutions_num_coverable_keys}

Based on \cref{thm:problem_b_c_relationship}, we propose three methods for constructing poison sets.
Let $c \coloneq C_{\varepsilon}(\mathcal{K}, \emptyset)$ denote the number of covered legitimate keys without poisoning.
The main takeaway is that, while \textsc{Consecutive} is effective, it takes quadratic time in $c$; \textsc{DI-Consecutive} reduces this to $\mathcal{O}(c\log c)$ while achieving comparable performance in practice (as shown in \cref{sec:experiment_poisoning_num_coverable_keys}).

\textbf{Optimal Method.}
By \cref{thm:maxerror_attack_structure,thm:problem_b_c_relationship}, an optimal solution can be obtained by enumerating $\mathcal{O}\left(\binom{2c-3+\lambda}{\lambda}\right)$ candidates, as in \cref{sec:poisoning_max_error}.
Since each candidate is evaluated in $\mathcal{O}(c+\lambda)$ time, the total running time is
$\mathcal{O}\left(\binom{2c-3+\lambda}{\lambda}(c+\lambda)\right)$.

\textbf{Consecutive Method.}
Motivated by \textsc{Consecutive} proposed in \cref{sec:poisoning_max_error}, we enumerate all consecutive poison sets and select the one minimizing $C_{\varepsilon}(\mathcal{K}, \mathcal{P})$ .
By \cref{thm:consecutive_structure,thm:problem_b_c_relationship}, it suffices to consider $\mathcal{O}(c)$ candidates whose left or right boundary is adjacent to a legitimate key.
Evaluating each candidate takes $\mathcal{O}(c+\lambda)$ time, giving a total running time of $\mathcal{O}\left(c(c+\lambda)\right)$.

\textbf{Discrete-Intercept Consecutive Method (\textsc{DI-Consecutive}).}
Because $c$ can be large (indeed, prior work~\cite{ferragina2020why,liu2024learned} has shown that $c$ scales as $\Theta(\varepsilon^2)$), the quadratic-time \textsc{Consecutive} method can be impractical, especially when it must be applied repeatedly.
We therefore propose \textsc{DI-Consecutive}, which approximates $C_{\varepsilon}(\mathcal{K}, \mathcal{P})$ using an extension of SwingFilter~\cite{elmeleegy2009online}.
Whereas SwingFilter fixes the intercept at $k_1$ to $0$, our extension considers a predefined set $\mathcal{I}$ of candidate intercepts (\cref{fig:swing_extension}).
After precomputation, sparse tables evaluate each consecutive candidate in $\mathcal{O}(|\mathcal{I}|\log c)$ time (\cref{fig:swing_consec_algo}), making the total running time $\mathcal{O}(|\mathcal{I}|c\log c)$.
As we show later in \cref{sec:experiment_poisoning_num_coverable_keys}, approximately 40 intercept candidates suffice to closely match \textsc{Consecutive} in our experiments.
Further details are provided in Appendix~\ref{app:detail_discrete_intercept_consec}.

\section{Poisoning to Maximize $m_\mathrm{opt}$}
\label{sec:poisoning_m_opt}

In this section, we propose a poisoning attack that increases the number of segments in an optimal PLA over a CDF, which is used as the bottom-level PLA of the PGM-index.
We call our attack \emph{\textsc{PGM-attack}}.
\textsc{PGM-attack} repeatedly applies the attack from the previous section, which minimizes the number of covered keys, thereby increasing the number of segments in the optimal PLA.
Note that \textsc{PGM-attack} targets not the PGM-index specifically, but rather the PLA construction algorithm that underlies a wide range of learned indexes~\cite{galakatos2019fiting,kipf2020radixspline,liu2024learned,leying2026line}.

\subsection{Problem Setting}

We first formalize the optimal PLA on CDFs and define $m_{\mathrm{opt}}(\mathcal{X}, \varepsilon)$, i.e., the minimum number of segments.
\begin{definition_box}
\begin{definition}[\textbf{PLA on CDFs}]
\label{def:pla_construction}
Let $\mathcal{X} = \{x_1, x_2, \dots, x_N\}$ be a multiset of natural numbers such that $x_1 \le x_2 \le \dots \le x_N$.
For each $i \in [N]$, define the rank by $r_i \coloneq i$, and let $\varepsilon \ge 0$.

A \emph{segment} is a triple $(s,e,f)$ where $1 \le s \le e \le N$ and $f$ is a linear function such that $\max_{i \in \{s, \dots, e\}} |f(x_i) - r_i| \le \varepsilon$.

A \emph{PLA with $m$ segments} for $\mathcal{X}$ is a sequence of segments
$\bigl((s_1,e_1,f_1), \dots, (s_m,e_m,f_m)\bigr)$ such that $s_1 = 1$, $e_m = N$, and $s_{j+1} = e_j + 1$ for all $j \in [m-1]$.

The objective is to minimize the number of segments $m$.
The minimum number of segments is denoted by $m_{\mathrm{opt}}(\mathcal{X}, \varepsilon)$.
\end{definition}
\end{definition_box}

This problem can be solved in $\mathcal{O}(N)$ time using the algorithm of \cite{o1981line,ferragina2020pgm}.
We next define the poisoning problem of maximizing $m_{\mathrm{opt}}$ for the PGM-index.
\begin{definition_box}
\begin{definition}[\textbf{Segment Maximization Problem}]
\label{def:maximizing_m_opt}
Let $\mathcal{K} = \{k_1, k_2, \dots, k_{n}\} \subset \mathbb{N}$ be $n$ distinct legitimate keys such that $k_1 < k_2 < \dots < k_{n}$, let $\varepsilon \ge 0$, and let $\lambda \in \mathbb{N}$.

The \emph{segment maximization problem} is to find a poison set maximizing the number of segments:
\begin{equation}
    \argmax_{\mathcal{P} ~ \mathrm{s.t.} ~ |\mathcal{P}| \le \lambda,\;
    \mathcal{P} \subseteq \{k_1, k_1+1, \dots, k_n\} \setminus \mathcal{K}}
    m_{\mathrm{opt}}(\mathcal{K} \cup \mathcal{P}, \varepsilon).
\end{equation}
\end{definition}
\end{definition_box}
\textbf{Difficulty of the Problem.}
This problem is challenging for two main reasons.
\textbf{Downstream dependence:}
The placement of poison keys in the early part of the key space affects the entire sequence of subsequent segments.
Thus, the contribution of each poison key cannot be evaluated independently, which makes the problem highly non-local.
\textbf{Computational constraints:}
In typical deployments, the key set size $n$ is often on the order of $10^8$, and the poison budget $\lambda$ is often on the order of $10^6$. Moreover, depending on the choice of $\varepsilon$, the number of keys covered by a single segment can also be large, e.g., on the order of $10^4$ or more.
Let $c$ denote this average number.
Naive dynamic programming or direct use of heavy subroutines such as \textsc{Consecutive} (\cref{sec:methods_for_obtaining_poison_solutions_num_coverable_keys}) incurs a total computational cost of $\Omega(n\lambda c)$ and is therefore practically infeasible.

\subsection{\texorpdfstring{\textsc{PGM-attack}}{PGM-attack}}
\label{sec:method_for_obtaining_poison_solutions_m_opt}

\begin{algorithm}[t]
\caption{\textsc{PGM-attack}}
\label{alg:pgm_poisoning}
\begin{algorithmic}[1]
\Input Legitimate key set $\mathcal{K}$, total poison budget $\lambda$
\Output Poison key set $\mathcal{P}$
\State $n \gets |\mathcal{K}|$
\State $\theta_0 \gets \mathrm{getTheta0}(\mathcal{K})$ \Comment{initial value of $\theta$}
\State $s \gets 1$ \Comment{segment start (index in $\mathcal{K}$)}
\State $\mathcal{P} \gets \emptyset$
\While{$s \leq n$}
 \State $\theta \gets \mathrm{getTheta}(\theta_0, (s-1)/n, |\mathcal{P}|/\lambda)$ \Comment{\cref{eq:dynamic_theta}}
 \State $C^\ast \gets C_\varepsilon(\mathcal{K}[s,s+1,\dots], \emptyset)$ \Comment{\cref{eq:poisoning_linear_regression_on_cdfs_number_of_covered_keys}}
 \State $\lambda^\ast \gets 0, \; \mathcal{P}^\ast_\mathrm{seg} \gets \emptyset$ 
 \State $\mathcal{K}_\mathrm{seg} \gets \mathcal{K}[s,s+1, \dots,s+C^\ast-1]$
 \For{$\lambda_\mathrm{cand} \in \Lambda$}
 \If{$|\mathcal{P}| + \lambda_\mathrm{cand} > \lambda$}
    \State \textbf{continue}
 \EndIf
 \State $\mathcal{P}_\mathrm{cand} \gets \textsc{DI-Consecutive}(\mathcal{K}_\mathrm{seg}, \lambda_\mathrm{cand})$ \Comment{\cref{sec:methods_for_obtaining_poison_solutions_num_coverable_keys}}
 \State $C_\mathrm{cand} \gets C_\varepsilon(\mathcal{K}_\mathrm{seg}, \mathcal{P}_\mathrm{cand})$
 \If{$C_\mathrm{cand} + \theta \lambda_\mathrm{cand} < C^\ast + \theta \lambda^\ast$} 
 \State $\mathcal{P}^\ast_\mathrm{seg} \gets \mathcal{P}_\mathrm{cand}, \;\; C^\ast \gets C_\mathrm{cand}, \;\; \lambda^\ast \gets \lambda_\mathrm{cand}$
 \EndIf
 \EndFor
 \State $\mathcal{P} \gets \mathcal{P} \cup \mathcal{P}^\ast_\mathrm{seg}, \;\; s \gets s + C^\ast$
\EndWhile
\State \Return $\mathcal{P}$
\end{algorithmic}
\end{algorithm}

\textbf{Algorithm overview.}
To address downstream dependencies and computational constraints, we adopt a sequential heuristic that fixes segments from left to right.
At each step, the algorithm selects poison keys for the current segment (i.e., the segment starting from the $s$-th legitimate key), and then advances $s$ to the beginning of the next segment.
Specifically, we prepare a candidate set $\Lambda$ for the number of poisons to be inserted into the current segment.
For each $\lambda \in \Lambda$, we use \textsc{DI-Consecutive} (\cref{alg:di_consecutive}) to efficiently generate a poison set.
Among these candidates, we select the one that minimizes the \textit{cost}.
The full procedure is given in \cref{alg:pgm_poisoning}.

\textbf{Cost function.}
Let $C_\mathrm{cand}$ denote the number of legitimate keys covered after inserting $\lambda_\mathrm{cand}$ poison keys.
We define the cost as $C_\mathrm{cand} + \theta \lambda_\mathrm{cand}$, where $\theta$ is a trade-off parameter justified as follows.
Suppose that each poison key reduces the number of covered legitimate keys by $\hat{\theta}$ in expectation.
Assigning $\lambda_\mathrm{cand}$ poison keys to the current segment has two effects: it leaves $C_\mathrm{cand}$ legitimate keys covered by the current segment, while consuming $\lambda_\mathrm{cand}$ poison keys that could otherwise reduce the coverage of subsequent segments by $\hat{\theta}\lambda_\mathrm{cand}$ keys in expectation.
Thus, $C_\mathrm{cand} + \hat{\theta}\lambda_\mathrm{cand}$ represents the expected number of covered legitimate keys that this candidate fails to remove from coverage.
Minimizing it therefore minimizes the expected number of legitimate keys covered per segment, equivalently increasing the expected number of segments.

\textbf{Tuning $\theta$.}
Following this intuition, we empirically initialize $\theta$ as the average reduction in covered legitimate keys per poison.
We sample 100 random segments, construct poison sets for each $\lambda \in \Lambda$, and set $\theta_0$ to the average reduction per poison.

We then adapt $\theta$ through a multiplicative update based on the imbalance between the progress over legitimate keys and the consumption of the poison budget.
Formally, we define
\begin{equation}
\label{eq:dynamic_theta}
    \theta = \theta_0 e^{\mu (r_\lambda - r_n)}, 
    \quad 
    r_n \coloneq \frac{s - 1}{n}, 
    \quad 
    r_\lambda \coloneq \frac{|\mathcal{P}|}{\lambda}.
\end{equation}
The parameter $\mu > 0$ controls the strength of the correction.
When $r_\lambda > r_n$, the poison budget is consumed faster than the algorithm progresses through the legitimate keys, so $\theta$ increases and discourages larger values of $\lambda$.
Conversely, when $r_\lambda < r_n$, $\theta$ decreases and favors larger values of $\lambda$.
Thus, $\theta$ acts as an adaptive penalty coefficient that balances the two resources throughout the process.
Based on the results of the ablation analysis detailed in Appendix~\ref{app:ablation_study_mu}, we set $\mu = 100$ by default.

\textbf{Choice of $\Lambda$.}
To reduce computation, we restrict the number of poisons assigned to each segment to a small candidate set $\Lambda$.
As shown in \cref{sec:experiment_poisoning_num_coverable_keys}, the reduction in covered legitimate keys typically saturates around $\lambda = 2\varepsilon + 1$, beyond which additional poisons provide little or no benefit.
This is because, under a mild sparsity assumption on the keys, placing $2\varepsilon + 1$ poison keys adjacent to the first legitimate key is sufficient to reduce the number of covered legitimate keys to one.
Motivated by this observation, we set $\Lambda \coloneq \left\{ \left\lfloor t(2\varepsilon+1) / 9 \right\rfloor \,\middle|\, t=0,1,\dots,9 \right\}$.

\subsection{Instance-Agnostic Upper Bound on $m_\mathrm{opt}$}
\label{sec:m_opt_instance_agnostic_upper_bound}

In this section, we derive an upper bound on $m_{\mathrm{opt}}$.
This upper bound provides quantitative insight for both attackers and defenders.
From the attacker's perspective, this upper bound indicates how much further $m_{\mathrm{opt}}$ could potentially be increased beyond the current solution.
From the defender's perspective, it can be used as a measure of robustness against data insertions, including both poisoning and benign insertions.

We first present the following theorem, which provides an upper bound on $m_{\mathrm{opt}}$ after poisoning using only minimal information about the instance $\mathcal{K}$.
\begin{theorem_box}
\begin{theorem}
\label{thm:m_opt_upper_bound}
When $\varepsilon \geq 1/2$, for any $\mathcal{K}$ and $\mathcal{P}$,
\begin{equation}
\label{eq:m_opt_upper_bound}
 m_{\mathrm{opt}}(\mathcal{K} \cup \mathcal{P}, \varepsilon) \leq m_{\mathrm{opt}}(\mathcal{K}, \varepsilon) + |\mathcal{P}|.
\end{equation}
\end{theorem}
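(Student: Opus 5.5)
We prove \cref{eq:m_opt_upper_bound} by induction on $|\mathcal{P}|$. It therefore suffices to show the single-insertion bound
\[
m_{\mathrm{opt}}(\mathcal{X} \uplus \{p\}, \varepsilon) \le m_{\mathrm{opt}}(\mathcal{X}, \varepsilon) + 1
\]
for an arbitrary sorted multiset $\mathcal{X}=\{x_1\le\dots\le x_N\}$ and any $p$ with $x_1 \le p \le x_N$. Each poison lies strictly between $k_1$ and $k_n$, so this range condition holds at every step of the induction. We then insert the poison keys of $\mathcal{P}$ one at a time, each time starting from an optimal PLA of the current multiset.

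Fix an optimal PLA $\bigl((s_1,e_1,f_1),\dots,(s_m,e_m,f_m)\bigr)$ for $\mathcal{X}$. Choose $t$ so that the sorted order of $\mathcal{X}\uplus\{p\}$ is $x_1,\dots,x_{t-1},p,x_t,\dots,x_N$, with $2\le t\le N$. Among tied keys the choice of position does not matter, since equal keys are interchangeable. In the new multiset, $x_i$ keeps rank $i$ for $i<t$, $p$ gets rank $t$, and $x_i$ gets rank $i+1$ for $i\ge t$. We build a PLA for the new multiset with at most $m+1$ segments as follows. Every segment lying entirely before position $t$ is kept unchanged. Every segment lying entirely at or after position $t$ is kept with $f_j$ replaced by $f_j+1$. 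This matches the rank shift exactly, so every error bound is preserved. It remains to cover $p$ using only one extra segment.

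\textbf{Boundary case.} Suppose $x_{t-1}$ and $x_t$ lie in different segments, so that $e_j=t-1$ and $s_{j+1}=t$. We insert $p$ as a singleton segment with the constant function $t$, which has error $0$. This gives $m+1$ segments.

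\textbf{Interior case.} Suppose $s_j\le t-1<t\le e_j$ for some $j$. We split segment $j$ into two segments. The first covers $x_{s_j},\dots,x_{t-1}$ with $f_j$. The second covers $x_t,\dots,x_{e_j}$ with $f_j+1$. Both are valid, and the segment count is again $m+1$. The point $p$ must be attached to one of these two pieces. Because $f_j$ is affine and $x_{t-1}\le p\le x_t$, the value $f_j(p)$ lies between $f_j(x_{t-1})$ and $f_j(x_t)$. This holds whatever the sign of the slope, and this is the key observation. The feasibility constraints $f_j(x_{t-1})\ge t-1-\varepsilon$ and $f_j(x_t)\le t+1-1+\varepsilon = t+\varepsilon$ then give
\[
f_j(p)-t \in [-1-\varepsilon,\ \varepsilon].
\]
If $f_j(p)-t\ge -\varepsilon$, we append $p$ (rank $t$) to the first piece, since $|f_j(p)-t|\le\varepsilon$. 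Otherwise $f_j(p)-t<-\varepsilon$, and we prepend $p$ to the second piece. Its error under $f_j+1$ is $f_j(p)+1-t$, which lies in $[-\varepsilon,\,1-\varepsilon)$. This is within $[-\varepsilon,\varepsilon]$ exactly because $\varepsilon\ge 1/2$, and this is the only place where that hypothesis is used.

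The step I expect to need the most care is this interior case. It requires noticing that linear interpolation between the two neighbours pins $f_j(p)$ to a window of width $1+2\varepsilon$. Covering that window with the two error bands, one for $f_j$ and one for $f_j+1$, then needs $2\varepsilon+1 \ge 1+2\varepsilon$ with the bands overlapping, which is precisely the condition $\varepsilon\ge1/2$. The remaining work is bookkeeping: segment contiguity is preserved, the first segment still starts at index $1$ and the last still ends at $N+1$, and ties are handled consistently.
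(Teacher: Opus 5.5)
Your proof is correct and follows the paper's argument essentially step for step. Like the paper, you reduce to the single-insertion bound by induction on $|\mathcal{P}|$, shift all later segments by $+1$, add a singleton segment when $p$ falls between two segments, and otherwise split the containing segment into one piece with $f_j$ and one with $f_j+1$. Your interpolation argument makes explicit the interior-case step that the paper only asserts. One small fix: when $f_j$ has negative slope, the window $f_j(p)-t\in[-1-\varepsilon,\varepsilon]$ comes from the other two feasibility bounds, $f_j(x_{t-1})\le t-1+\varepsilon$ and $f_j(x_t)\ge t-\varepsilon$, so you should cite all four bounds rather than just two.
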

\end{theorem_box}
\begin{proof}[Proof Sketch of \cref{thm:m_opt_upper_bound}]
Given an optimal PLA for $\mathcal{K}$, inserting a single key $p$ requires at most one additional segment: add a singleton segment if $p$ lies between two segments, or split the segment containing $p$ otherwise.
Applying this argument repeatedly proves \cref{eq:m_opt_upper_bound}.
\end{proof}

In standard PGM-index settings, $\varepsilon \geq 1$ (see \cite[Def. 1]{ferragina2020pgm}), so the assumption $\varepsilon \ge 1/2$ in \cref{thm:m_opt_upper_bound} always holds.
Interestingly, when $\varepsilon < 1/2$, \cref{thm:m_opt_upper_bound} can fail.
For example, with $\varepsilon = 0$ and $\mathcal{K}=\{1,3,5,7\}$, $m_{\mathrm{opt}}(\mathcal{K}, 0) = 1$.
However, with $\mathcal{P}=\{4\}$, we have $m_{\mathrm{opt}}(\mathcal{K} \cup \mathcal{P}, 0) = 3$, which is greater than $m_{\mathrm{opt}}(\mathcal{K}, 0) + |\mathcal{P}| = 2$.

We further show that there exists $\mathcal{K}$ for which the bound in \cref{thm:m_opt_upper_bound} is tight.
\begin{theorem_box}
\begin{theorem}
\label{thm:m_opt_upper_bound_strict}
For any $\varepsilon \geq 0$ and $\lambda \in \mathbb{N}$,
\begin{equation}
\label{eq:m_opt_upper_bound_strict}
\exists\, \mathcal{K}, \mathcal{P}
~\ ~ \text{s.t.} ~\ ~
 |\mathcal{P}| = \lambda ~\land~
 m_{\mathrm{opt}}(\mathcal{K} \cup \mathcal{P}, \varepsilon) = m_{\mathrm{opt}}(\mathcal{K}, \varepsilon) + \lambda.
\end{equation}
\end{theorem}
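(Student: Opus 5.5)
The plan is to prove the statement by explicit construction: glue together $\lambda$ independent ``gadgets'' and give the first $\lambda$ of them one poison key each. Each gadget is one segment before poisoning and needs two after, and the gadgets are separated so that segments cannot usefully span two of them. If we also prove $m_{\mathrm{opt}}(\mathcal{K},\varepsilon)=B$ (the number of gadgets, with $B\ge\lambda$) and $m_{\mathrm{opt}}(\mathcal{K}\cup\mathcal{P},\varepsilon)\ge B+\lambda$, then \cref{thm:m_opt_upper_bound} supplies the matching upper bound when $\varepsilon\ge 1/2$. For $\varepsilon<1/2$ we must prove the upper bound $m_{\mathrm{opt}}(\mathcal{K}\cup\mathcal{P},\varepsilon)\le B+\lambda$ directly, for instance by exhibiting a PLA that splits each poisoned gadget at its poison.

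\textbf{Gadget.} Take a long arithmetic run $R=\{a,a+2,a+4,\dots,a+2(w-1)\}$ with $w$ large. Its ranks lie exactly on a line of slope $1/2$, so $E(R)=0$. Now add the consecutive odd integers $a+2t+1, a+2t+3,\dots$ near the middle one at a time. Each addition raises the ranks of everything to its right by one, producing an increasingly steep local step. Once roughly $2\varepsilon+1$ points have been added, the step is of height about $2\varepsilon+1$ over a short horizontal distance, flanked by long straight runs. A line cannot stay within $\varepsilon$ of both flanks, so the maximum error eventually exceeds $\varepsilon$ (this is where $w$ large is used). Let $q\ge 1$ be the first index at which $E>\varepsilon$. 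We declare $R$ together with the first $q-1$ added points to be the legitimate gadget $G$, so $E(G)\le\varepsilon$, and take the $q$-th point $p$ as the single poison, so $E(G\cup\{p\})>\varepsilon$ and $G\cup\{p\}$ cannot be one segment. Note that $p\notin G$ and that $p$ lies strictly inside the range of $G$, as the threat model requires.

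\textbf{Separation.} Place $B\ge\lambda$ translated copies of the gadget, ordered by position, and poison the first $\lambda$ of them. Successive gadgets are separated by a horizontal gap $D$ that is enormous compared with the gadget width, while the rank increase across the gap is only one. A segment that fully contains a gadget of length $w\gg\varepsilon$ must have slope close to $1/2$. Such a line, over a gap of length $D$, predicts a rank change of about $D/2$, far larger than the actual change, so it cannot reach more than $O(\varepsilon)$ points into the next gadget. Any segment touching two gadgets must therefore be either a short ``bridge'' of at most $2\varepsilon+1$ points, or a segment covering essentially a full gadget plus at most a few points of its neighbour.

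\textbf{Counting, the main obstacle.} The delicate step is the lower bound $m_{\mathrm{opt}}(\mathcal{K}\cup\mathcal{P},\varepsilon)\ge B+\lambda$, because a cleverly placed bridge segment could in principle absorb the poisoned step and save a segment. My first attempt is to fix the poison position $t$ near the centre of the run, so that both parts of a poisoned gadget on either side of the step contain $\gg\varepsilon$ points. Then any segment covering the step must extend $\Omega(w)$ points in both directions, which contradicts $E(G\cup\{p\})>\varepsilon$, since the failure can be witnessed by a window of bounded width $O(\varepsilon)$ around the step. Hence every poisoned gadget contains at least one segment boundary strictly inside it, in addition to the boundaries forced between gadgets by the separation argument. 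Assigning each segment to the gadget containing its start point then gives $m_{\mathrm{opt}}(\mathcal{K},\varepsilon)\ge B$, with equality shown by the obvious one-segment-per-gadget PLA, and $m_{\mathrm{opt}}(\mathcal{K}\cup\mathcal{P},\varepsilon)\ge B+\lambda$. Making the witness window rigorous, by bounding how long a run must be before a step of height $q$ becomes infeasible, is the computation I expect to require the most care.
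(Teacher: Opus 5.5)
There is a genuine gap, and it lies in the range $0\le\varepsilon<1/2$, which the statement covers. There your gadget does not merely lack a proof of the upper bound; it gives the wrong count.

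Take $\varepsilon=0$. Then $q=1$, the gadget is the run $R=\{a,a+2,\dots\}$, and the poison is a single odd integer $p=a+2t+1$. The points $(a+2t,t+1)$, $(p,t+2)$, $(a+2t+2,t+3)$ lie on a line of slope $1$, while each flank lies on a line of slope $1/2$. So any exact segment containing $p$ is contained in $\{a+2t,p,a+2t+2\}$. The two long flanks then need a segment each, so every poisoned gadget costs three segments: $m_{\mathrm{opt}}(\mathcal{K}\cup\mathcal{P},0)=B+2\lambda$ while $m_{\mathrm{opt}}(\mathcal{K},0)=B$. Up to translation, this is the paper's own example $\{1,3,5,7\}\cup\{4\}$ showing that \cref{thm:m_opt_upper_bound} fails below $1/2$.

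The same happens for every $\varepsilon$ below roughly $1/4$. In the coordinates ``rank minus $x/2$'', the flanks sit at heights $0$ and $1$, and $p$ sits at height $1/2$, one unit from each flank. A line within $\varepsilon$ of $p$ and of $L$ flank points exists only if $\varepsilon\ge (L-1)/(4L-2)$, and this threshold tends to $1/4$ as the flanks grow. So the PLA that ``splits each poisoned gadget at its poison'', which you plan to exhibit for $\varepsilon<1/2$, does not exist once the flanks are as long as your separation argument requires.

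The missing idea is to place the poison where it can always be absorbed into a neighbouring segment, so that it costs exactly one extra segment for every $\varepsilon\ge0$. The paper puts each poison not inside a run but at the midpoint of a large gap between blocks of $4\varepsilon+3$ consecutive keys. The poison then shares a segment with the first $2\varepsilon+1$ keys of the next block; this works even at $\varepsilon=0$, since two points are always collinear. The remaining $2\varepsilon+2$ keys form one more segment. Both $m_{\mathrm{opt}}(\mathcal{K},\varepsilon)=\lambda+1$ and $m_{\mathrm{opt}}(\mathcal{K}\cup\mathcal{P},\varepsilon)=2\lambda+1$ are read off from the greedy left-to-right segmentation, with no appeal to \cref{thm:m_opt_upper_bound}.

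For $\varepsilon\ge1/2$, your route (a counting lower bound plus \cref{thm:m_opt_upper_bound} for the upper bound) is a valid alternative, with two corrections. First, fix $q=\lfloor2\varepsilon\rfloor+1$ independently of $w$. Second, the witnessing window's width depends on the margin $q/2-\varepsilon$ and is not $O(\varepsilon)$. Even so, the route cannot cover the full range $\varepsilon\ge0$ of the statement unless you change the gadget.
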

\end{theorem_box}
\begin{proof}[Proof Sketch of \cref{thm:m_opt_upper_bound_strict}]
The proof is constructive.
Let $\mathcal{K}$ consist of $\lambda+1$ sufficiently long blocks of consecutive integers separated by sufficiently large gaps.
Then each segment covers one block, so $m_{\mathrm{opt}}(\mathcal{K},\varepsilon)=\lambda+1$.
Let $\mathcal{P}$ be a size-$\lambda$ poison set with one poison at each gap midpoint.
Each inserted poison key creates one additional segment, and thus $m_{\mathrm{opt}}(\mathcal{K}\cup\mathcal{P},\varepsilon)=2\lambda+1$.
\end{proof}
\cref{thm:m_opt_upper_bound_strict} shows that the upper bound in \cref{thm:m_opt_upper_bound} is tight as an instance-agnostic bound, and hence cannot be improved without using instance-specific information.

\subsection{Instance-Dependent Upper Bound on $m_\mathrm{opt}$}
\label{sec:m_opt_instance_dependent_upper_bound}

By exploiting the structure of $\mathcal{K}$, we derive a tighter upper bound on $m_{\mathrm{opt}}$ than the instance-agnostic bound in \cref{sec:m_opt_instance_agnostic_upper_bound}.
We relax the problem (\cref{def:maximizing_m_opt}) to obtain an efficiently computable bound with theoretical guarantees.
Our approach consists of three steps:
relaxing the problem by partitioning $\mathcal{K}$ into blocks, deriving per-block upper bounds, and globally allocating the poison budget.
We briefly describe each step below; full details are provided in Appendix~\ref{app:instance_upper_bound_algorithm}.

\textbf{Relaxation.}
We weaken the PLA construction by partitioning $\mathcal{K}$ into blocks and fixing the slope within each block in advance.
We partition $\mathcal{K}$ using an $\alpha\varepsilon$-PLA with $\alpha \ge 1$ and fix each block's slope to that of the corresponding segment.
We evaluate the upper bound for $\alpha \in \{1.0,1.2,1.4,1.6,1.8,2.0\}$ and take the minimum.
We also strengthen the attacker by allowing duplicate poisons.

\textbf{Per-block upper bound.}
Fixing the slope within each block allows us to derive a closed-form lower bound on the number of poisons required for a segment to span any pair of keys.
We use these pairwise costs as edge weights in a DAG, where each path represents a segmentation and its weight lower-bounds the required poison budget.
The edge weights satisfy the Monge property~\cite{aggarwal1986geometric}, allowing efficient shortest-path computation via LARSCH~\cite{larmore1991line}.
Combined with Lagrangian relaxation~\cite{aggarwal1993finding}, this yields an upper bound on the number of segments achievable under a given poison budget.
The bound is parameterized by a Lagrange parameter $\nu>0$; we evaluate it for each $\nu \in \{1,2,3,4,5,10,20,40,80,160\}$ and take the minimum.

\textbf{Allocating poisons across blocks.}
Each per-block bound is concave in the allocated budget, so the global allocation reduces to a concave knapsack problem.
We solve it by greedily assigning each poison to the block with the largest marginal gain~\cite{federgruen1986greedy}.

\section{Experiments}
\label{sec:experiment}

This section evaluates our poisoning attacks and their impact on index performance.
We first study attacks on CDF linear regression under the maximum-error loss (\cref{sec:experiment_poisoning_max_error}) and attacks that minimize the number of covered legitimate keys (\cref{sec:experiment_poisoning_num_coverable_keys}).
We then evaluate the effect of \textsc{PGM-attack} on the number of PLA segments (\cref{sec:experiment_poisoning_segment_number}) and the performance of various indexes (\cref{sec:experiment_poisoning_index_performance}).

\subsection{Experimental Setup}
\label{sec:experiment_setup}

We implemented all methods in C++ and ran experiments on a Linux machine with an Intel Core i9-11900H CPU @ 2.50\,GHz and 64\,GB of RAM.
We compiled the code with GCC~9.4.0 and the \texttt{-O3} flag.
We used eight threads for poison generation to reduce wall-clock time.
In contrast, index construction and query-time measurements were performed using a single thread for reproducibility.

\paragraph{Datasets.}
We use standard real-world benchmarks for learned indexes together with synthetic datasets.
From SOSD~\cite{sosd-vldb}, we use \textbf{Amzn}, \textbf{Osmc}, and \textbf{Face}; from the ALEX study~\cite{ding2020alex}, we use \textbf{YCSB}, \textbf{Longitudes}, and \textbf{Longlat}.
We scale the real-valued Longitudes and Longlat keys to $[0,2^{63}]$ and round them to integers.
Amzn and Osmc contain 800M keys, while the other real-world datasets contain 200M keys each.
For the synthetic datasets \textbf{Uniform}, \textbf{Normal}, and \textbf{Lognormal}, we draw 200M keys i.i.d., scale them to $[0,2^{63}]$, and round them to integers.
All datasets are duplicate-free.

\subsection{Maximum-Error Maximization}
\label{sec:experiment_poisoning_max_error}

We evaluate the maximum-error poisoning attacks introduced in \cref{sec:poisoning_max_error}.
For each dataset in \cref{sec:experiment_setup}, we construct a legitimate key set by extracting $n$ consecutive keys, reflecting that each PLA segment covers a contiguous key range.
We choose the starting position uniformly at random and repeat the experiment with 100 seeds.
We fix $n=100$ and vary the poison budget as $\lambda \in \{2,4,6,\dots,20\}$.

\paragraph{Methods.}
We evaluate \textbf{\textsc{Greedy}}, \textbf{\textsc{Consecutive}}, and \textbf{\textsc{Duplicate-Opt}} from \cref{sec:poisoning_max_error}.
We compare them with two baselines:
\textbf{\textsc{Random}}, which uniformly samples $\lambda$ keys from the allowed integers, and
\textbf{\textsc{Random-Adjacent}}, which uniformly samples from integers adjacent to legitimate keys, as motivated by \cref{thm:maxerror_attack_structure}.

\begin{figure}[t]
\centering
\includegraphics[width=\columnwidth]{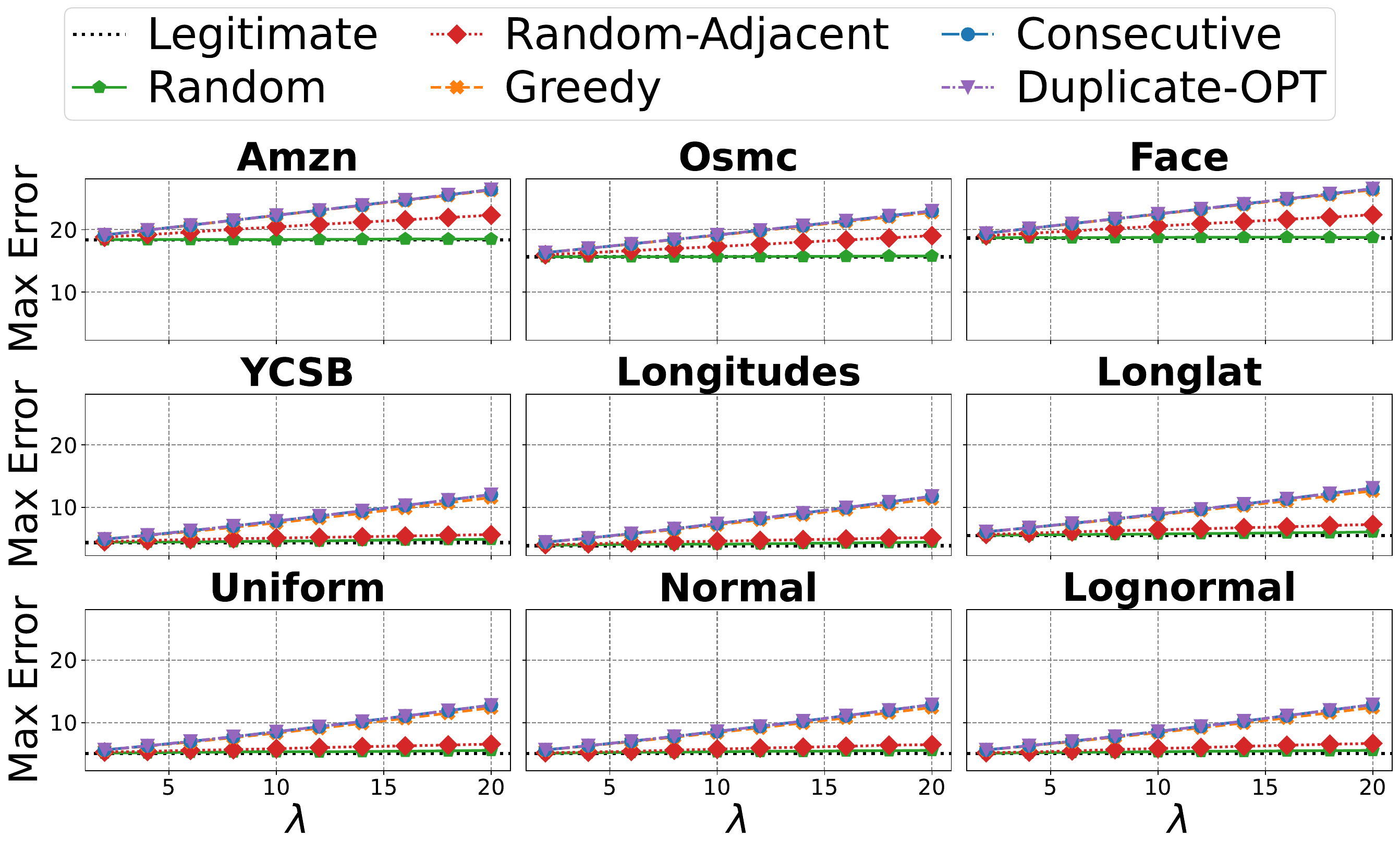}
\caption{Mean maximum error across 100 seeds. While \textsc{Random} barely changes the maximum error, our methods increase it substantially.}
\label{fig:lambda_to_maxerror_mean}
\end{figure}

\paragraph{Results.}
\cref{fig:lambda_to_maxerror_mean} shows the maximum error averaged over 100 seeds.
While \textsc{Random} has little effect, \textsc{Greedy}, \textsc{Consecutive}, and \textsc{Duplicate-Opt} substantially increase the maximum error.
Their curves nearly overlap, and the increase is approximately linear in $\lambda$, with a slope of about $0.4$ across all datasets.
\textsc{Random-Adjacent} sometimes outperforms \textsc{Random}, but remains consistently less effective than these three methods.

\textsc{Consecutive} is optimal in almost all cases.
Let $E_{\mathrm{C}}$ and $E_{\mathrm{D}}$ denote the maximum errors achieved by \textsc{Consecutive} and \textsc{Duplicate-Opt}, respectively.
Among all $9{,}000$ cases (9 datasets, 10 values of $\lambda$, and 100 seeds), we observe $E_{\mathrm{D}}=E_{\mathrm{C}}$ in $8{,}994$ cases.
In the remaining six cases, the gap satisfies $E_{\mathrm{D}}-E_{\mathrm{C}}\leq 0.87$.
Because $E_{\mathrm{D}}$ upper-bounds the optimum of the original problem, these results show that \textsc{Consecutive} attains the optimum in the vast majority of cases and remains close to it in all remaining cases.

We next compare \textsc{Consecutive} with \textsc{Greedy}.
Let $E_{\mathrm{G}}$ denote the maximum error achieved by \textsc{Greedy}.
We have $E_{\mathrm{C}}\geq E_{\mathrm{G}}$ in all $9{,}000$ cases, with strict inequality in $3{,}105$ cases and a maximum gap of $4.4$.
Moreover, \textsc{Consecutive} is substantially faster: it runs in $\mathcal{O}(n(n+\lambda))$ time, whereas \textsc{Greedy} requires $\mathcal{O}(\lambda(n+\lambda)^2)$ time.
Thus, \textsc{Consecutive} is both more effective and efficient than \textsc{Greedy}.

\paragraph{Key takeaway.}
\textsc{Consecutive} efficiently finds a globally optimal maximum-error attack in almost all cases.

\subsection{Covered-Legitimate-Keys Minimization}
\label{sec:experiment_poisoning_num_coverable_keys}

\begin{figure}[t]
\centering
\includegraphics[width=\columnwidth]{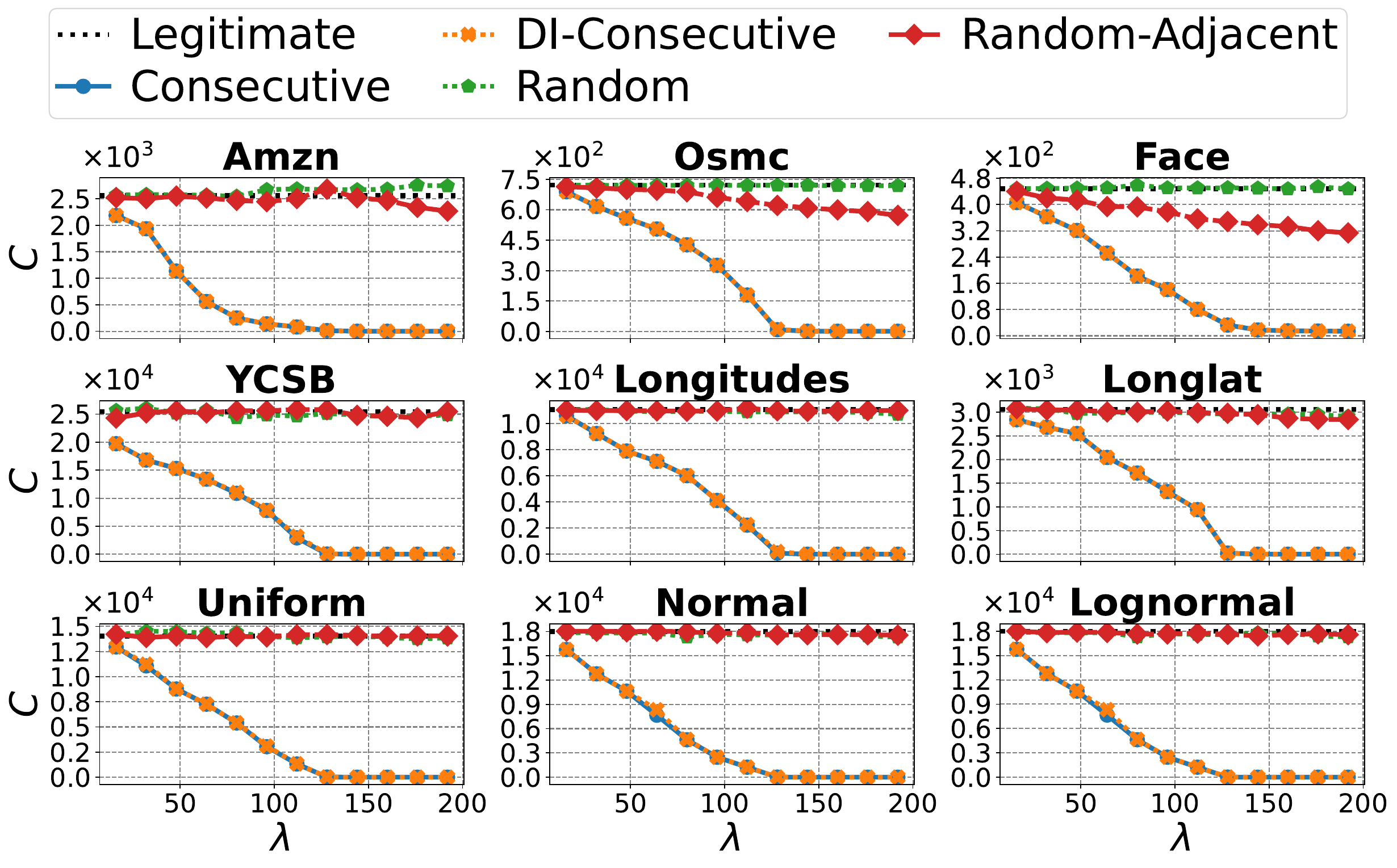}
\caption{Mean number of covered legitimate keys $C$ versus $\lambda$ for $\varepsilon=64$.
\textsc{Consecutive} and \textsc{DI-Consecutive} achieve lower $C$ than the random baselines.}
\label{fig:lambda_to_covered_keys_mean_eps64}
\end{figure}

\begin{figure}[t]
\centering
\includegraphics[width=\columnwidth]{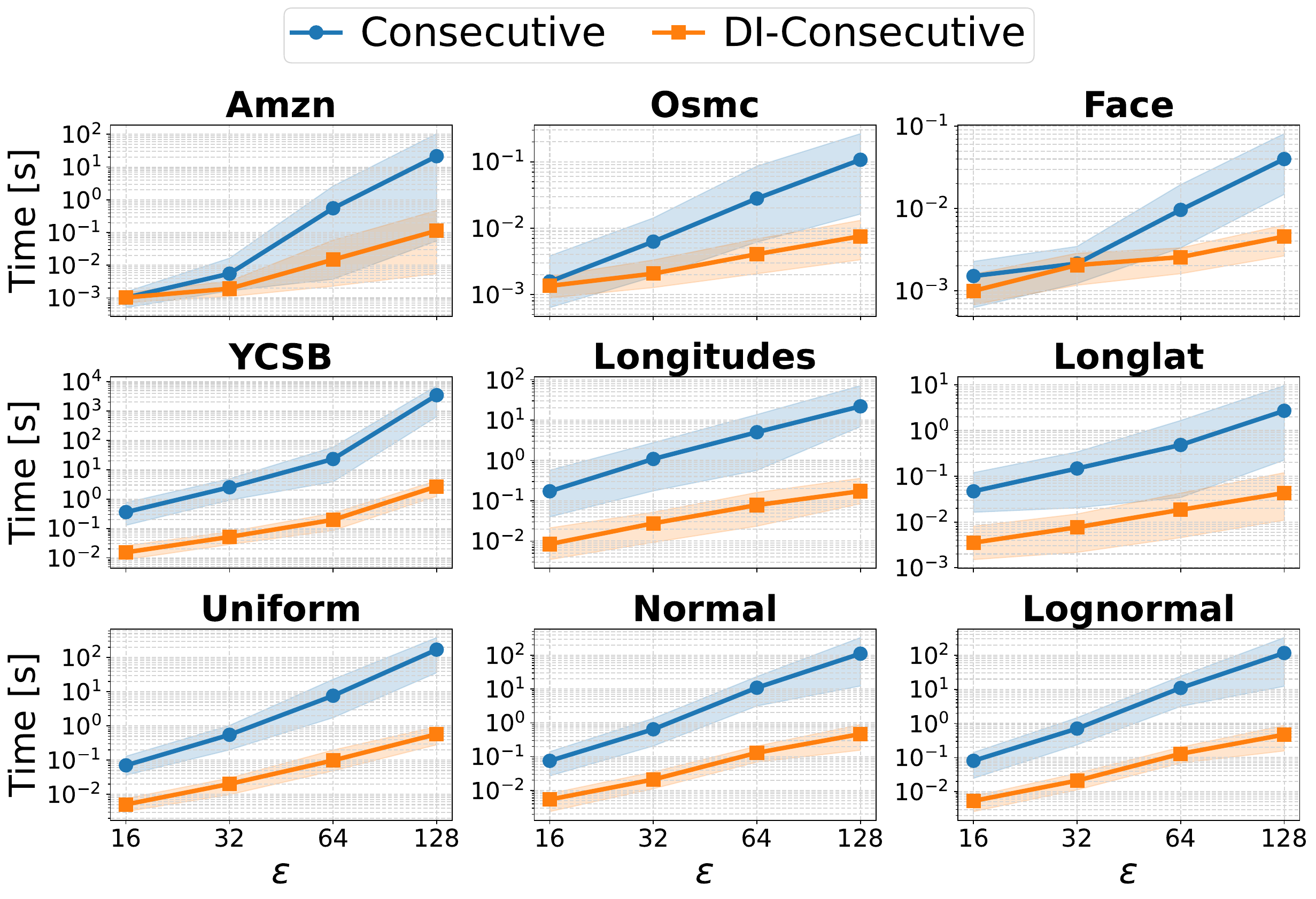}
\caption{Poison generation time.
\textsc{DI-Consecutive} is over $1{,}000\times$ faster than \textsc{Consecutive}.}
\label{fig:epsilon_generation_time}
\end{figure}

We evaluate the covered-legitimate-keys minimization attack introduced in \cref{sec:poisoning_num_coverable_keys}.
For each dataset and seed, we construct the legitimate key set $\mathcal{K}$ (see \cref{def:poisoning_linear_regression_on_cdfs_number_of_covered_keys}) by selecting a random position and taking the suffix starting from that position.
Thus, the number of keys covered before poisoning varies across datasets and seeds.
We set $\varepsilon=64$, vary $\lambda\in\{16i\mid i=1,\dots,12\}$, and use 20 random seeds.
For \textsc{DI-Consecutive}, we use
$\mathcal{I}=\{t\varepsilon/20\mid t=-20,-19,\dots,20\}$.

\paragraph{Methods.}
We evaluate \textbf{\textsc{Consecutive}} and \textbf{\textsc{DI-Consecutive}} from \cref{sec:poisoning_num_coverable_keys}.
We also include \textbf{\textsc{Random}} and \textbf{\textsc{Random-Adjacent}}, defined as in \cref{sec:experiment_poisoning_max_error}, as baselines.

\paragraph{Results.}
\Cref{fig:lambda_to_covered_keys_mean_eps64} shows the mean number of covered legitimate keys before and after poisoning.
\textsc{Random} has little effect, and \textsc{Random-Adjacent} achieves only a modest reduction.
By contrast, \textsc{Consecutive} and \textsc{DI-Consecutive} substantially reduce the covered-key count and achieve nearly identical results.

When $\lambda \leq 2\varepsilon$, the number of covered keys decreases almost linearly with $\lambda$.
In contrast, when $\lambda > 2\varepsilon$, the number of covered keys converges to nearly one and does not decrease further.
This observation motivates our choice of $\Lambda$ in \cref{sec:method_for_obtaining_poison_solutions_m_opt}, where we use 10 evenly spaced integers from $0$ to $2\varepsilon + 1$.

\Cref{fig:epsilon_generation_time} compares the poison generation times of \textsc{Consecutive} and \textsc{DI-Consecutive} with $\lambda=\varepsilon$.
The points show the means over seeds, and the shaded regions indicate the 5th--95th percentiles.
\textsc{DI-Consecutive} is consistently much faster than \textsc{Consecutive}.
The speedup becomes particularly pronounced when $\varepsilon$ is large or when the number of legitimate keys before poisoning is large, reaching over $1{,}000\times$ in the largest cases.

\paragraph{Key takeaway.}
\textsc{DI-Consecutive} achieves nearly the same reduction in covered legitimate keys as \textsc{Consecutive}, while being over three orders of magnitude faster in the largest cases.

\subsection{PLA Segment Maximization}
\label{sec:experiment_poisoning_segment_number}

\begin{table}[t]
    \centering
    \caption{$m_{\mathrm{opt}}$ before/after poisoning, and instance-dependent upper bounds ($\varepsilon=128$).
    \textit{Original} reports $m_{\mathrm{opt}}$ before poisoning.
    \textsc{PGM-attack} increases $m_{\mathrm{opt}}$ by up to $120\times$ under $10\%$ poisoning.
    The instance-specific upper bound is at most $1.92\times$ the $m_{\mathrm{opt}}$ achieved by \textsc{PGM-attack}.}
    \label{tab:pla_segment_num_eps128}
    \setlength{\tabcolsep}{2pt}
    \begingroup
    \fontsize{7.5pt}{9.0pt}\selectfont
    \begin{tabular}{@{}l r rr rr@{}}
        \toprule
            & & \multicolumn{2}{c}{$\lambda = 0.01\,n$} & \multicolumn{2}{c}{$\lambda = 0.1\,n$} \\
        \cmidrule(lr){3-4} \cmidrule(lr){5-6}
        Dataset & Original & \textsc{PGM-attack} & Instance UB & \textsc{PGM-attack} & Instance UB \\
        \midrule
        Amzn & 268K & 335K {\footnotesize (1.25$\times$)} & 619K {\footnotesize (2.31$\times$)} & 670K {\footnotesize (2.50$\times$)} & 1.23M {\footnotesize (4.60$\times$)} \\
        Osmc & 668K & 723K {\footnotesize (1.08$\times$)} & 1.23M {\footnotesize (1.85$\times$)} & 1M {\footnotesize (1.51$\times$)} & 1.93M {\footnotesize (2.88$\times$)} \\
        Face & 256K & 283K {\footnotesize (1.10$\times$)} & 444K {\footnotesize (1.73$\times$)} & 374K {\footnotesize (1.46$\times$)} & 668K {\footnotesize (2.60$\times$)} \\
        YCSB & 663 & 10.3K {\footnotesize (15.5$\times$)} & 17.6K {\footnotesize (26.6$\times$)} & 79.8K {\footnotesize (120$\times$)} & 130K {\footnotesize (196$\times$)} \\
        Longitudes & 13.7K & 22.3K {\footnotesize (1.63$\times$)} & 41.5K {\footnotesize (3.02$\times$)} & 92.4K {\footnotesize (6.73$\times$)} & 154K {\footnotesize (11.2$\times$)} \\
        Longlat & 58.1K & 67.7K {\footnotesize (1.16$\times$)} & 122K {\footnotesize (2.11$\times$)} & 138K {\footnotesize (2.37$\times$)} & 245K {\footnotesize (4.22$\times$)} \\
        Uniform & 3.42K & 12.9K {\footnotesize (3.78$\times$)} & 23.7K {\footnotesize (6.91$\times$)} & 83.7K {\footnotesize (24.5$\times$)} & 136K {\footnotesize (39.8$\times$)} \\
        Normal & 3.5K & 13K {\footnotesize (3.70$\times$)} & 23.7K {\footnotesize (6.78$\times$)} & 83.5K {\footnotesize (23.8$\times$)} & 136K {\footnotesize (38.9$\times$)} \\
        Lognormal & 3.49K & 12.9K {\footnotesize (3.71$\times$)} & 23.7K {\footnotesize (6.81$\times$)} & 83.4K {\footnotesize (23.9$\times$)} & 136K {\footnotesize (39.1$\times$)} \\
        \bottomrule
    \end{tabular}
    \endgroup
\end{table}

\begin{table}[t]
    \centering
    \caption{Results for $m_{\mathrm{opt}}$ after poisoning at different $\varepsilon$ values ($\lambda = 0.1n$). Larger $\varepsilon$ tends to yield a higher increase ratio.}
    \label{tab:pla_segment_num_eps}
    \setlength{\tabcolsep}{3pt}
    \small
    \begin{tabular}{@{}l *{4}{r}@{}}
        \toprule
        Dataset & $\varepsilon=16$ & $\varepsilon=32$ & $\varepsilon=64$ & $\varepsilon=128$ \\
        \midrule
        Amzn & 12.1M {\footnotesize (1.52$\times$)} & 4.79M {\footnotesize (1.94$\times$)} & 1.73M {\footnotesize (2.17$\times$)} & 670K {\footnotesize (2.50$\times$)} \\
        Osmc & 9.18M {\footnotesize (1.49$\times$)} & 4.3M {\footnotesize (1.49$\times$)} & 2.06M {\footnotesize (1.50$\times$)} & 1M {\footnotesize (1.51$\times$)} \\
        Face & 3.08M {\footnotesize (1.45$\times$)} & 1.53M {\footnotesize (1.45$\times$)} & 760K {\footnotesize (1.45$\times$)} & 374K {\footnotesize (1.46$\times$)} \\
        YCSB & 810K {\footnotesize (11.6$\times$)} & 361K {\footnotesize (14.4$\times$)} & 169K {\footnotesize (24.2$\times$)} & 79.8K {\footnotesize (120$\times$)} \\
        Longitudes & 831K {\footnotesize (5.05$\times$)} & 380K {\footnotesize (6.07$\times$)} & 185K {\footnotesize (6.66$\times$)} & 92.4K {\footnotesize (6.73$\times$)} \\
        Longlat & 1.1M {\footnotesize (2.44$\times$)} & 537K {\footnotesize (2.45$\times$)} & 271K {\footnotesize (2.42$\times$)} & 138K {\footnotesize (2.37$\times$)} \\
        Uniform & 930K {\footnotesize (4.57$\times$)} & 391K {\footnotesize (7.38$\times$)} & 177K {\footnotesize (13.2$\times$)} & 83.7K {\footnotesize (24.5$\times$)} \\
        Normal & 929K {\footnotesize (4.58$\times$)} & 391K {\footnotesize (7.39$\times$)} & 177K {\footnotesize (13.1$\times$)} & 83.5K {\footnotesize (23.8$\times$)} \\
        Lognormal & 929K {\footnotesize (4.58$\times$)} & 391K {\footnotesize (7.39$\times$)} & 177K {\footnotesize (13.1$\times$)} & 83.4K {\footnotesize (23.9$\times$)} \\
        \bottomrule
    \end{tabular}
\end{table}

\begin{table}[t]
\centering
\caption{Poison generation time (hours). $\varepsilon=128, \lambda = 0.1n$.}
\label{tab:generation-time}
\small
\setlength{\tabcolsep}{3.5pt}
\renewcommand{\arraystretch}{0.88}
\noindent
\begin{tabular}{@{}l @{\quad} r@{}}
    \toprule
    Dataset & Time (h) \\
    \midrule
    Amzn & 12.3 \\
    Osmc & 5.0 \\
    Face & 1.0 \\
    \bottomrule
    \end{tabular}
    \hspace{1em}
    \begin{tabular}{@{}l @{\quad} r@{}}
    \toprule
    Dataset & Time (h) \\
    \midrule
    YCSB & 14.7 \\
    Longitudes & 5.3 \\
    Longlat & 2.6 \\
    \bottomrule
    \end{tabular}
    \hspace{1em}
    \begin{tabular}{@{}l @{\quad} r@{}}
    \toprule
    Dataset & Time (h) \\
    \midrule
    Uniform & 7.3 \\
    Normal & 7.1 \\
    Lognormal & 7.2 \\
    \bottomrule
\end{tabular}
\end{table}

We evaluate how much \textbf{\textsc{PGM-attack}} increases $m_{\mathrm{opt}}$, along with the instance-dependent upper bounds (\textbf{Instance UB}) from \cref{sec:poisoning_m_opt}.
We use all keys in each dataset: Amzn and Osmc contain 800M keys; the others, 200M.

\paragraph{Poisoning Impact on $m_{\mathrm{opt}}$.}
\cref{tab:pla_segment_num_eps128} reports $m_{\mathrm{opt}}$ before and after \textsc{PGM-attack} poisoning for $\varepsilon = 128$, with parenthesized values indicating the factor relative to pre-poisoning $m_{\mathrm{opt}}$.
\textsc{PGM-attack} can increase $m_{\mathrm{opt}}$ by large factors.
On YCSB, \textsc{PGM-attack} yields a $15.5\times$ increase in $m_{\mathrm{opt}}$ at 1\% poisoning ($\lambda = 0.01n$) and up to $120\times$ at 10\% poisoning ($\lambda = 0.1n$).
Even on datasets other than YCSB, \textsc{PGM-attack} increases $m_{\mathrm{opt}}$ by factors ranging from $1.46\times$ to $24.5\times$ with 10\% poisoning.
In contrast, \textsc{Random} and \textsc{Random-Adjacent} yield increases of at most $1.30\times$ on the other datasets and $2.42\times$ even on YCSB (see Appendix~\ref{app:additional_experiments}).
These results show \textsc{PGM-attack} increases $m_{\mathrm{opt}}$ far more effectively than naive random insertions.

\paragraph{Upper bounds.}
\cref{tab:pla_segment_num_eps128} also reports the Instance UB.
Across all instances, the Instance UB is at most $1.92\times$ the segment count \textsc{PGM-attack} achieves, certifying that \textsc{PGM-attack} achieves at least $1/1.92 \approx 52\%$ of the optimum on every instance.
The remaining gap may reflect suboptimality of the attack, looseness of the bound, or both.
The Instance UB is more than an order of magnitude tighter than the instance-agnostic upper bound; for example, on Amzn, the instance-agnostic upper bound is $80$M, whereas the Instance UB is only $1.23$M.
These results demonstrate that exploiting instance-specific structure enables substantially tighter upper bounds.

\paragraph{Dependence on $\varepsilon$.}
\cref{tab:pla_segment_num_eps} reports the poisoning impact for varying $\varepsilon$ with $\lambda = 0.1n$; parenthesized values indicate increases relative to the pre-poisoning $m_{\mathrm{opt}}$.
Larger $\varepsilon$ generally yields a larger increase ratio, as can be explained as follows.
Let $\hat{\theta}$ denote the average per-segment reduction in legitimate keys caused by one poison, and let $m_{\mathrm{opt}}$ and $m'_{\mathrm{opt}}$ denote the segment counts before and after poisoning, respectively.
Since each segment receives on average $\lambda / m'_{\mathrm{opt}}$ poisons while covered legitimate keys drop from $n / m_{\mathrm{opt}}$ to $n / m'_{\mathrm{opt}}$, we get $\hat{\theta} \lambda / m'_{\mathrm{opt}} \approx n / m_{\mathrm{opt}} - n / m'_{\mathrm{opt}}$, i.e., $m'_{\mathrm{opt}} / m_{\mathrm{opt}} \approx 1 + \hat{\theta} \lambda / n$.
As shown in \cref{fig:lambda_to_covered_keys_mean_eps64}, covered legitimate keys decrease approximately linearly to $1$ at $\lambda \approx 2\varepsilon$, so $\hat{\theta} \approx c / (2\varepsilon)$, where $c$ is the pre-poisoning covered keys per segment.
Since prior work~\cite{ferragina2020why,liu2024learned} suggests $c$ scales superlinearly with $\varepsilon$ (e.g., $c = \Theta(\varepsilon^2)$ under certain distributional assumptions), $m'_{\mathrm{opt}} / m_{\mathrm{opt}}$ increases with $\varepsilon$.
This implies a trade-off: larger $\varepsilon$ enables the superlinear compression that makes the PGM-index attractive, but also increases poisoning sensitivity.

\paragraph{Poison Generation Time.}
\cref{tab:generation-time} reports poison-key generation time for $\varepsilon=128$, $\lambda=0.1n$.
Across datasets, generation takes $1.0$--$14.7$ hours.
Although non-negligible, this cost remains practical for an offline attacker, since the poison keys can be prepared in advance and inserted before index construction; it is therefore a one-time cost incurred per target instance.
Generation time tends to grow larger on datasets where poisoning was more effective, e.g., YCSB.
This is because \textsc{PGM-attack} processes keys left to right, and larger per-segment shortening allows the next segment to start earlier, forcing repeated recomputation over already-processed regions.

\paragraph{Key takeaways.}
\textsc{PGM-attack} markedly increases $m_{\mathrm{opt}}$, especially when $\varepsilon$ is large or the legitimate-key instance has a small $m_{\mathrm{opt}}$.
Our upper bound certifies \textsc{PGM-attack} achieves at least 52\% of the optimum on all instances.

\subsection{Impact on Index Performance}
\label{sec:experiment_poisoning_index_performance}

We evaluate the impact of poison keys generated by \textsc{PGM-attack} on various index structures.
We use all keys in each dataset.
We report only the results on real-world datasets in the main paper; the complete results are provided in Appendix~\ref{app:additional_experiments}.

\paragraph{Poisoning Methods.}
In addition to our \textbf{\textsc{PGM-attack}}, we evaluate \textbf{\textsc{Random}}, which samples poison keys uniformly at random.
This comparison isolates the effect of deliberately selecting poison keys from that of merely increasing the number of keys.

\paragraph{Indexes.}
In addition to the PGM-index, we evaluate the following index structures:
\textbf{FT} (FITing-Tree)~\cite{galakatos2019fiting}, an early representative PLA-based learned index;
\textbf{PGM++}~\cite{liu2024learned}, a state-of-the-art PLA-based learned index extending the PGM-index;
\textbf{RMI}~\cite{kristo2020case}, the first learned index and a representative non-PLA-based approach; and
\textbf{B+-tree}~\cite{bayer1972organization}, a representative classical index structure.

\paragraph{Evaluation Metrics.}
We evaluate index size and query time.
Index size is defined as the memory footprint of the index structure.
Query time is the average time required to look up a uniformly sampled key in the sorted array.
We measure query time end to end; for example, for the PGM-index, it includes both the evaluation of the hierarchical PLA models and the last-mile binary search around the final predicted position.
For each method, we execute ten runs of $10^6$ queries and report the median average query time per query.

\begin{table}[t]
    \centering
    \caption{Impact of \textsc{PGM-attack} on index size and query time for $\varepsilon = 128$ and $\lambda = 0.1n$.
    \textsc{PGM-attack} increases the PGM-index size by up to $120\times$ and query time by up to $1.22\times$.}
    \label{tab:pgm_index_size_query_time_eps128}
    \small
    \setlength{\tabcolsep}{2.5pt}
    \begin{tabular}{@{}l l rr rr@{}}
        \toprule
        Dataset & Index & \multicolumn{2}{c}{Index Size [MB]} & \multicolumn{2}{c}{Query Time [$\mu\mathrm{s}$]} \\
        \cmidrule(lr){3-4} \cmidrule(lr){5-6}
        &  & Random & \textsc{PGM-attack} & Random & \textsc{PGM-attack} \\
        \midrule
            Amzn & PGM & $4.2$ {\footnotesize (1.01$\times$)} & $10.5$ {\footnotesize (2.50$\times$)} & $0.81$ {\footnotesize (1.01$\times$)} & $0.96$ {\footnotesize (1.20$\times$)} \\
             & FT & $33.3$ {\footnotesize (1.01$\times$)} & $58.9$ {\footnotesize (1.78$\times$)} & $0.85$ {\footnotesize (1.01$\times$)} & $0.95$ {\footnotesize (1.13$\times$)} \\
             & PGM++ & $4.2$ {\footnotesize (1.01$\times$)} & $10.5$ {\footnotesize (2.50$\times$)} & $0.74$ {\footnotesize (0.98$\times$)} & $0.86$ {\footnotesize (1.15$\times$)} \\
             & RMI & $0.19$ {\footnotesize (1.00$\times$)} & $0.19$ {\footnotesize (1.00$\times$)} & $0.84$ {\footnotesize (1.00$\times$)} & $0.98$ {\footnotesize (1.18$\times$)} \\
             & B+-tree & $62.3$ {\footnotesize (1.10$\times$)} & $62.3$ {\footnotesize (1.10$\times$)} & $0.97$ {\footnotesize (1.01$\times$)} & $0.95$ {\footnotesize (0.99$\times$)} \\
            \midrule
            Osmc & PGM & $10.5$ {\footnotesize (1.01$\times$)} & $15.7$ {\footnotesize (1.51$\times$)} & $0.92$ {\footnotesize (1.01$\times$)} & $1.03$ {\footnotesize (1.12$\times$)} \\
             & FT & $61.2$ {\footnotesize (1.01$\times$)} & $75.8$ {\footnotesize (1.25$\times$)} & $0.93$ {\footnotesize (1.00$\times$)} & $0.94$ {\footnotesize (1.02$\times$)} \\
             & PGM++ & $2.6$ {\footnotesize (2.02$\times$)} & $1.4$ {\footnotesize (1.13$\times$)} & $0.97$ {\footnotesize (0.95$\times$)} & $1.02$ {\footnotesize (1.00$\times$)} \\
             & RMI & $0.19$ {\footnotesize (1.00$\times$)} & $0.19$ {\footnotesize (1.00$\times$)} & $1.45$ {\footnotesize (0.97$\times$)} & $1.51$ {\footnotesize (1.02$\times$)} \\
             & B+-tree & $62.3$ {\footnotesize (1.10$\times$)} & $62.3$ {\footnotesize (1.10$\times$)} & $0.96$ {\footnotesize (1.04$\times$)} & $0.93$ {\footnotesize (1.01$\times$)} \\
            \midrule
            Face & PGM & $4.0$ {\footnotesize (1.00$\times$)} & $5.9$ {\footnotesize (1.46$\times$)} & $0.76$ {\footnotesize (1.01$\times$)} & $0.78$ {\footnotesize (1.02$\times$)} \\
             & FT & $23.4$ {\footnotesize (1.00$\times$)} & $29.3$ {\footnotesize (1.25$\times$)} & $0.82$ {\footnotesize (0.99$\times$)} & $0.85$ {\footnotesize (1.02$\times$)} \\
             & PGM++ & $1.9$ {\footnotesize (1.00$\times$)} & $5.9$ {\footnotesize (3.12$\times$)} & $0.73$ {\footnotesize (0.96$\times$)} & $0.81$ {\footnotesize (1.06$\times$)} \\
             & RMI & $0.19$ {\footnotesize (1.00$\times$)} & $0.19$ {\footnotesize (1.00$\times$)} & $1.20$ {\footnotesize (0.89$\times$)} & $1.35$ {\footnotesize (1.00$\times$)} \\
             & B+-tree & $15.6$ {\footnotesize (1.10$\times$)} & $15.6$ {\footnotesize (1.10$\times$)} & $0.80$ {\footnotesize (1.01$\times$)} & $0.80$ {\footnotesize (1.01$\times$)} \\
            \midrule
            YCSB & PGM & $0.021$ {\footnotesize (1.98$\times$)} & $1.2$ {\footnotesize (120$\times$)} & $0.67$ {\footnotesize (1.03$\times$)} & $0.79$ {\footnotesize (1.22$\times$)} \\
             & FT & $0.33$ {\footnotesize (1.27$\times$)} & $4.7$ {\footnotesize (18.5$\times$)} & $0.65$ {\footnotesize (1.03$\times$)} & $0.72$ {\footnotesize (1.15$\times$)} \\
             & PGM++ & $0.001$ {\footnotesize (12.8$\times$)} & $1.7$ {\footnotesize (35770$\times$)} & $0.68$ {\footnotesize (1.22$\times$)} & $0.73$ {\footnotesize (1.30$\times$)} \\
             & RMI & $0.19$ {\footnotesize (1.00$\times$)} & $0.19$ {\footnotesize (1.00$\times$)} & $0.54$ {\footnotesize (1.00$\times$)} & $0.72$ {\footnotesize (1.33$\times$)} \\
             & B+-tree & $15.6$ {\footnotesize (1.10$\times$)} & $15.6$ {\footnotesize (1.10$\times$)} & $0.82$ {\footnotesize (1.03$\times$)} & $0.80$ {\footnotesize (1.00$\times$)} \\
            \midrule
            Longit & PGM & $0.22$ {\footnotesize (1.01$\times$)} & $1.4$ {\footnotesize (6.73$\times$)} & $0.76$ {\footnotesize (0.99$\times$)} & $0.85$ {\footnotesize (1.10$\times$)} \\
             & FT & $1.2$ {\footnotesize (1.03$\times$)} & $4.9$ {\footnotesize (4.03$\times$)} & $0.67$ {\footnotesize (0.99$\times$)} & $0.72$ {\footnotesize (1.06$\times$)} \\
             & PGM++ & $0.22$ {\footnotesize (1.92$\times$)} & $1.7$ {\footnotesize (14.7$\times$)} & $0.73$ {\footnotesize (0.98$\times$)} & $0.73$ {\footnotesize (0.98$\times$)} \\
             & RMI & $0.19$ {\footnotesize (1.00$\times$)} & $0.19$ {\footnotesize (1.00$\times$)} & $0.78$ {\footnotesize (1.00$\times$)} & $0.85$ {\footnotesize (1.09$\times$)} \\
             & B+-tree & $15.6$ {\footnotesize (1.10$\times$)} & $15.6$ {\footnotesize (1.10$\times$)} & $0.81$ {\footnotesize (1.02$\times$)} & $0.80$ {\footnotesize (1.01$\times$)} \\
            \midrule
            Longlat & PGM & $0.93$ {\footnotesize (1.02$\times$)} & $2.2$ {\footnotesize (2.37$\times$)} & $0.85$ {\footnotesize (1.01$\times$)} & $0.85$ {\footnotesize (1.01$\times$)} \\
             & FT & $5.1$ {\footnotesize (1.02$\times$)} & $8.7$ {\footnotesize (1.74$\times$)} & $0.72$ {\footnotesize (1.00$\times$)} & $0.73$ {\footnotesize (1.01$\times$)} \\
             & PGM++ & $0.28$ {\footnotesize (1.76$\times$)} & $0.34$ {\footnotesize (2.13$\times$)} & $0.82$ {\footnotesize (0.87$\times$)} & $0.83$ {\footnotesize (0.89$\times$)} \\
             & RMI & $0.19$ {\footnotesize (1.00$\times$)} & $0.19$ {\footnotesize (1.00$\times$)} & $1.21$ {\footnotesize (0.99$\times$)} & $1.27$ {\footnotesize (1.03$\times$)} \\
             & B+-tree & $15.6$ {\footnotesize (1.10$\times$)} & $15.6$ {\footnotesize (1.10$\times$)} & $0.81$ {\footnotesize (1.02$\times$)} & $0.80$ {\footnotesize (1.01$\times$)} \\
        \bottomrule
    \end{tabular}
\end{table}

\paragraph{Impact on the PGM-index.}
Table~\ref{tab:pgm_index_size_query_time_eps128} reports the index size and query time under $\varepsilon=128$ and a $10\%$ poison budget ($\lambda=0.1n$), together with their ratios to the legitimate case.

For the PGM-index, \textsc{PGM-attack} significantly increases the index size.
Because its memory footprint is dominated by the bottom-level PLA, whose size is approximately proportional to $m_{\mathrm{opt}}$, the index grows by up to $120\times$.
This extreme increase occurs on YCSB, whose legitimate index is exceptionally compact; on the other real-world datasets, the increase remains $1.46$--$6.73\times$.
These increases are substantially larger than those caused by random poisoning, which increases the index size by only $1.98\times$ on YCSB and $1.00$--$1.02\times$ on the other datasets.
This demonstrates that \textsc{PGM-attack} effectively selects keys that degrade the PGM-index.
Lookup time also consistently increases, by up to $1.22\times$; this increase is larger than the increase caused by random poisoning, which is at most $1.03\times$.
This increase is likely attributable to the enlarged bottom-level PLA, which may cause more cache misses.

\paragraph{Impact on Other Indexes.}
Table~\ref{tab:pgm_index_size_query_time_eps128} also shows that \textsc{PGM-attack} transfers to other PLA-based learned indexes.
The index size increases by $1.25$--$18.5\times$ for FITing-Tree and by $1.13$--$35{,}770\times$ for PGM++.
The extreme increase for PGM++ occurs on YCSB, where the legitimate index is exceptionally compact before poisoning.
Thus, by maximizing the number of PLA segments, \textsc{PGM-attack} also degrades the memory efficiency of other PLA-based indexes.

The RMI has a fixed model size, so its index size remains $0.19$\,MB under poisoning.
Its query time nevertheless increases by up to $1.33\times$, because keys that are difficult for PLAs to approximate also increase RMI prediction errors, enlarging the final search ranges.

In contrast, the B+-tree is largely unaffected by poisoning: its size increases by only about $1.1\times$, and its query time remains nearly unchanged.
This behavior is consistent with its $\Theta(n)$ space complexity and $\Theta(\log n)$ query-time complexity, which make its performance relatively insensitive to small changes in the key distribution.
Nevertheless, learned indexes can remain substantially more space-efficient in absolute terms; for example, on Amzn, the PGM-index uses $5.9\times$ less memory than the B+-tree.
These results highlight a trade-off between the efficiency of learned indexes and the predictable robustness of the B+-tree.
Our analysis helps make the behavior of learned indexes more tractable by characterizing and bounding their degradation under poisoning.

\paragraph{Key takeaways.}
\textsc{PGM-attack} substantially increases the PGM-index's memory footprint and moderately increases its query time.
It also transfers to other learned indexes, particularly PLA-based ones, whereas classical indexes remain largely unaffected.
These results demonstrate that our attack exploits the distribution sensitivity distinguishing learned indexes from traditional indexes.

\section{Related Work}
\label{sec:related_work}

\subsection{Learned Indexes}
\label{sec:related_work_learned_index}

Learned indexes enhance or replace classical index structures, such as the B+-tree, with machine-learned models to improve memory efficiency and/or query throughput~\cite{kraska2018case_li}.
The core idea is to approximate the cumulative distribution function (CDF) of the data with a learned regression model.
Since the seminal work of Kraska et al.~\cite{kraska2018case_li}, learned indexes have been extended in various directions, including multi-dimensional data, string keys, dynamic workloads, and theoretical analyses; we refer the reader to a recent survey~\cite{al2024survey} for a comprehensive overview.

A particularly important and practical line of research is based on PLA.
Its explicit error control and high compressibility make PLA well suited for learned indexes.
Representative examples include FITing-Tree~\cite{galakatos2019fiting}, which replaces B+-tree leaves with linear models, and RadixSpline~\cite{kipf2020radixspline}, which combines a PLA with a radix table for fast and memory-efficient static indexing.
More recently, faster PLA construction methods such as GreedyPLA and SwingFilter have been studied, trading slight losses in memory efficiency and query performance for shorter construction times~\cite{qin2025piecewise}.
Among PLA-based indexes, the PGM-index~\cite{ferragina2020pgm,ferragina2020why} is one of the most mature and practically relevant, attracting substantial research attention and adoption in real-world systems.
For example, Manticore Search, an open-source search engine, uses the PGM-index for secondary indexes~\cite{manticore_manual_introduction}, and Infinity, an open-source AI-native database, uses PGM-based structures for filtering tasks~\cite{infiniflow_medium_infinity_010}.

\subsection{Attacks on Learned Indexes}
\label{sec:related_work_attacks_on_learned_index}

Although learned indexes achieve strong performance on typical workloads, recent studies have shown that they can be vulnerable to adversarial manipulation.
Prior work has investigated attacks that degrade the performance of learned Bloom filters~\cite{dong2025poisoning,farwah2024exploiting,reviriego2021learned}, learned sketches~\cite{jing2022deceiving}, learned cardinality estimators~\cite{zhang2024pace,li2025algorithmic}, and learned index advisors~\cite{zhou2024trap,zheng2024robustness}.

Attacks and robustness issues concerning learned indexes in the narrower sense (i.e., learned B-tree) have also received considerable attention~\cite{kornaropoulos2022price,yang2023algorithmic,schuster2025learned,sato2026foundations}.
Kornaropoulos et al.~\cite{kornaropoulos2022price} study poisoning attacks against linear regression models trained on CDFs and extend their attack to RMI.
A related but distinct line of work studies algorithmic complexity attacks (ACAs)~\cite{yang2023algorithmic,schuster2025learned} against ALEX~\cite{ding2020alex}.
Unlike our setting, in which poison keys are injected before index construction, these attacks exploit weaknesses in ALEX's online insertion algorithms to degrade its performance.
We compare against these attacks in Appendix~\ref{app:additional_experiments} and find that our attack consistently causes greater performance degradation.

\section{Discussion}
\label{sec:discussion}

\paragraph{Designing robust learned indexes.}
Developing learned indexes that perform well on benign data while remaining resilient to poisoning attacks, or more generally to unfavorable data insertions, is an important direction for future work.
Our results show that inserting only a small number of keys can substantially reduce the compression benefit of PGM-index, which is built on optimal PLAs.
This finding suggests that worst-case performance guarantees alone are insufficient to guarantee robustness: achieving robustness requires rethinking the objective function used to construct the index.
A seemingly simple defense is to pre-insert the poisoning keys identified by our method as virtual points.
However, this would merely incur the attack-induced increase in index size in advance, sacrificing the benign-data efficiency that the defense is intended to preserve.
Possible directions include relaxing the maximum-error constraint or using more flexible models in hard-to-approximate regions, as well as separating suspicious or hard-to-approximate keys from the main index structure.

A key challenge is that effective poisoning keys are difficult to distinguish from legitimate ones.
\cref{thm:maxerror_attack_structure} guarantees the existence of an optimal poison set consisting of consecutive blocks adjacent to legitimate keys.
Therefore, classical robust regression techniques~\cite{fischler1981random,chen2013approximating}, which suppress the influence of outliers, cannot be directly applied to regression over the CDF, as also discussed in prior work~\cite{kornaropoulos2022price,sato2026foundations}.
Developing effective defenses will therefore likely require accounting for these problem-specific characteristics.

\paragraph{Improving poisoning attacks and upper bounds.}
The Instance UB is up to $1.92\times$ the segment count achieved by \textsc{PGM-attack}.
This gap may stem from suboptimality of the attack, looseness of the bound, or both.
The attack's suboptimality arises mainly from its greedy strategy, which selects poisoning keys segment by segment from left to right.
The bound's looseness has two sources: the relaxation and the per-block upper bound.
In both cases, dynamic programming could achieve better results, but a naive implementation incurs quadratic time complexity, which is a major bottleneck.

\paragraph{Transferability of \textsc{PGM-attack}.}
As shown in \cref{tab:pgm_index_size_query_time_eps128}, the poisoning keys generated by \textsc{PGM-attack} remain largely effective across a diverse range of learned indexes.
We attribute this transferability to the fact that \textsc{PGM-attack} targets not an implementation-specific property of the PGM-index, but the PLA, a fundamental approximation component, which is used in many learned indexes.
Many learned-index architectures share the same underlying principle of partitioning the key space into local regions and approximating each region with a simple model.
The poisoning keys generated by \textsc{PGM-attack} deliberately induce many local regions that are hard to approximate, thereby inflating the number of models or the size of auxiliary structures even in indexes that adopt different model classes or partitioning strategies.

\paragraph{Attacks without knowledge of the PLA parameter.}
Although our threat model is white-box and assumes that the attacker knows the error parameter $\varepsilon$, we also evaluate the attack under the more realistic scenario in which this knowledge is unavailable.
Specifically, we generate poison keys assuming $\varepsilon_{\mathrm{attack}}$ and evaluate them against a PLA built with $\varepsilon_{\mathrm{true}}$, which may differ from $\varepsilon_{\mathrm{attack}}$.
The attack is most effective when $\varepsilon_{\mathrm{attack}} = \varepsilon_{\mathrm{true}}$, but it retains substantial effectiveness even under mismatch; complete results are provided in Appendix~\ref{app:additional_experiments}.
A possible explanation is that \textsc{PGM-attack} creates local irregularities that remain difficult to approximate with linear models across different PLA parameters.
These results show that our white-box attack also functions as a \textit{gray-box} attack, where the attacker knows the legitimate key set but not the PLA parameter.
This finding provides a basis for developing attacks under more restrictive knowledge assumptions.

\paragraph{Beyond a security threat model.}
Although we formulate our study in terms of a poisoning adversary, its significance extends beyond security.
Our attack can also be viewed as a worst-case analysis of data insertions into PLA-based indexes, identifying inputs that sharply increase the segment count, index size, and query time.
The generated poisoning keys can therefore serve as stress tests for learned indexes and help expose performance failure modes that should be addressed in the design of robust data structures.

\section{Conclusion}
\label{sec:conclusion}

We presented the first systematic study of poisoning attacks against optimal PLA construction.
Specifically, we investigated three problems: maximizing the maximum error in CDF linear regression, minimizing the number of legitimate keys covered under a maximum-error constraint, and maximizing the number of segments in an optimal PLA.
For each problem, we developed both theoretical results and practical algorithms.
Experiments on real-world and synthetic datasets showed that our attacks substantially degrade the memory efficiency of the PGM-index and other learned indexes.
These results reveal that optimal segment minimization does not guarantee robustness to adversarial insertions, motivating the development of robustness-aware learned indexes.


\clearpage

\bibliographystyle{ACM-Reference-Format}
\bibliography{
  bib/attack,
  bib/classic_index,
  bib/learned_data_structures,
  bib/learned_index,
  bib/learned_sort,
  bib/others,
  bib/pla
}

\clearpage
\appendix
\section{Proofs}
\label{app:proofs}
Here, we provide the proofs we omitted in the main text.

\subsection{Proof of \cref{lem:block_moving}}
\begin{proof}[Proof of \cref{lem:block_moving}]
The proof proceeds in three main steps: (1) defining the feasible region in a transformed parameter space, (2) deriving the translation of this region when the block is shifted, and (3) utilizing the Minkowski difference and convexity to establish the final error bound.
\cref{fig:proof_maxerror_lemma} illustrates the main idea.

\begin{figure}[t]
    \centering
    \includegraphics[width=\columnwidth]{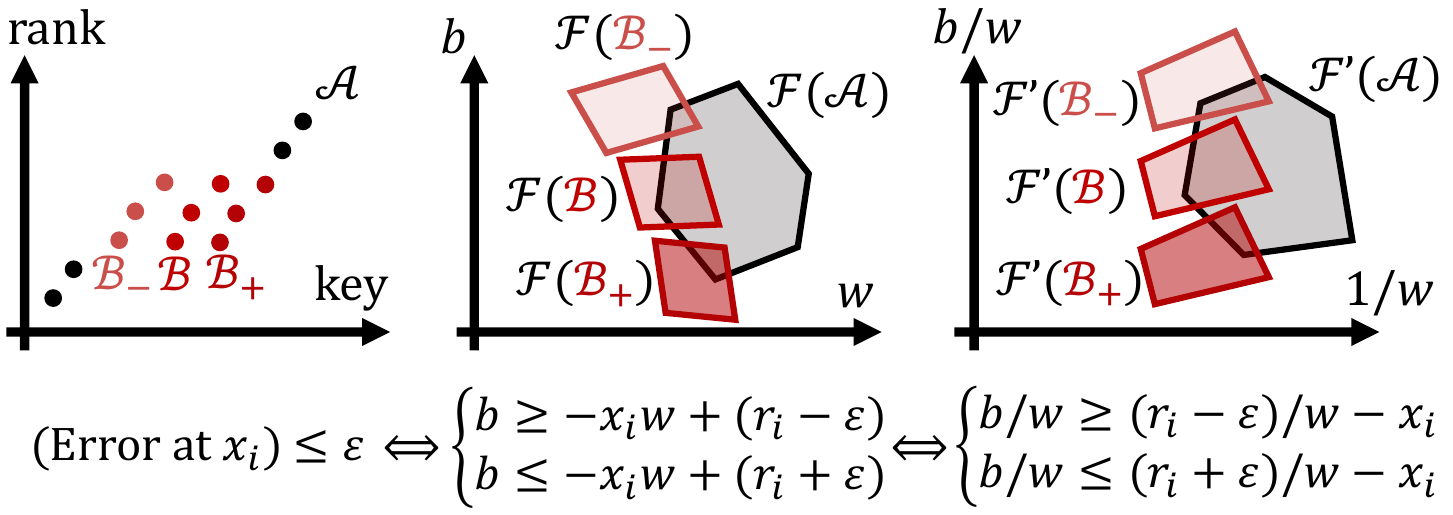}
    \caption{Illustration of the proof of \cref{lem:block_moving}.
    Let $\mathcal{B}_{-} \coloneq \mathcal{B}(\delta_{-})$, $\mathcal{B} \coloneq \mathcal{B}(0)$, and $\mathcal{B}_{+} \coloneq \mathcal{B}(\delta_{+})$.
    In the transformed $(1/w,b/w)$ space, moving the block along the key axis, i.e., from $\mathcal{B}_{-}$ to $\mathcal{B}$ to $\mathcal{B}_{+}$, corresponds to translating its feasible region along the $b/w$ direction.}
    \label{fig:proof_maxerror_lemma}
\end{figure}

\textbf{1. Feasible Region and Parameter Transformation.}
We fix the error threshold $\varepsilon \ge 0$.
For a given multiset $\mathcal{X} = \{x_1, x_2, \dots, x_N\}$ with $x_1 \leq x_2 \leq \dots \leq x_N$, we define the feasible region $\mathcal{F}(\mathcal{X})$ in the $(w, b)$ plane as:
\begin{equation}
    \mathcal{F}(\mathcal{X}) \coloneq \{ (w, b) \in \mathbb{R}^2 \mid \forall x_i \in \mathcal{X}, \, |w x_i + b - r_i| \leq \varepsilon \},
\end{equation}
where $r_i$ is the rank of $x_i$ in $\mathcal{X}$, i.e., $r_i \coloneq i$.
Since each inequality $|w x_i + b - r_i| \leq \varepsilon$ defines a convex region bounded by two parallel lines, their intersection $\mathcal{F}(\mathcal{X})$ is also a convex region.

Now, we apply the transformation $(u, v) = (1/w, b/w)$.
Since both keys $x_i$ and ranks $r_i$ are non-decreasing, the optimal slope $w$ is strictly positive.
Therefore, the transformation $(u, v) = (1/w, b/w)$ is a bijection.
In this transformed space, the inequality $|w x_i + b - r_i| \leq \varepsilon$ is equivalent to:
\begin{equation}
    (r_i - \varepsilon)u - v \leq x_i \leq (r_i + \varepsilon)u - v.
\end{equation}
Let $\mathcal{F}'(\mathcal{X})$ denote the feasible region in the $(u, v)$ space.
Being defined by a set of linear inequalities, $\mathcal{F}'(\mathcal{X})$ is also a convex polygon.

\textbf{2. Shift in the Transformed Space.} 
The condition $\delta \in [x_{l-1} - x_l, x_{r+1} - x_r]$ ensures that the rank $r_i$ for each key $x_i$ remains constant.
For the shifted block $\mathcal{B}(\delta) = \{ x_i + \delta \}_{i=l}^r$, the feasibility condition becomes:
\begin{align}
    &~ (r_i - \varepsilon)u - v \leq x_i + \delta \leq (r_i + \varepsilon)u - v \\
    \iff&~ (r_i - \varepsilon)u - (v + \delta) \leq x_i \leq (r_i + \varepsilon)u - (v + \delta).
\end{align}
This matches the condition for the original block $\mathcal{B}(0)$ evaluated at the point $(u, v + \delta)$. Therefore, the feasible region for the shifted block is simply the region for the original block shifted by $-\delta$ along the $v$-axis:
\begin{equation}
    \mathcal{F}'(\mathcal{B}(\delta)) = \mathcal{F}'(\mathcal{B}(0)) + \bm{d}(\delta)
\end{equation}
where $\bm{d}(\delta) \coloneq (0, -\delta)$.

\textbf{3. Overlap and Convexity.} 
Let $\mathcal{G} \coloneq \mathcal{F}'(\mathcal{A})$ and $\mathcal{H} \coloneq \mathcal{F}'(\mathcal{B}(0))$. Both $\mathcal{G}$ and $\mathcal{H}$ are convex sets. The combined dataset $\mathcal{A} \uplus \mathcal{B}(\delta)$ has an error of at most $\varepsilon$ if and only if $\mathcal{G} \cap (\mathcal{H} + \bm{d}(\delta)) \neq \emptyset$.

Using the Minkowski difference $\mathcal{D} \coloneq \mathcal{G} - \mathcal{H} = \{ \bm{g} - \bm{h} \mid \bm{g} \in \mathcal{G}, \bm{h} \in \mathcal{H} \}$, the intersection condition $\mathcal{G} \cap (\mathcal{H} + \bm{d}(\delta)) \neq \emptyset$ is equivalent to $\bm{d}(\delta) \in \mathcal{D}$. Because $\mathcal{G}$ and $\mathcal{H}$ are convex, their Minkowski difference $\mathcal{D}$ is also convex.

Now, let $\varepsilon = \max(E(\mathcal{A} \uplus \mathcal{B}(\delta_{-})), E(\mathcal{A} \uplus \mathcal{B}(\delta_{+})))$. By assumption, both extreme shifts achieve an error of at most $\varepsilon$, which implies:
\begin{equation}
    \bm{d}(\delta_{-}) = (0, -\delta_{-}) \in \mathcal{D} \quad \text{and} \quad \bm{d}(\delta_{+}) = (0, -\delta_{+}) \in \mathcal{D}.
\end{equation}
Since $\delta_{-} \leq 0 \leq \delta_{+}$, the shift $\delta = 0$ is a convex combination of $\delta_{-}$ and $\delta_{+}$. Consequently, the zero vector $\bm{d}(0) = (0, 0)$ is a convex combination of $\bm{d}(\delta_{-})$ and $\bm{d}(\delta_{+})$. By the convexity of $\mathcal{D}$, it follows that $\bm{d}(0) \in \mathcal{D}$.

This implies $\mathcal{G} \cap \mathcal{H} \neq \emptyset$, meaning there exists a valid parameter pair $(u, v)$ for the unshifted dataset. Thus, $E(\mathcal{A} \uplus \mathcal{B}(0)) \leq \varepsilon$, which completes the proof.
\end{proof}

\subsection{Proof of \cref{thm:maxerror_attack_structure}}
\begin{proof}[Proof of \cref{thm:maxerror_attack_structure}]
We prove the claim by contradiction.
Assume that no optimal solution satisfies the stated property.
That is, in every optimal solution, there exists at least one poison key that does not belong to any block adjacent to a legitimate key.

Among all optimal solutions, let $\mathcal{P}^\ast$ be one minimizing the number of \textit{isolated poison blocks}, where an isolated poison block is a maximal consecutive-integer subset of poisons whose endpoints are not adjacent to any legitimate key.
Formally, an \emph{isolated poison block} is a set of poison keys $\{a, a+1, \dots, b\} \subseteq \mathcal{P}^\ast$ such that $a-1 \notin \mathcal{K} \cup \mathcal{P}^\ast$ and $b+1 \notin \mathcal{K} \cup \mathcal{P}^\ast$.

By assumption, $\mathcal{P}^\ast$ contains at least one such block.
Let $\mathcal{B}$ be one of them, and let $\mathcal{A} \coloneq (\mathcal{K} \cup \mathcal{P}^\ast) \setminus \mathcal{B}$.
By \cref{lem:block_moving}, shifting $\mathcal{B}$ either left or right does not decrease the maximum-error value, while reducing the number of isolated poison blocks (see \cref{fig:proof_maxerror_theorem}).
This contradicts the choice of $\mathcal{P}^\ast$, completing the proof.
\end{proof}

\subsection{Proof of \cref{thm:duplicate_allowed_structure}}
Next, we provide the proof of \cref{thm:duplicate_allowed_structure}.
\begin{proof}[Proof of \cref{thm:duplicate_allowed_structure}]
The proof proceeds in three main steps: (1) showing that any optimal solution $\mathcal{P}^\ast$ uses the full budget $|\mathcal{P}^\ast|=\lambda$, (2) showing that there exists an optimal $\mathcal{P}^\ast$ with $\mathrm{Supp}(\mathcal{P}^\ast)\subseteq\mathcal{K}$, and (3) showing that there exists an optimal $\mathcal{P}^\ast$ supported on a single integer $k\in\mathcal{K}$ (all poisons at the same value).
\begin{figure}[t]
    \centering
    \includegraphics[width=\columnwidth]{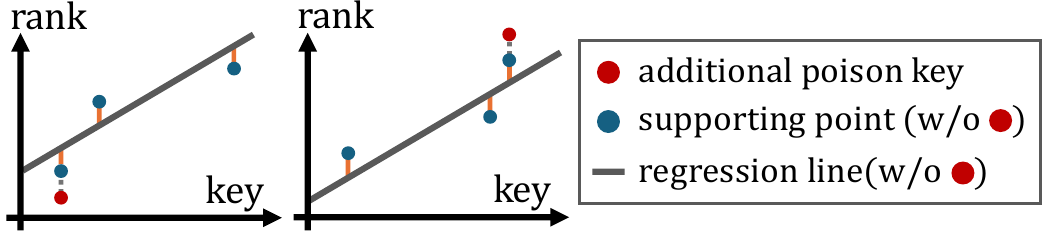}
    \caption{Step 1 of the proof of \cref{thm:duplicate_allowed_structure}: showing $|\mathcal{P}^\ast|=\lambda$.
    Regardless of how the three supporting points are arranged, we can always add a suitable poison to strictly increase the maximum error.}
    \label{fig:proof_maxerror_dup_theorem}
\end{figure}

\begin{figure}[t]
    \centering
    \includegraphics[width=\columnwidth]{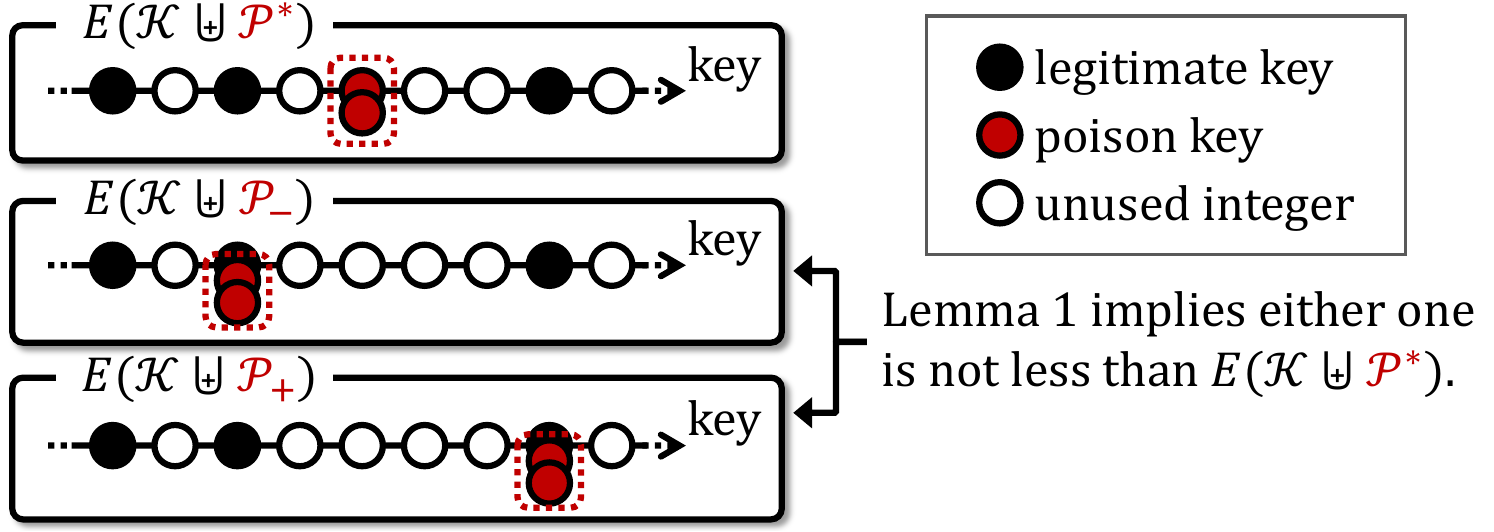}
    \caption{Step 2 of the proof of \cref{thm:duplicate_allowed_structure}: showing $\mathrm{Supp}(\mathcal{P}^\ast)\subseteq\mathcal{K}$.
    By shifting a block in $\mathrm{Supp}(\mathcal{P}^\ast)\setminus\mathcal{K}$ left or right, we can reduce $|\mathrm{Supp}(\mathcal{P}^\ast)\setminus\mathcal{K}|$ without decreasing the objective.}
    \label{fig:proof_maxerror_dup_theorem_2}
\end{figure}

\textbf{1. Full Budget at Optimality.}
It suffices to show that for any multiset $\mathcal{P}$ with $|\mathcal{P}|<\lambda$, there exists a key value $p$ such that adding one copy of $p$ to $\mathcal{P}$ strictly increases the maximum-error value $E(\mathcal{K}\uplus\mathcal{P})$.
Such a $p$ can be identified by examining the supporting points (see \cref{fig:proof_maxerror_dup_theorem}).

Let $\mathcal{P}$ be a multiset with $|\mathcal{P}|<\lambda$, let $\varepsilon = E(\mathcal{K}\uplus\mathcal{P})$, and consider an optimal maximum-error regression line for $\mathcal{K}\uplus\mathcal{P}$.
It is known that there exist three points $x_1,x_2,x_3 \in \mathcal{K}\uplus\mathcal{P}$ whose errors attain $\pm\varepsilon$ with alternating signs.
If the signs are $-\varepsilon,+\varepsilon,-\varepsilon$ (left panel of \cref{fig:proof_maxerror_dup_theorem}), choose $p=x_1$.
If the signs are $+\varepsilon,-\varepsilon,+\varepsilon$ (right panel), choose $p=x_3$.
In either case, no single line can regress the four points (i.e., $x_1$, $x_2$, $x_3$, and $p$) with maximum error $\varepsilon$, hence the optimum maximum error must strictly increase.
Therefore, whenever $|\mathcal{P}|<\lambda$, we can strictly increase $E(\mathcal{K}\uplus\mathcal{P})$ by adding one poison, which implies $|\mathcal{P}^\ast|=\lambda$.

\textbf{2. Support on Legitimate Keys.}
We prove the existence of an optimal solution supported only on legitimate keys by a contradiction argument similar to that of \cref{thm:maxerror_attack_structure}.
Assume that every optimal poison set contains at least one value that is not a legitimate key.
Among optimal solutions, choose $\mathcal{P}^\ast$ that minimizes $|\mathrm{Supp}(\mathcal{P}^\ast)\setminus\mathcal{K}|$.
Then there exists a contiguous block of support values in $\mathrm{Supp}(\mathcal{P}^\ast)\setminus\mathcal{K}$.
By shifting this block left or right (within the range where ranks do not change), we can apply the block-moving argument to obtain another optimal solution whose maximum error does not decrease, while $|\mathrm{Supp}(\mathcal{P}^\ast)\setminus\mathcal{K}|$ decreases (see \cref{fig:proof_maxerror_dup_theorem_2}).
This contradicts the minimality of $|\mathrm{Supp}(\mathcal{P}^\ast)\setminus\mathcal{K}|$.
Hence, there exists an optimal solution with $\mathrm{Supp}(\mathcal{P}^\ast)\subseteq\mathcal{K}$.

\textbf{3. Concentration on a Single Key.}
Finally, we show that any optimal poison set whose support is contained in $\mathcal{K}$ can be transformed into another optimal poison set supported on a single key value.

For $\bm{d}\in\mathbb{Z}_{\ge 0}^n$, let $\mathcal{Q}_{\mathcal{K}}(\bm{d})$ denote the multiset that contains $d_i$ copies of $k_i$ for each $i$.
Consider any $\bm{d}$ with $d_i>0$ and $d_j>0$ for some $i<j$.
Define $\bm{d}_{-}$ by moving all copies at $k_j$ to $k_i$, and define $\bm{d}_{+}$ by moving all copies at $k_i$ to $k_j$, that is,
\begin{align}
    \bm{d}   &= [d_1, \dots, d_{i-1}, & d_i & , \dots, & d_j & , d_{j+1}, \dots, d_n], \\
    \bm{d}_{-} &= [d_1, \dots, d_{i-1}, & d_i+d_j & , \dots, & 0 & , d_{j+1}, \dots, d_n], \\
    \bm{d}_{+} &= [d_1, \dots, d_{i-1}, & 0 & , \dots, & d_i+d_j & , d_{j+1}, \dots, d_n].
\end{align}
We claim that for any $\varepsilon\ge 0$,
\begin{align}
    &~ E(\mathcal{K}\uplus\mathcal{Q}_{\mathcal{K}}(\bm{d}_{-})) \leq \varepsilon
    ~\land~
    E(\mathcal{K}\uplus\mathcal{Q}_{\mathcal{K}}(\bm{d}_{+})) \leq \varepsilon \\
    \label{eq:duplicate_allowed_structure_condition}
    \implies~&~
    E(\mathcal{K}\uplus\mathcal{Q}_{\mathcal{K}}(\bm{d})) \leq \varepsilon.
\end{align}

\begin{figure}[t]
    \centering
    \includegraphics[width=0.65\columnwidth]{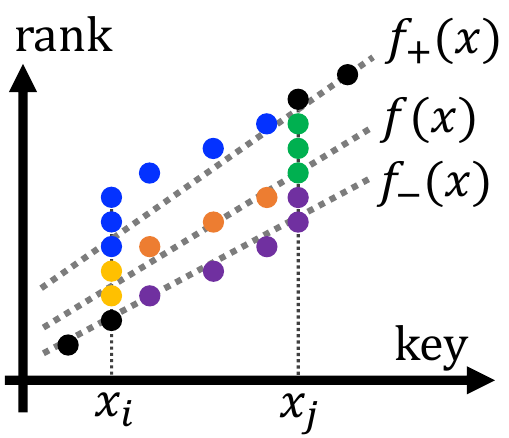}
    \caption{Step 3 of the proof of \cref{thm:duplicate_allowed_structure}. The figure illustrates the case where $d_i = 2$ and $d_j = 3$. Property 1 shows that the error of $f$ with respect to the black, yellow, and green points is at most $\varepsilon$. Property 2 shows that the error for the orange points is at most $\varepsilon$.}
    \label{fig:proof_maxerror_dup_theorem_3}
\end{figure}

To see this, let $f_{-}(x)$ and $f_{+}(x)$ be regression lines achieving maximum error at most $\varepsilon$ for
$\mathcal{K}\uplus\mathcal{Q}_{\mathcal{K}}(\bm{d}_{-})$ and $\mathcal{K}\uplus\mathcal{Q}_{\mathcal{K}}(\bm{d}_{+})$, respectively.
Define the convex combination
\begin{equation}
    f(x) \coloneq \frac{d_i \cdot f_{+}(x) + d_j \cdot f_{-}(x)}{d_i + d_j}.
\end{equation}
Then, we can show that $f(x)$ attains maximum error at most $\varepsilon$ on $\mathcal{K}\uplus\mathcal{Q}_{\mathcal{K}}(\bm{d})$, which establishes \cref{eq:duplicate_allowed_structure_condition}.
To see this, it is sufficient to show the following two properties (each of which can be proved using elementary methods):
\begin{itemize}[leftmargin=1.5em]
    \item \textbf{Property 1:} If a point $(x, y)$ has an error of at most $\varepsilon$ under both $f_{-}$ and $f_{+}$, then its error under $f$ is also at most $\varepsilon$.
    \item \textbf{Property 2:} If the points $(x, y)$ and $(x, y + d_i + d_j)$ have errors of at most $\varepsilon$ under $f_{-}$ and $f_{+}$, respectively, then the error of the point $(x, y + d_i)$ under $f$ is at most $\varepsilon$.    
\end{itemize}
We provide \cref{fig:proof_maxerror_dup_theorem_3} to aid understanding. 
$\mathcal{K}\uplus\mathcal{Q}_{\mathcal{K}}(\bm{d}_{-})$ corresponds to the black, purple, and green point sets; 
$\mathcal{K}\uplus\mathcal{Q}_{\mathcal{K}}(\bm{d}_{+})$ corresponds to the black, yellow, and blue point sets; 
and $\mathcal{K}\uplus\mathcal{Q}_{\mathcal{K}}(\bm{d})$ corresponds to the black, yellow, orange, and green point sets. 
By Property 1, it follows that the error of the black, yellow, and green point sets under $f$ is at most $\varepsilon$. 
By Property 2, it follows that the error of the orange point set under $f$ is at most $\varepsilon$. 
Thus, via Properties 1 and 2, it is shown that the error of $\mathcal{K}\uplus\mathcal{Q}_{\mathcal{K}}(\bm{d})$ under $f$ is at most $\varepsilon$.

Now, setting $\varepsilon = \max\bigl(E(\mathcal{K}\uplus\mathcal{Q}_{\mathcal{K}}(\bm{d}_{-})),\, E(\mathcal{K}\uplus\mathcal{Q}_{\mathcal{K}}(\bm{d}_{+}))\bigr)$ in \cref{eq:duplicate_allowed_structure_condition} yields immediately
\begin{equation}
\label{eq:duplicate_merge_max_bound}
    E(\mathcal{K}\uplus\mathcal{Q}_{\mathcal{K}}(\bm{d})) \leq \max\bigl(E(\mathcal{K}\uplus\mathcal{Q}_{\mathcal{K}}(\bm{d}_{-})),\, E(\mathcal{K}\uplus\mathcal{Q}_{\mathcal{K}}(\bm{d}_{+}))\bigr).
\end{equation}
Therefore, whenever poisons are distributed across multiple key values, we can merge their mass to one endpoint without decreasing the maximum error.
By repeatedly applying this merging operation, we obtain an optimal solution whose poisons are supported on a single integer $k\in\mathcal{K}$.
\end{proof}

\subsection{Proof of \cref{thm:m_opt_upper_bound}}

\begin{figure}[t]
    \centering
    \includegraphics[width=\columnwidth]{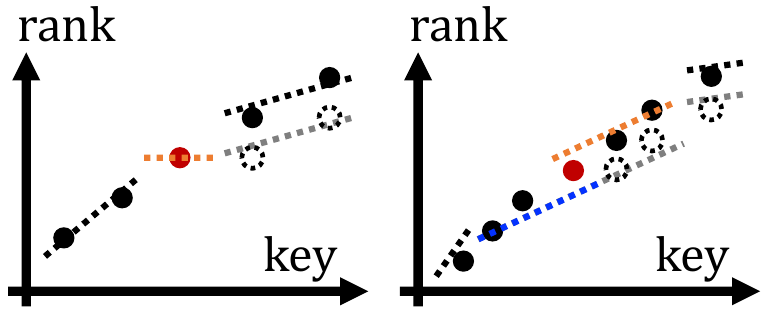}
    \caption{Proof of \cref{thm:m_opt_upper_bound}.
    (1) Case 1 (left): when $p$ lies between two adjacent segments, we add a new segment covering only $p$.
    (2) Case 2 (right): when $p$ lies inside an existing segment, we split the segment containing $p$ into two.}
    \label{fig:proof_upper_bound_agnostic}
\end{figure}

\begin{proof}[Proof of \cref{thm:m_opt_upper_bound}]
We first prove the case where one poison is added to $\mathcal{K}$; that is, we show that adding a single point $p$ to $\mathcal{K}$ satisfies $m_{\mathrm{opt}}(\mathcal{K}\cup\{p\},\varepsilon) \leq m_{\mathrm{opt}}(\mathcal{K},\varepsilon)+1$.
Now, let $\mathcal{S}=\bigl((s_1,e_1,f_1), (s_2,e_2,f_2), \dots, (s_m,e_m,f_m)\bigr)$ be a PLA for $\mathcal{K}$ achieving $m \coloneq m_{\mathrm{opt}}(\mathcal{K},\varepsilon)$.
The proof proceeds by considering the following two cases based on the position of $p$: \textbf{Case 1:} $p$ lies between two adjacent segments in $\mathcal{S}$, and \textbf{Case 2:} $p$ lies inside an existing segment in $\mathcal{S}$.

\textbf{Case 1:}
More precisely, we consider the case where there exists some $j \in [m-1]$ such that $x_{e_j} < x_p < x_{s_{j+1}}$. In this case, we can construct a PLA for $\mathcal{K} \cup \{p\}$ with $m + 1$ segments as follows (refer to the left side of \cref{fig:proof_upper_bound_agnostic}):
We retain the first $j$ segments (those preceding $p$) as they are, increase the function values of the $(j+1)$-th through $m$-th segments by one, and insert a new segment dedicated to covering $p$.
It is clear that the approximation error remains at most $\varepsilon$ for all keys in $\mathcal{K}$ and the poisoned point $p$.

\textbf{Case 2:}
More precisely, we consider the case where there exists some $j \in [m]$ such that $x_{s_j} < x_p < x_{e_j}$. In this case, a PLA with $m + 1$ segments can be constructed as follows (refer to the right side of \cref{fig:proof_upper_bound_agnostic}):
We keep the first $j-1$ segments unchanged and increase the function values of the $(j+1)$-th through $m$-th segments by one.
We then split the $j$-th segment into two parts at $x_p$, and increase the function value of the second part by one.
Under the assumption that $\varepsilon \geq 1/2$, the point $p$ can always be covered by at least one of these two segments within an error of $\varepsilon$. Thus, the approximation error for all keys and the poisoned point is guaranteed to be at most $\varepsilon$.

From \textbf{Case 1} and \textbf{Case 2}, it follows that $m_{\mathrm{opt}}(\mathcal{K}\cup\{p\},\varepsilon) \leq m_{\mathrm{opt}}(\mathcal{K},\varepsilon) + 1$. We now complete the proof for the general case using induction on $|\mathcal{P}|$.
Assume that $m_{\mathrm{opt}}(\mathcal{K}\cup\mathcal{P},\varepsilon) \leq m_{\mathrm{opt}}(\mathcal{K},\varepsilon) + k$ holds for $|\mathcal{P}| = k$.
For $|\mathcal{P}| = k+1$, let $\mathcal{P}=\mathcal{P}'\cup\{p\}$ where $|\mathcal{P}'|=k$. Then, we have:
\begin{align}
    m_{\mathrm{opt}}(\mathcal{K}\cup\mathcal{P},\varepsilon) 
    &= m_{\mathrm{opt}}((\mathcal{K}\cup\mathcal{P}')\cup\{p\},\varepsilon) \\
    &\leq m_{\mathrm{opt}}(\mathcal{K}\cup\mathcal{P}',\varepsilon) + 1 \\
    &\leq m_{\mathrm{opt}}(\mathcal{K},\varepsilon) + k + 1,
\end{align}
where the first inequality follows from the result for a single-point insertion, and the second inequality follows from the inductive hypothesis. This concludes the proof for any $\mathcal{P}$.
\end{proof}

\subsection{Proof of \cref{thm:m_opt_upper_bound_strict}}

\begin{figure}[t]
    \centering
    \includegraphics[width=\columnwidth]{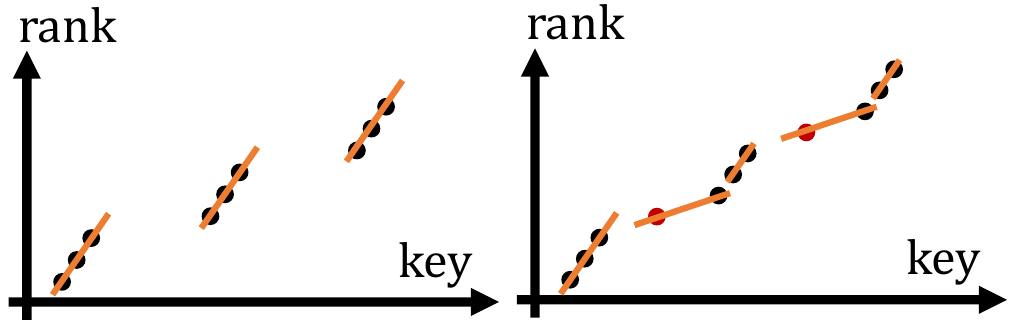}
    \caption{An example of the tight upper bound (\cref{thm:m_opt_upper_bound_strict}), where $\lambda = 2$ and $\varepsilon = 0$.
    Before poisoning (left), $m_{\mathrm{opt}}(\mathcal{K},\varepsilon) = 3$ and after poisoning (right), $m_{\mathrm{opt}}(\mathcal{K}\cup\mathcal{P},\varepsilon) = 5$.}
    \label{fig:proof_upper_bound_strict}
\end{figure}

\begin{proof}[Proof of \cref{thm:m_opt_upper_bound_strict}]
The proof is constructive. Let $b \coloneq 4\varepsilon + 3$ and $B \coloneq (2\varepsilon + 1)(2\varepsilon + 2) + 2$. We define a \textit{block} as a set of $b$ consecutive integers. We construct $\mathcal{K}$ by placing these blocks with a gap of $2B$ between each pair of adjacent blocks. The poisoning set $\mathcal{P}$ is then placed at the center of these gaps (that is, $|\mathcal{P}| = \lambda$).
Precisely, we have
\begin{align}
    \mathcal{K} &= \{2iB + j \mid i \in \{0, 1, \dots, \lambda\}, j \in \{0, 1, \dots, b-1\}\}, \\
    \mathcal{P} &= \{(2i + 1)B + (2 \varepsilon + 1) \mid i \in \{0, 1, \dots, \lambda-1\}\}.
\end{align}

In this configuration, we have $m_{\mathrm{opt}}(\mathcal{K},\varepsilon) = \lambda + 1$. This is because an optimal PLA covers each block with exactly one segment, and no single segment can cover more than one block. This property is guaranteed by the fact that $b$ and $B$ are chosen to be sufficiently large.

Furthermore, it can be shown that $m_{\mathrm{opt}}(\mathcal{K}\cup\mathcal{P},\varepsilon) = 2\lambda + 1$. This can be verified by greedily determining the segments from left to right:
\begin{itemize}[leftmargin=1.5em]
    \item The $1$-st segment covers the first block of $\mathcal{K}$.
    \item For $i \in \{1, 2, \dots, \lambda\}$, the $2i$-th segment covers the $i$-th poisoned point in $\mathcal{P}$ and the first $2\varepsilon + 1$ elements of the $(i+1)$-th block.
    \item For $i \in \{1, 2, \dots, \lambda\}$, the $(2i + 1)$-th segment covers the remaining $2\varepsilon + 2$ elements of the $(i+1)$-th block.
\end{itemize}

Thus, in this specific case, the equality in \cref{eq:m_opt_upper_bound_strict} is satisfied, completing the proof.
\end{proof}

\section{\texorpdfstring{Details on \textsc{DI-Consecutive}}{Details on DI-Consecutive}}
\label{app:detail_discrete_intercept_consec}

In this section, we provide the technical details omitted from the main text regarding \textsc{DI-Consecutive} (Discrete-Intercept Consecutive method) proposed in \cref{sec:methods_for_obtaining_poison_solutions_num_coverable_keys}.
The overall procedure is summarized in \cref{alg:di_consecutive}.

\begin{algorithm}[t]
    \caption{\textsc{DI-Consecutive}}
    \label{alg:di_consecutive}
    \begin{algorithmic}[1]
    \Input Legitimate key set $\mathcal{K}$, poison budget $\lambda$
    \Output Poison set $\mathcal{P}^\ast$
    \State Precompute slope intervals $\mathcal{I}$ and build sparse tables
    \State $C^\ast \gets C_\varepsilon(\mathcal{K}, \emptyset)$, $\mathcal{P}^\ast \gets \emptyset$
    \For{each candidate consecutive poison set $\mathcal{P}$ with $|\mathcal{P}| = \lambda$}
        \State $C_\mathcal{P} \gets C_\varepsilon(\mathcal{K}, \mathcal{P})$ \Comment{time $\mathcal{O}(|\mathcal{I}|\log c)$ via sparse tables}
        \If{$C_\mathcal{P} < C^\ast$}
            \State $C^\ast \gets C_\mathcal{P}$, $\mathcal{P}^\ast \gets \mathcal{P}$
        \EndIf
    \EndFor
    \State \Return $\mathcal{P}^\ast$
    \end{algorithmic}
\end{algorithm}

\textbf{Main Idea.}
The core objective of \textsc{DI-Consecutive} is to optimize the evaluation of each candidate of consecutive poisons, i.e., to compute the number of covered legitimate keys.
We reduce the computational complexity from the conventional $\mathcal{O}(c)$, where $c$ is the number of covered legitimate keys without poisoning, to $\mathcal{O}(|\mathcal{I}| \log c)$, where $|\mathcal{I}|$ is the number of candidate intercepts.
In typical settings where $c \geq 10^4$ and $|\mathcal{I}| \approx 41$, this improvement is substantial.

\Cref{fig:swing_detail} illustrates the intuition behind our approach.
The figure shows the evaluation of a candidate consecutive poisons starting from $k_3+1$.
\textsc{DI-Consecutive} independently computes the range of slopes that can cover:
(1) legitimate keys preceding the poisons,
(2) legitimate keys following the poisons, and
(3) the consecutive integer sequence containing the poisons.
The candidate poisons and keys are coverable if and only if these three slope ranges have a non-empty intersection.

\begin{figure}[t]
    \centering
    \includegraphics[width=\columnwidth]{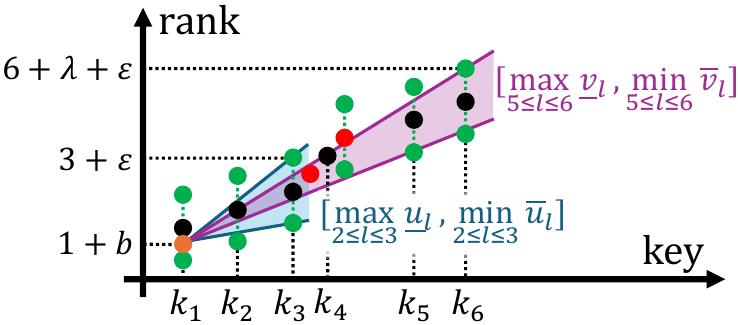}
    \caption{
        An example of \textsc{DI-Consecutive} evaluating a candidate poison sequence (starting from $k_3+1$ with $\lambda=2$). 
        For a specific intercept $b$, slope ranges for pre-poison ($k_2, k_3$) and post-poison ($k_5, k_6$) keys are retrieved in $\mathcal{O}(1)$ via sparse tables.
        The slope range for the poisons is also computed in $\mathcal{O}(1)$ by finding the range covering $k_3 + 1$ and the largest poison.
    }
    \label{fig:swing_detail}
\end{figure}

\subsection{Precomputation}
For each candidate intercept $b \in \mathcal{I}$, we perform a precomputation step with a time complexity of $\mathcal{O}(c \log c)$.
Consequently, the total precomputation complexity is $\mathcal{O}(|\mathcal{I}| c \log c)$.
Below, we describe the precomputation step for a particular $b$.

First, we compute $\underline{u}_i, \overline{u}_i, \underline{v}_i, \overline{v}_i$ for $i \in \{2, 3, \dots, c\}$ as follows:
\begin{equation}
    \underline{u}_i \coloneq \frac{(i - \varepsilon) - (1 + b)}{k_i - k_1}, \quad \overline{u}_i \coloneq \frac{(i + \varepsilon) - (1 + b)}{k_i - k_1},
\end{equation}
\begin{equation}
    \underline{v}_i \coloneq \frac{(i + \lambda - \varepsilon) - (1 + b)}{k_i - k_1}, \quad \overline{v}_i \coloneq \frac{(i + \lambda + \varepsilon) - (1 + b)}{k_i - k_1}.
\end{equation}
The interval $[\underline{u}_i, \overline{u}_i]$ represents the slope range required to cover $k_i$ when it precedes the poisons.
For example, in \cref{fig:swing_detail}, the slope of the line connecting $(k_1, 1+b)$ and $(k_3, 3+\varepsilon)$ corresponds to $\overline{u}_3$.
Similarly, $[\underline{v}_i, \overline{v}_i]$ represents the slope range required to cover the consecutive integer sequence containing the poisons when $k_i$ follows the poisons.
For instance, in \cref{fig:swing_detail}, the slope of the line connecting $(k_1, 1+b)$ and $(k_6, 6+\lambda+\varepsilon)$ corresponds to $\overline{v}_6$.
These values can be computed in $\mathcal{O}(c)$ time.

Furthermore, we construct sparse tables over these four arrays to support $\mathcal{O}(1)$ Range Maximum Queries for $\bm{\underline{u}}$ and $\bm{\underline{v}}$, and $\mathcal{O}(1)$ Range Minimum Queries for $\bm{\overline{u}}$ and $\bm{\overline{v}}$.
This construction requires $\mathcal{O}(c \log c)$ time.

\subsection{Evaluation}
The evaluation phase computes, for a given candidate consecutive poison set $\mathcal{P}$, the maximum number of legitimate keys that can be covered.

To achieve a time complexity of $\mathcal{O}(|\mathcal{I}| \log c)$ for each candidate consecutive poison set, we design the evaluation so that, for each intercept $b \in \mathcal{I}$, the number of coverable legitimate keys can be computed in $\mathcal{O}(\log c)$ time.
This $\mathcal{O}(\log c)$ computation is realized via binary search combined with an $\mathcal{O}(1)$ feasibility check, which determines whether, for the given poison set, intercept, and keys, there exists a slope under which they are coverable.

Below, we first describe the \textbf{Maximal Coverage Search}, which computes in $\mathcal{O}(\log c)$ time the number of legitimate keys coverable for a fixed poison set and intercept.
We then explain the \textbf{Coverability Verification}, which determines in $\mathcal{O}(1)$ time whether the given poison set, intercept, and keys are feasible.

\textbf{Maximal Coverage Search.}
First, for a given candidate consecutive poison set $\mathcal{P}$, we find the smallest legitimate key greater than the largest poison in the set, and denote its index by $j$.
For instance, in \cref{fig:swing_detail}, $j = 5$.
A straightforward method would inspect the integers not occupied by legitimate keys in increasing order.
In the worst case, this takes $\mathcal{O}(c+\lambda)$ time.
To reduce this cost, we precompute the number of unused integers between each pair of consecutive legitimate keys, as well as their prefix sums.
Using this information, we can perform a binary search to determine where the largest poison falls relative to the gaps between legitimate keys, thereby identifying the first legitimate key after the poisons in $\mathcal{O}(\log c)$ time.

Next, for each intercept $b \in \mathcal{I}$, we perform a binary search to find the maximum $r \in \{j, \dots, c\}$ such that $\{k_1, \dots, k_r\} \cup \mathcal{P}$ are coverable.
At each step, we invoke the \textbf{Coverability Verification} (described below) to determine whether $\{k_1, \dots, k_r\} \cup \mathcal{P}$ are coverable.
This yields the maximum feasible $r$ for the given intercept in $\mathcal{O}(\log c)$ time.

After evaluating all $b \in \mathcal{I}$, we select the intercept that yields the largest $r$, which corresponds to the maximum number of covered legitimate keys for the given poison set.

\textbf{Coverability Verification.}
The binary search requires verifying whether the set $\{k_1, \dots, k_r\} \cup \mathcal{P}$ is coverable for a given intercept $b$. We determine the existence of a valid slope in $\mathcal{O}(1)$ time. Such a slope exists if and only if the following three ranges have a non-empty intersection:
(1) the pre-poison range,
(2) the post-poison range, and
(3) the poison region range.

(1) The pre-poison range is the slope range covering legitimate keys preceding the poisons (i.e., from $k_2$ to $k_i$). This is computed as
\begin{equation}
\left[ \max_{2 \leq l \leq i} \underline{u}_l,\; \min_{2 \leq l \leq i} \overline{u}_l \right].
\end{equation}
We can obtain this in $\mathcal{O}(1)$ time using a sparse table constructed during the precomputation step.

(2) The post-poison range is the slope range covering legitimate keys following the poisons (i.e., from $k_j$ to $k_r$). This is computed as
\begin{equation}
\left[ \max_{j \leq l \leq r} \underline{v}_l,\; \min_{j \leq l \leq r} \overline{v}_l \right].
\end{equation}
We can obtain this in $\mathcal{O}(1)$ time using the precomputed sparse table.

(3) The poison region range is the slope range covering the consecutive integers including the poisons (i.e., the consecutive integers from $k_i + 1$ to the largest poison). This slope range is obtained by intersecting the slope range that covers $k_i + 1$ with the slope range that covers the largest poison. This is because, by linearity, if the error at both endpoints is at most $\varepsilon$, then the error at every point between them is also guaranteed to be at most $\varepsilon$.

\section{Details on the Instance-Dependent Upper Bound Algorithm}
\label{app:instance_upper_bound_algorithm}

\begin{figure}[t]
    \centering
    \includegraphics[width=\columnwidth]{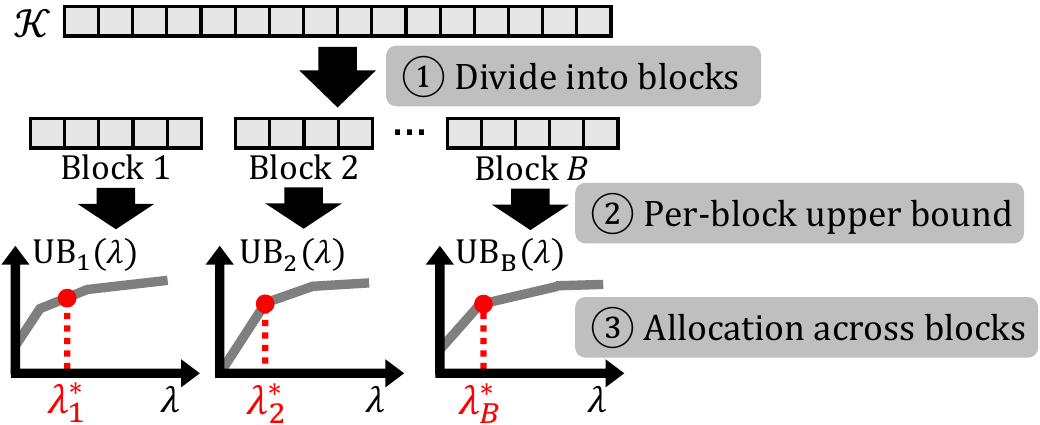}
    \caption{Overview of the instance-dependent upper-bound algorithm: partition $\mathcal{K}$ into blocks with fixed slopes, compute per-block upper bounds, and allocate the poison budget across blocks to maximize their sum.}
    \label{fig:upper_bound_algorithm_overview}
\end{figure}

In this section, we provide the details omitted in the main text regarding the instance-dependent upper-bound algorithm proposed in \cref{sec:m_opt_instance_dependent_upper_bound}.
\cref{fig:upper_bound_algorithm_overview} illustrates the pipeline: partitioning into blocks, upper-bounding the number of segments in each block, and then allocating the total poison budget across blocks.

\subsection{Dividing into Blocks}

We partition $\mathcal{K}$ into blocks using an $\alpha\varepsilon$-PLA with $\alpha\geq1$ and fix the slope within each block to that of the corresponding PLA segment.
This restricts the PLA construction by requiring each block to use a predetermined slope and by treating the blocks independently.
We also strengthen the attacker by allowing duplicate poisons.
Since these modifications weaken the index constructor and strengthen the attacker, the maximum number of segments in the resulting relaxed problem upper-bounds that in the original problem.

In our implementation, we evaluate the upper bound for $\alpha \in \{1.0,1.2,1.4,1.6,1.8,2.0\}$ and take the minimum.
Since each value of $\alpha$ yields a valid upper bound, taking their minimum preserves the upper-bound guarantee.

\subsection{Per-Block Upper Bound}

\begin{figure}[t]
    \centering
    \includegraphics[width=\columnwidth]{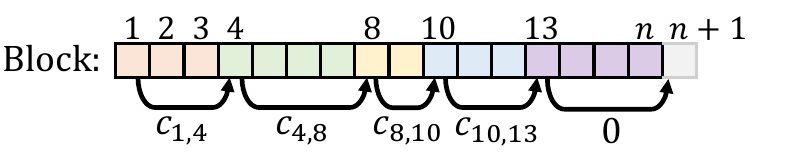}
    \caption{Example of a path in the DAG for upper-bounding the number of segments in a block.
    A path represents a segmentation; its length gives the number of segments, and its total weight lower-bounds the required number of poisons.}
    \label{fig:upper_bound_algorithm_graph}
\end{figure}

We next derive an upper bound on the number of segments in a single block.
Let $\{k_1,k_2,\dots,k_n\}$ be the keys in the block, and let $w$ be its fixed slope.

First, we prove the following lemma concerning the minimum number of poisons required to separate two keys.
\begin{theorem_box}
\begin{lemma}
\label{lem:closed_form_cij}
Let $\{k_1, k_2, \dots, k_n\}$ be the keys in the block, and let $w$ be its fixed slope.
Define $A_l \coloneq l - w k_l$ for $l \in [n]$.
Then, the attacker requires at least $\lceil c_{i,j} \rceil$ poisons to make the segment starting at $k_i$ end before $k_j$, where
\begin{equation}
\label{eq:num_poison_to_separate_keys}
    c_{i,j} \coloneq \max\left(0, 2\varepsilon - \left(\max_{i \leq l \leq j} A_l - \min_{i \leq l \leq j} A_l \right)\right).
\end{equation}
\end{lemma}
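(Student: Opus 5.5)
We prove the contrapositive: if fewer than $c_{i,j}$ poisons are inserted among the keys $k_i,\dots,k_j$, then a single segment with the fixed slope $w$ still covers all of $k_i,\dots,k_j$. Under the relaxation the slope is fixed, so the only free parameter of a segment is its intercept $b$. A point $(x,r)$ is covered iff $|wx+b-r|\le\varepsilon$, i.e., iff $b\in[r-wx-\varepsilon,\,r-wx+\varepsilon]$. A set of points is therefore covered by one segment iff the spread of the residuals $r-wx$ over the set is at most $2\varepsilon$. The whole argument reduces to controlling this spread after poisoning.

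First, fix a poison multiset placed within $[k_i,k_j]$ (duplicates allowed), and let $p_l$ be the number of poisons inserted before $k_l$, counted from $k_i$. Ranks are only shifted by a common offset, which is absorbed into $b$, so we may take the poisoned rank of $k_l$ to be $l+p_l$. Its residual is then $A_l+p_l$, where $p_l$ is nondecreasing in $l$ and satisfies $0\le p_l\le q$, with $q$ the total number of poisons in the range.

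Second, bound the residuals of the poisons themselves. Consider a poison at value $x$ with $k_{l}\le x\le k_{l+1}$, carrying rank $t$ between $l+p_l$ and $l+1+p_{l+1}$. Since $w\ge 0$ and $x\in[k_l,k_{l+1}]$, its residual $t-wx$ is sandwiched between $\min$ and $\max$ of neighbouring quantities of the form $A_l+p_l$ and $A_{l+1}+p_{l+1}$, up to the $\pm1$ slack from integer ranks. Handling this slack is the main obstacle. I would try to show that the residual of any poison lies in $[\min_l A_l,\ \max_l A_l+q]$; the poison ranks interpolate between legitimate ranks, which should give this. If the slack cannot be absorbed, the bound weakens by one, and I would fall back on counting the poison with the maximal residual as part of $q$.

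Third, combine the bounds. Every residual in the range lies in $[\min A,\ \max A+q]$, so the spread is at most $(\max A-\min A)+q$. If $q<c_{i,j}$, the spread is below $2\varepsilon$, and some intercept $b$ covers $k_i$ through $k_j$ together with all poisons between them. Hence the segment starting at $k_i$ does not end before $k_j$. Since $q$ is an integer, at least $\lceil c_{i,j}\rceil$ poisons are needed. The $\max(0,\cdot)$ in the definition of $c_{i,j}$ only covers the trivial case where the spread already exceeds $2\varepsilon$, in which case zero poisons suffice and the bound is vacuous.
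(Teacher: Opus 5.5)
Your plan is the right one, but as written it has a hole exactly at the step you flagged. You never prove that the poisons' own residuals lie in $[\min_l A_l,\ \max_l A_l+q]$. The fallback you describe, losing one, would only give $\lceil c_{i,j}\rceil-1$, which is not the lemma.

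In fact there is no slack to absorb. Put $k_l$ at rank $l+p_l$. The $p_{l+1}-p_l$ poisons ranked between $k_l$ and $k_{l+1}$ then have ranks $t$ with $l+1+p_l\le t\le l+p_{l+1}$. Take such a poison at value $x\in[k_l,k_{l+1}]$. Its residual $t-wx$ is affine in $x$, so it lies between the two endpoint values
\[
t-wk_l\in[A_l+p_l+1,\ A_l+p_{l+1}] \quad\text{and}\quad t-wk_{l+1}\in[A_{l+1}+p_l,\ A_{l+1}+p_{l+1}-1].
\]
Both intervals sit inside $[\min_l A_l,\ \max_l A_l+q]$ because $0\le p_l\le p_{l+1}\le q$. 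This argument needs no sign assumption on $w$, so you can drop the appeal to $w\ge 0$.

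With that inserted, your third paragraph goes through unchanged. Every residual, legitimate or poison, lies in an interval of length $(\max_l A_l-\min_l A_l)+q$. Hence $q<c_{i,j}$ leaves $k_i,\dots,k_j$ together with the intervening poisons coverable by one fixed-slope segment.

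For comparison, the paper argues incrementally. Under the fixed slope, the error of $k_l$ is $|b-A_l|$, so the unpoisoned error is half the range of the $A_l$. It then claims each poison raises this error by at most $1/2$, because the intercept can be shifted by $1/2$. The errors of the poison points themselves are not explicitly checked there. Your global spread bound is the same idea applied to all $q$ poisons at once. Once the sandwich bound above is supplied, it is the more complete of the two arguments.
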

\end{theorem_box}
\begin{proof}[Proof of \cref{lem:closed_form_cij}]
Under a fixed slope $w$, let $b$ be the optimal intercept for the set of keys $\{k_i, \dots, k_j\}$.
The prediction error of the position for each key $k_l$ ($i \leq l \leq j$) is expressed as $|(w k_l + b) - l| = |b - (l - w k_l)| = |b - A_l|$.
Therefore, the optimal intercept that minimizes the maximum error in this segment is $b^\ast = \frac{1}{2}(\max_{i \leq l \leq j} A_l + \min_{i \leq l \leq j} A_l)$, and the initial maximum error $E_{\mathrm{init}}$ before poisoning is given by:
\begin{equation}
    E_{\mathrm{init}} = \frac{1}{2}\left(\max_{i \leq l \leq j} A_l - \min_{i \leq l \leq j} A_l\right)
\end{equation}

Moreover, we can show that each duplicate poison can increase the maximum error by at most $1/2$.
This holds because the linear regression can limit the increase in the maximum error to $1/2$ by increasing the intercept $b$ by $1/2$ for each added poison.

Therefore, at least
\begin{equation}
    \left\lceil \max\left(0, \frac{\varepsilon - E_{\mathrm{init}}}{1/2} \right) \right\rceil = \lceil c_{i,j} \rceil
\end{equation}
poisons are needed to make the error exceed $\varepsilon$.
\end{proof}

We now formulate the per-block problem as a path problem on a DAG.
The DAG contains nodes $1,\dots,n+1$, where node $n+1$ is a sentinel representing the end of the block.
For each $1\leq i<j\leq n$, we add an edge from $i$ to $j$ with weight $c_{i,j}$.
We also add an edge of weight zero from each node $i\in[n]$ to the sentinel node $n+1$; that is, define $c_{i,n+1} = 0$ for all $i \in [n]$.

As illustrated in \cref{fig:upper_bound_algorithm_graph}, each path from node $1$ to node $n+1$ represents a segmentation of the block.
The starting node of each edge corresponds to the starting position of a segment, so the number of edges in the path equals the number of segments.
By \cref{lem:closed_form_cij}, the weight $c_{i,j}$ lower-bounds the number of poisons required to force the segment starting at $k_i$ to end before $k_j$.
The final edge has weight zero because the last segment naturally terminates at the end of the block.
Consequently, the total weight of a path lower-bounds the number of poisons required to realize the corresponding segmentation.

Next, to reduce the computational complexity of finding shortest paths in the DAG, we prove the following lemma regarding the Monge property of $c_{i,j}$.

\begin{theorem_box}
\begin{lemma}
\label{lem:monge_property_cij}
The cost function $c_{i,j}$ satisfies the Monge property, i.e., for any $1 \leq i < i' < j < j' \leq n$,
\begin{equation}
    c_{i,j} + c_{i',j'} \leq c_{i,j'} + c_{i',j}.
\end{equation}
\end{lemma}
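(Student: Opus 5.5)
The plan is to separate the combinatorial structure of the ranges from the shape of the outer function. Write $R(i,j) \coloneq \max_{i \leq l \leq j} A_l - \min_{i \leq l \leq j} A_l$ and $g(t) \coloneq \max(0, 2\varepsilon - t)$, so that $c_{i,j} = g(R(i,j))$ by \cref{eq:num_poison_to_separate_keys}. The function $g$ is nonincreasing and convex on $\mathbb{R}$, being the maximum of two affine functions. For $1 \leq i < i' < j < j' \leq n$, the four index intervals satisfy $[i,j] \cup [i',j'] = [i,j']$ and $[i,j] \cap [i',j'] = [i',j]$. Abbreviate $x \coloneq R(i,j)$, $y \coloneq R(i',j')$, $b \coloneq R(i,j')$ (the union) and $a \coloneq R(i',j)$ (the intersection). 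The Monge inequality then reads $g(x)+g(y) \leq g(a)+g(b)$.

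\textbf{Step 1 (inequalities on the ranges).} Range is monotone under interval inclusion, which gives $a \leq \min(x,y)$ and $\max(x,y) \leq b$. Next, let $M_I, m_I$ denote the maximum and minimum of $A$ over $[i,j]$, and $M_J, m_J$ the same over $[i',j']$. Over the union, the maximum is $\max(M_I,M_J)$ and the minimum is $\min(m_I,m_J)$. Over the intersection, the maximum is at most $\min(M_I,M_J)$ and the minimum is at least $\max(m_I,m_J)$. Adding these gives
\begin{equation}
a + b \leq \max(M_I,M_J) + \min(M_I,M_J) - \min(m_I,m_J) - \max(m_I,m_J) = x + y .
\end{equation}

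\textbf{Step 2 (convexity argument).} Set $b' \coloneq x + y - a$. Step 1 gives $b' \geq b$. Since $a \leq x$ and $a \leq y$, both $x$ and $y$ lie in $[a, b']$, and they have the same sum as the endpoints: $x + y = a + b'$. Write $x = \tau a + (1-\tau) b'$ with $\tau \in [0,1]$; then $y = (1-\tau) a + \tau b'$. Convexity of $g$ applied to each point and summed gives $g(x)+g(y) \leq g(a) + g(b')$. Finally $g(b') \leq g(b)$ because $g$ is nonincreasing and $b' \geq b$. Together these yield $c_{i,j} + c_{i',j'} \leq c_{i',j} + c_{i,j'}$.

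The main obstacle is Step 1: recognizing that the range functional is submodular-like with the inequality in the right direction ($R(\cup)+R(\cap) \leq R(I)+R(J)$). Together with the nesting $a \leq x, y \leq b$, this is exactly what a convex nonincreasing $g$ needs. The sentinel edges with $c_{i,n+1}=0$ lie outside the stated range $j' \leq n$, so no extra case is needed. The ceiling in \cref{lem:closed_form_cij} is not part of $c_{i,j}$, so it does not interfere either.
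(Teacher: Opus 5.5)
Your proof is correct and takes the same route as the paper. The paper also relies on the union/intersection inequality $R(i,j') + R(i',j) \le R(i,j) + R(i',j')$ and on monotonicity of the range under inclusion, and then simply asserts that the claim follows for $c_{p,q} = \max(0, 2\varepsilon - R(p,q))$; your Step 2 (convexity and monotonicity of $g$, via $x+y = a+b'$ and $b' \ge b$) supplies exactly the detail the paper leaves implicit.
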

\end{theorem_box}
\begin{proof}[Proof of \cref{lem:monge_property_cij}]
Let $D(i, j) \coloneq \max_{i \leq l \leq j} A_l - \min_{i \leq l \leq j} A_l$.
For any $i < i' < j < j'$, consider the intervals $I_1 = [i, j]$ and $I_2 = [i', j']$.
Their union is $I_1 \cup I_2 = [i, j']$ and their intersection is $I_1 \cap I_2 = [i', j]$.
By the basic properties of the range over intervals, the sum of the ranges of the union and intersection is upper-bounded by the sum of the ranges of the original intervals:
\begin{equation}
\label{eq:range_submodular}
    D(i, j') + D(i', j) \leq D(i, j) + D(i', j').
\end{equation}
Furthermore, the range is monotonically non-decreasing with respect to interval inclusion, meaning $D(i, j') \geq \max(D(i, j), D(i', j'))$.
Since $c_{p,q}=\max(0,\,2\varepsilon-D(p,q))$, these two properties together imply $c_{i,j} + c_{i',j'} \leq c_{i,j'} + c_{i',j}$, concluding the proof.
\end{proof}

We finally derive an upper bound on the number of segments under a poison budget.
For a path $P$ from node $1$ to node $n+1$, let $|P|$ denote its number of edges and let
\begin{equation}
    C(P) \coloneq \sum_{(i,j)\in P}c_{i,j}
\end{equation}
denote its total weight.
For any parameter $\nu>0$, define
\begin{equation}
\label{eq:lagrangian_distance}
    D(\nu) \coloneq \min_P{(C(P)-\nu|P|)}
\end{equation}
Equivalently, $D(\nu)$ is the shortest-path distance from node $1$ to node $n+1$ after subtracting $\nu$ from every edge weight in the DAG.

For any path $P$ realizable with at most $\lambda$ poisons, we have
$C(P)\leq\lambda$.
By the definition of $D(\nu)$,
\begin{equation}
    D(\nu) \leq C(P)-\nu|P| \leq \lambda-\nu|P|.
\end{equation}
Therefore, for any path $P$ realizable with at most $\lambda$ poisons,
\begin{equation}
\label{eq:per_block_lagrangian_bound}
    |P| \leq \frac{\lambda-D(\nu)}{\nu}.
\end{equation}
Thus, \cref{eq:per_block_lagrangian_bound} gives an upper bound on the number of segments achievable within the block.

Subtracting the same constant $\nu$ from every edge preserves the Monge property.
Hence, $D(\nu)$ can be computed in $\mathcal{O}(n)$ time using the LARSCH algorithm~\cite{aggarwal1986geometric,larmore1991line}.
In our experiments, we evaluate the bound for $\nu \in \mathcal{N} \coloneq \{1,2,3,4,5,10,20,40,80,160\}$ and take the minimum.
For a block $b$ and an allocated budget $\lambda$, we denote the resulting bound by $\mathrm{UB}_b(\lambda)$, defined as follows:
\begin{equation}
\label{eq:per_block_upper_bound_function}
    \mathrm{UB}_b(\lambda) \coloneq \min_{\nu\in\mathcal{N}} \frac{\lambda-D_b(\nu)}{\nu},
\end{equation}
where $D_b(\nu)$ is the shortest-path distance from node $1$ to node $n+1$ after subtracting $\nu$ from every edge weight in the DAG for block $b$.

\subsection{Allocation Across Blocks}

Suppose that $\mathcal{K}$ is divided into $B$ blocks.
Let $\lambda_b$ be the number of poisons allocated to block $b$.
Since the total poison budget is $\lambda$, the allocations satisfy
\begin{equation}
    \sum_{b=1}^{B}\lambda_b \leq \lambda.
\end{equation}
Using the per-block bound in
\cref{eq:per_block_upper_bound_function}, the global upper bound is obtained by solving
\begin{equation}
\label{eq:global_poison_allocation}
    \max_{\substack{\lambda_1,\dots,\lambda_B\in\mathbb{Z}_{\geq0}\\ \sum_{b=1}^{B}\lambda_b\leq\lambda}} \sum_{b=1}^{B}\mathrm{UB}_b(\lambda_b).
\end{equation}

Each $\mathrm{UB}_b(\lambda_b)$ is the pointwise minimum of affine functions of
$\lambda_b$ and is therefore concave.
Consequently, its marginal gain
\begin{equation}
    \mathrm{UB}_b(\lambda_b+1)-\mathrm{UB}_b(\lambda_b)
\end{equation}
is non-increasing in $\lambda_b$.
The allocation problem in \cref{eq:global_poison_allocation} can therefore be solved optimally by repeatedly assigning one poison to the block with the largest current marginal gain~\cite{federgruen1986greedy}.
The sum of the resulting per-block bounds gives the instance-dependent upper bound for the selected block partition.

\section{Additional Experiments}
\label{app:additional_experiments}

\subsection{Additional Results on PLA Segment Maximization}
\label{app:additional_mopt_results}

Here, we present a comprehensive set of results on the post-poisoning $m_{\mathrm{opt}}$ and the upper bounds we derive, which were omitted from the main text.
As attack methods, in addition to \textsc{PGM-attack}, we also evaluate \textsc{Random} and \textsc{Random-Adjacent}, the same baselines used in \cref{sec:experiment_poisoning_max_error}.
As upper bounds, we evaluate the instance-agnostic upper bound (\textbf{Agnostic UB}; \cref{sec:m_opt_instance_agnostic_upper_bound}) and the instance-dependent upper bound (\textbf{Instance UB}; \cref{sec:m_opt_instance_dependent_upper_bound}).
We use $\varepsilon \in \{16,32,64,128\}$.
The numbers in parentheses indicate the factor relative to $m_{\mathrm{opt}}$ for the legitimate keys.

\subsubsection{Complete Results on Poisoning and Upper Bounds}
\label{app:complete_mopt_results}

\begin{table*}[t]
    \centering
    \caption{Results for $m_{\mathrm{opt}}$ after poisoning. Across all datasets and settings of $\varepsilon$ and $\lambda$, our \textsc{PGM-attack} causes a substantially larger increase than randomly inserting poison keys.}
    \label{tab:poisoning_impact_on_m_opt_all}
    \setlength{\tabcolsep}{2pt}
    \footnotesize
    \begin{tabular}{@{}c l r rrr rr rrr rr@{}}
        \toprule
         & & & \multicolumn{5}{c}{$\lambda = 0.01\,n$} & \multicolumn{5}{c}{$\lambda = 0.1\,n$} \\
        \cmidrule(lr){4-8} \cmidrule(lr){9-13}
         & & & \multicolumn{3}{c}{Poisoning} & \multicolumn{2}{c}{Upper Bound} & \multicolumn{3}{c}{Poisoning} & \multicolumn{2}{c}{Upper Bound} \\
        \cmidrule(lr){4-6} \cmidrule(lr){7-8} \cmidrule(lr){9-11} \cmidrule(lr){12-13}
        $\varepsilon$ & Dataset & Original & PGM-attack & Random & Random-Adj. & Agnostic UB & Instance UB & PGM-attack & Random & Random-Adj. & Agnostic UB & Instance UB \\
        \midrule
        16 & Amzn & 7.94M & 8.98M {\footnotesize (1.13$\times$)} & 7.95M {\footnotesize (1.00$\times$)} & 8.06M {\footnotesize (1.01$\times$)} & 15.9M {\footnotesize (2.01$\times$)} & 14.5M {\footnotesize (1.82$\times$)} & 12.1M {\footnotesize (1.52$\times$)} & 8.01M {\footnotesize (1.01$\times$)} & 9.12M {\footnotesize (1.15$\times$)} & 87.9M {\footnotesize (11.1$\times$)} & 21.2M {\footnotesize (2.67$\times$)} \\
         & Osmc & 6.18M & 6.76M {\footnotesize (1.10$\times$)} & 6.19M {\footnotesize (1.00$\times$)} & 6.25M {\footnotesize (1.01$\times$)} & 14.2M {\footnotesize (2.30$\times$)} & 11.2M {\footnotesize (1.81$\times$)} & 9.18M {\footnotesize (1.49$\times$)} & 6.28M {\footnotesize (1.02$\times$)} & 6.92M {\footnotesize (1.12$\times$)} & 86.2M {\footnotesize (14.0$\times$)} & 17.4M {\footnotesize (2.82$\times$)} \\
         & Face & 2.12M & 2.33M {\footnotesize (1.10$\times$)} & 2.12M {\footnotesize (1.00$\times$)} & 2.14M {\footnotesize (1.01$\times$)} & 4.12M {\footnotesize (1.94$\times$)} & 3.64M {\footnotesize (1.71$\times$)} & 3.08M {\footnotesize (1.45$\times$)} & 2.14M {\footnotesize (1.01$\times$)} & 2.34M {\footnotesize (1.10$\times$)} & 22.1M {\footnotesize (10.4$\times$)} & 5.36M {\footnotesize (2.53$\times$)} \\
         & YCSB & 70.1K & 176K {\footnotesize (2.51$\times$)} & 72.5K {\footnotesize (1.03$\times$)} & 74.1K {\footnotesize (1.06$\times$)} & 2.07M {\footnotesize (29.5$\times$)} & 352K {\footnotesize (5.02$\times$)} & 810K {\footnotesize (11.6$\times$)} & 94.8K {\footnotesize (1.35$\times$)} & 111K {\footnotesize (1.58$\times$)} & 20.1M {\footnotesize (286$\times$)} & 1.3M {\footnotesize (18.6$\times$)} \\
         & Longitudes & 164K & 262K {\footnotesize (1.59$\times$)} & 166K {\footnotesize (1.01$\times$)} & 169K {\footnotesize (1.03$\times$)} & 2.16M {\footnotesize (13.2$\times$)} & 514K {\footnotesize (3.13$\times$)} & 831K {\footnotesize (5.05$\times$)} & 180K {\footnotesize (1.10$\times$)} & 208K {\footnotesize (1.26$\times$)} & 20.2M {\footnotesize (123$\times$)} & 1.46M {\footnotesize (8.86$\times$)} \\
         & Longlat & 452K & 543K {\footnotesize (1.20$\times$)} & 456K {\footnotesize (1.01$\times$)} & 459K {\footnotesize (1.01$\times$)} & 2.45M {\footnotesize (5.42$\times$)} & 992K {\footnotesize (2.19$\times$)} & 1.1M {\footnotesize (2.44$\times$)} & 473K {\footnotesize (1.05$\times$)} & 516K {\footnotesize (1.14$\times$)} & 20.5M {\footnotesize (45.2$\times$)} & 2.01M {\footnotesize (4.45$\times$)} \\
         & Uniform & 203K & 320K {\footnotesize (1.57$\times$)} & 209K {\footnotesize (1.03$\times$)} & 209K {\footnotesize (1.03$\times$)} & 2.2M {\footnotesize (10.8$\times$)} & 615K {\footnotesize (3.02$\times$)} & 930K {\footnotesize (4.57$\times$)} & 262K {\footnotesize (1.29$\times$)} & 264K {\footnotesize (1.30$\times$)} & 20.2M {\footnotesize (99.3$\times$)} & 1.64M {\footnotesize (8.04$\times$)} \\
         & Normal & 203K & 320K {\footnotesize (1.57$\times$)} & 205K {\footnotesize (1.01$\times$)} & 209K {\footnotesize (1.03$\times$)} & 2.2M {\footnotesize (10.8$\times$)} & 614K {\footnotesize (3.02$\times$)} & 929K {\footnotesize (4.58$\times$)} & 224K {\footnotesize (1.10$\times$)} & 264K {\footnotesize (1.30$\times$)} & 20.2M {\footnotesize (99.5$\times$)} & 1.64M {\footnotesize (8.06$\times$)} \\
         & Lognormal & 203K & 320K {\footnotesize (1.57$\times$)} & 205K {\footnotesize (1.01$\times$)} & 209K {\footnotesize (1.03$\times$)} & 2.2M {\footnotesize (10.8$\times$)} & 614K {\footnotesize (3.02$\times$)} & 929K {\footnotesize (4.58$\times$)} & 224K {\footnotesize (1.10$\times$)} & 264K {\footnotesize (1.30$\times$)} & 20.2M {\footnotesize (99.5$\times$)} & 1.64M {\footnotesize (8.06$\times$)} \\
        \midrule
        32 & Amzn & 2.46M & 2.97M {\footnotesize (1.21$\times$)} & 2.47M {\footnotesize (1.00$\times$)} & 2.52M {\footnotesize (1.02$\times$)} & 10.5M {\footnotesize (4.25$\times$)} & 5.15M {\footnotesize (2.09$\times$)} & 4.79M {\footnotesize (1.94$\times$)} & 2.49M {\footnotesize (1.01$\times$)} & 3.03M {\footnotesize (1.23$\times$)} & 82.5M {\footnotesize (33.5$\times$)} & 9.19M {\footnotesize (3.73$\times$)} \\
         & Osmc & 2.89M & 3.14M {\footnotesize (1.09$\times$)} & 2.89M {\footnotesize (1.00$\times$)} & 2.92M {\footnotesize (1.01$\times$)} & 10.9M {\footnotesize (3.77$\times$)} & 5.31M {\footnotesize (1.84$\times$)} & 4.3M {\footnotesize (1.49$\times$)} & 2.92M {\footnotesize (1.01$\times$)} & 3.22M {\footnotesize (1.11$\times$)} & 82.9M {\footnotesize (28.7$\times$)} & 8.26M {\footnotesize (2.86$\times$)} \\
         & Face & 1.06M & 1.16M {\footnotesize (1.10$\times$)} & 1.06M {\footnotesize (1.00$\times$)} & 1.07M {\footnotesize (1.01$\times$)} & 3.06M {\footnotesize (2.90$\times$)} & 1.82M {\footnotesize (1.72$\times$)} & 1.53M {\footnotesize (1.45$\times$)} & 1.06M {\footnotesize (1.01$\times$)} & 1.16M {\footnotesize (1.10$\times$)} & 21.1M {\footnotesize (20.0$\times$)} & 2.69M {\footnotesize (2.55$\times$)} \\
         & YCSB & 25.1K & 63.9K {\footnotesize (2.54$\times$)} & 25.5K {\footnotesize (1.02$\times$)} & 25.9K {\footnotesize (1.03$\times$)} & 2.03M {\footnotesize (80.6$\times$)} & 113K {\footnotesize (4.51$\times$)} & 361K {\footnotesize (14.4$\times$)} & 29.3K {\footnotesize (1.17$\times$)} & 32.8K {\footnotesize (1.31$\times$)} & 20M {\footnotesize (797$\times$)} & 563K {\footnotesize (22.4$\times$)} \\
         & Longitudes & 62.6K & 102K {\footnotesize (1.63$\times$)} & 62.9K {\footnotesize (1.00$\times$)} & 63.7K {\footnotesize (1.02$\times$)} & 2.06M {\footnotesize (32.9$\times$)} & 203K {\footnotesize (3.25$\times$)} & 380K {\footnotesize (6.07$\times$)} & 66.1K {\footnotesize (1.06$\times$)} & 74.3K {\footnotesize (1.19$\times$)} & 20.1M {\footnotesize (320$\times$)} & 653K {\footnotesize (10.4$\times$)} \\
         & Longlat & 219K & 259K {\footnotesize (1.18$\times$)} & 221K {\footnotesize (1.01$\times$)} & 222K {\footnotesize (1.01$\times$)} & 2.22M {\footnotesize (10.1$\times$)} & 475K {\footnotesize (2.17$\times$)} & 537K {\footnotesize (2.45$\times$)} & 226K {\footnotesize (1.03$\times$)} & 244K {\footnotesize (1.11$\times$)} & 20.2M {\footnotesize (92.2$\times$)} & 969K {\footnotesize (4.42$\times$)} \\
         & Uniform & 53K & 103K {\footnotesize (1.94$\times$)} & 54.5K {\footnotesize (1.03$\times$)} & 54.6K {\footnotesize (1.03$\times$)} & 2.05M {\footnotesize (38.7$\times$)} & 209K {\footnotesize (3.95$\times$)} & 391K {\footnotesize (7.38$\times$)} & 68.4K {\footnotesize (1.29$\times$)} & 69.1K {\footnotesize (1.30$\times$)} & 20.1M {\footnotesize (378$\times$)} & 670K {\footnotesize (12.6$\times$)} \\
         & Normal & 52.9K & 103K {\footnotesize (1.94$\times$)} & 53.4K {\footnotesize (1.01$\times$)} & 54.5K {\footnotesize (1.03$\times$)} & 2.05M {\footnotesize (38.8$\times$)} & 209K {\footnotesize (3.95$\times$)} & 391K {\footnotesize (7.39$\times$)} & 58.2K {\footnotesize (1.10$\times$)} & 68.8K {\footnotesize (1.30$\times$)} & 20.1M {\footnotesize (379$\times$)} & 670K {\footnotesize (12.7$\times$)} \\
         & Lognormal & 52.9K & 103K {\footnotesize (1.94$\times$)} & 53.5K {\footnotesize (1.01$\times$)} & 54.5K {\footnotesize (1.03$\times$)} & 2.05M {\footnotesize (38.8$\times$)} & 209K {\footnotesize (3.95$\times$)} & 391K {\footnotesize (7.39$\times$)} & 58.2K {\footnotesize (1.10$\times$)} & 68.8K {\footnotesize (1.30$\times$)} & 20.1M {\footnotesize (379$\times$)} & 670K {\footnotesize (12.7$\times$)} \\
        \midrule
        64 & Amzn & 797K & 962K {\footnotesize (1.21$\times$)} & 798K {\footnotesize (1.00$\times$)} & 810K {\footnotesize (1.02$\times$)} & 8.8M {\footnotesize (11.0$\times$)} & 1.73M {\footnotesize (2.17$\times$)} & 1.73M {\footnotesize (2.17$\times$)} & 804K {\footnotesize (1.01$\times$)} & 929K {\footnotesize (1.16$\times$)} & 80.8M {\footnotesize (101$\times$)} & 3.24M {\footnotesize (4.06$\times$)} \\
         & Osmc & 1.38M & 1.49M {\footnotesize (1.08$\times$)} & 1.38M {\footnotesize (1.00$\times$)} & 1.39M {\footnotesize (1.01$\times$)} & 9.38M {\footnotesize (6.82$\times$)} & 2.55M {\footnotesize (1.85$\times$)} & 2.06M {\footnotesize (1.50$\times$)} & 1.39M {\footnotesize (1.01$\times$)} & 1.52M {\footnotesize (1.11$\times$)} & 81.4M {\footnotesize (59.2$\times$)} & 3.97M {\footnotesize (2.88$\times$)} \\
         & Face & 523K & 577K {\footnotesize (1.10$\times$)} & 523K {\footnotesize (1.00$\times$)} & 528K {\footnotesize (1.01$\times$)} & 2.52M {\footnotesize (4.82$\times$)} & 905K {\footnotesize (1.73$\times$)} & 760K {\footnotesize (1.45$\times$)} & 524K {\footnotesize (1.00$\times$)} & 575K {\footnotesize (1.10$\times$)} & 20.5M {\footnotesize (39.2$\times$)} & 1.35M {\footnotesize (2.57$\times$)} \\
         & YCSB & 6.95K & 25.6K {\footnotesize (3.68$\times$)} & 7.12K {\footnotesize (1.02$\times$)} & 7.36K {\footnotesize (1.06$\times$)} & 2.01M {\footnotesize (289$\times$)} & 46.6K {\footnotesize (6.71$\times$)} & 169K {\footnotesize (24.2$\times$)} & 8.73K {\footnotesize (1.26$\times$)} & 10.3K {\footnotesize (1.49$\times$)} & 20M {\footnotesize (2877$\times$)} & 272K {\footnotesize (39.1$\times$)} \\
         & Longitudes & 27.8K & 45.6K {\footnotesize (1.64$\times$)} & 27.9K {\footnotesize (1.00$\times$)} & 28.2K {\footnotesize (1.01$\times$)} & 2.03M {\footnotesize (72.9$\times$)} & 87K {\footnotesize (3.13$\times$)} & 185K {\footnotesize (6.66$\times$)} & 28.7K {\footnotesize (1.03$\times$)} & 31.7K {\footnotesize (1.14$\times$)} & 20M {\footnotesize (720$\times$)} & 312K {\footnotesize (11.2$\times$)} \\
         & Longlat & 112K & 131K {\footnotesize (1.17$\times$)} & 113K {\footnotesize (1.01$\times$)} & 113K {\footnotesize (1.01$\times$)} & 2.11M {\footnotesize (18.9$\times$)} & 238K {\footnotesize (2.12$\times$)} & 271K {\footnotesize (2.42$\times$)} & 114K {\footnotesize (1.02$\times$)} & 123K {\footnotesize (1.10$\times$)} & 20.1M {\footnotesize (180$\times$)} & 483K {\footnotesize (4.32$\times$)} \\
         & Uniform & 13.5K & 34.9K {\footnotesize (2.60$\times$)} & 13.9K {\footnotesize (1.03$\times$)} & 13.8K {\footnotesize (1.03$\times$)} & 2.01M {\footnotesize (150$\times$)} & 69K {\footnotesize (5.12$\times$)} & 177K {\footnotesize (13.2$\times$)} & 17.5K {\footnotesize (1.30$\times$)} & 17.6K {\footnotesize (1.30$\times$)} & 20M {\footnotesize (1487$\times$)} & 294K {\footnotesize (21.8$\times$)} \\
         & Normal & 13.5K & 34.9K {\footnotesize (2.59$\times$)} & 13.6K {\footnotesize (1.01$\times$)} & 13.9K {\footnotesize (1.03$\times$)} & 2.01M {\footnotesize (149$\times$)} & 69K {\footnotesize (5.11$\times$)} & 177K {\footnotesize (13.1$\times$)} & 14.8K {\footnotesize (1.10$\times$)} & 17.5K {\footnotesize (1.30$\times$)} & 20M {\footnotesize (1484$\times$)} & 294K {\footnotesize (21.8$\times$)} \\
         & Lognormal & 13.5K & 34.9K {\footnotesize (2.59$\times$)} & 13.6K {\footnotesize (1.01$\times$)} & 13.9K {\footnotesize (1.03$\times$)} & 2.01M {\footnotesize (149$\times$)} & 69K {\footnotesize (5.11$\times$)} & 177K {\footnotesize (13.1$\times$)} & 14.8K {\footnotesize (1.10$\times$)} & 17.6K {\footnotesize (1.30$\times$)} & 20M {\footnotesize (1483$\times$)} & 294K {\footnotesize (21.8$\times$)} \\
        \midrule
        128 & Amzn & 268K & 335K {\footnotesize (1.25$\times$)} & 268K {\footnotesize (1.00$\times$)} & 273K {\footnotesize (1.02$\times$)} & 8.27M {\footnotesize (30.9$\times$)} & 619K {\footnotesize (2.31$\times$)} & 670K {\footnotesize (2.50$\times$)} & 269K {\footnotesize (1.01$\times$)} & 315K {\footnotesize (1.17$\times$)} & 80.3M {\footnotesize (300$\times$)} & 1.23M {\footnotesize (4.60$\times$)} \\
         & Osmc & 668K & 723K {\footnotesize (1.08$\times$)} & 669K {\footnotesize (1.00$\times$)} & 674K {\footnotesize (1.01$\times$)} & 8.67M {\footnotesize (13.0$\times$)} & 1.23M {\footnotesize (1.85$\times$)} & 1M {\footnotesize (1.51$\times$)} & 671K {\footnotesize (1.01$\times$)} & 738K {\footnotesize (1.10$\times$)} & 80.7M {\footnotesize (121$\times$)} & 1.93M {\footnotesize (2.88$\times$)} \\
         & Face & 256K & 283K {\footnotesize (1.10$\times$)} & 256K {\footnotesize (1.00$\times$)} & 259K {\footnotesize (1.01$\times$)} & 2.26M {\footnotesize (8.80$\times$)} & 444K {\footnotesize (1.73$\times$)} & 374K {\footnotesize (1.46$\times$)} & 257K {\footnotesize (1.00$\times$)} & 282K {\footnotesize (1.10$\times$)} & 20.3M {\footnotesize (79.0$\times$)} & 668K {\footnotesize (2.60$\times$)} \\
         & YCSB & 663 & 10.3K {\footnotesize (15.5$\times$)} & 759 {\footnotesize (1.14$\times$)} & 786 {\footnotesize (1.19$\times$)} & 2M {\footnotesize (3018$\times$)} & 17.6K {\footnotesize (26.6$\times$)} & 79.8K {\footnotesize (120$\times$)} & 1.32K {\footnotesize (1.99$\times$)} & 1.6K {\footnotesize (2.42$\times$)} & 20M {\footnotesize (30167$\times$)} & 130K {\footnotesize (196$\times$)} \\
         & Longitudes & 13.7K & 22.3K {\footnotesize (1.63$\times$)} & 13.8K {\footnotesize (1.00$\times$)} & 13.9K {\footnotesize (1.01$\times$)} & 2.01M {\footnotesize (147$\times$)} & 41.5K {\footnotesize (3.02$\times$)} & 92.4K {\footnotesize (6.73$\times$)} & 13.9K {\footnotesize (1.01$\times$)} & 15.3K {\footnotesize (1.11$\times$)} & 20M {\footnotesize (1458$\times$)} & 154K {\footnotesize (11.2$\times$)} \\
         & Longlat & 58.1K & 67.7K {\footnotesize (1.16$\times$)} & 58.9K {\footnotesize (1.01$\times$)} & 58.7K {\footnotesize (1.01$\times$)} & 2.06M {\footnotesize (35.4$\times$)} & 122K {\footnotesize (2.11$\times$)} & 138K {\footnotesize (2.37$\times$)} & 59.1K {\footnotesize (1.02$\times$)} & 63.7K {\footnotesize (1.10$\times$)} & 20.1M {\footnotesize (345$\times$)} & 245K {\footnotesize (4.22$\times$)} \\
         & Uniform & 3.42K & 12.9K {\footnotesize (3.78$\times$)} & 3.53K {\footnotesize (1.03$\times$)} & 3.53K {\footnotesize (1.03$\times$)} & 2M {\footnotesize (586$\times$)} & 23.7K {\footnotesize (6.91$\times$)} & 83.7K {\footnotesize (24.5$\times$)} & 4.45K {\footnotesize (1.30$\times$)} & 4.44K {\footnotesize (1.30$\times$)} & 20M {\footnotesize (5847$\times$)} & 136K {\footnotesize (39.8$\times$)} \\
         & Normal & 3.5K & 13K {\footnotesize (3.70$\times$)} & 3.52K {\footnotesize (1.01$\times$)} & 3.6K {\footnotesize (1.03$\times$)} & 2M {\footnotesize (572$\times$)} & 23.7K {\footnotesize (6.78$\times$)} & 83.5K {\footnotesize (23.8$\times$)} & 3.77K {\footnotesize (1.08$\times$)} & 4.49K {\footnotesize (1.28$\times$)} & 20M {\footnotesize (5707$\times$)} & 136K {\footnotesize (38.9$\times$)} \\
         & Lognormal & 3.49K & 12.9K {\footnotesize (3.71$\times$)} & 3.52K {\footnotesize (1.01$\times$)} & 3.59K {\footnotesize (1.03$\times$)} & 2M {\footnotesize (574$\times$)} & 23.7K {\footnotesize (6.81$\times$)} & 83.4K {\footnotesize (23.9$\times$)} & 3.81K {\footnotesize (1.09$\times$)} & 4.51K {\footnotesize (1.29$\times$)} & 20M {\footnotesize (5735$\times$)} & 136K {\footnotesize (39.1$\times$)} \\
        \bottomrule
    \end{tabular}
\end{table*}

\Cref{tab:poisoning_impact_on_m_opt_all} shows the results.
Across all rows of \cref{tab:poisoning_impact_on_m_opt_all}, \textsc{PGM-attack} attains a higher post-poisoning $m_{\mathrm{opt}}$ than both \textsc{Random} and \textsc{Random-Adjacent}.
On the YCSB dataset, \textsc{PGM-attack} increases $m_{\mathrm{opt}}$ by $15.5\times$ with $1\%$ poisoning and by up to $120\times$ with $10\%$ poisoning.
This pronounced effect stems from the fact that YCSB is nearly uniform, allowing a PLA with very few segments to be constructed before poisoning.
By adding only a small number of poison keys, our attack drastically increases the required number of segments.
Random poisoning also has a relatively large impact on YCSB: at a $10\%$ poisoning rate, \textsc{Random} and \textsc{Random-Adjacent} increase $m_{\mathrm{opt}}$ by $1.99\times$ and $2.42\times$, respectively.
Nevertheless, the increase caused by \textsc{PGM-attack} is substantially larger.
The impact of our attack is also significant on datasets other than YCSB.
At a $10\%$ poisoning rate, it increases $m_{\mathrm{opt}}$ by factors ranging from $1.46\times$ to $24.5\times$, substantially exceeding the increases caused by the random baselines, which range from $1.00\times$ to $1.30\times$.

Regarding the upper bounds, Instance UB is consistently tighter than Agnostic UB and is at most $1.92\times$ the attained $m_{\mathrm{opt}}$.
This provides an instance-wise certificate that \textsc{PGM-attack} achieves at least $52\%$ of the optimum on all evaluated instances.
It also shows that incorporating instance-specific information yields a tighter bound than the closed-form Agnostic UB.
These results support the usefulness of our upper bounds for understanding the maximum achievable impact of poisoning attacks and, more broadly, for characterizing how much $m_{\mathrm{opt}}$ can increase under arbitrary insertions.

\subsubsection{Robustness to Unknown Index Parameters}
\label{app:epsilon_mismatch}

\begin{table}[t]
    \centering
    \vspace{-0.25em}
    \caption{$m_{\mathrm{opt}}$ after \textsc{PGM-attack} when the PGM-index is constructed with $\varepsilon_{\mathrm{true}}$, but poisons are generated assuming that the PGM-index parameter is $\varepsilon_{\mathrm{attack}}$. $\lambda = 0.1n$.}
    \label{tab:gray-box-results}
    \vspace{-1.0em}
    \footnotesize
    \setlength{\tabcolsep}{3.5pt}
    \renewcommand{\arraystretch}{0.88}
    \begin{tabular}{@{}l *{4}{r}@{}}
        \toprule
        Amzn & $\varepsilon_{\mathrm{true}} = 16$ & $\varepsilon_{\mathrm{true}} = 32$ & $\varepsilon_{\mathrm{true}} = 64$ & $\varepsilon_{\mathrm{true}} = 128$ \\
        \cmidrule(lr){2-5}
        $\varepsilon_{\mathrm{attack}} = 16$ & \textbf{12.1M {\footnotesize (1.52$\times$)}} & 3.19M {\footnotesize (1.30$\times$)} & 1.04M {\footnotesize (1.31$\times$)} & 386K {\footnotesize (1.44$\times$)} \\
        $\varepsilon_{\mathrm{attack}} = 32$ & 10.2M {\footnotesize (1.28$\times$)} & \textbf{4.79M {\footnotesize (1.94$\times$)}} & 1.05M {\footnotesize (1.32$\times$)} & 365K {\footnotesize (1.36$\times$)} \\
        $\varepsilon_{\mathrm{attack}} = 64$ & 9.2M {\footnotesize (1.16$\times$)} & 3.55M {\footnotesize (1.44$\times$)} & \textbf{1.73M {\footnotesize (2.17$\times$)}} & 395K {\footnotesize (1.47$\times$)} \\
        $\varepsilon_{\mathrm{attack}} = 128$ & 8.52M {\footnotesize (1.07$\times$)} & 3.03M {\footnotesize (1.23$\times$)} & 1.33M {\footnotesize (1.67$\times$)} & \textbf{670K {\footnotesize (2.50$\times$)}} \\
        \midrule
        Osmc & $\varepsilon_{\mathrm{true}} = 16$ & $\varepsilon_{\mathrm{true}} = 32$ & $\varepsilon_{\mathrm{true}} = 64$ & $\varepsilon_{\mathrm{true}} = 128$ \\
        \cmidrule(lr){2-5}
        $\varepsilon_{\mathrm{attack}} = 16$ & \textbf{9.18M {\footnotesize (1.49$\times$)}} & 3.47M {\footnotesize (1.20$\times$)} & 1.6M {\footnotesize (1.17$\times$)} & 780K {\footnotesize (1.17$\times$)} \\
        $\varepsilon_{\mathrm{attack}} = 32$ & 7.48M {\footnotesize (1.21$\times$)} & \textbf{4.3M {\footnotesize (1.49$\times$)}} & 1.66M {\footnotesize (1.21$\times$)} & 773K {\footnotesize (1.16$\times$)} \\
        $\varepsilon_{\mathrm{attack}} = 64$ & 6.84M {\footnotesize (1.11$\times$)} & 3.53M {\footnotesize (1.22$\times$)} & \textbf{2.06M {\footnotesize (1.50$\times$)}} & 809K {\footnotesize (1.21$\times$)} \\
        $\varepsilon_{\mathrm{attack}} = 128$ & 6.53M {\footnotesize (1.06$\times$)} & 3.22M {\footnotesize (1.11$\times$)} & 1.69M {\footnotesize (1.23$\times$)} & \textbf{1M {\footnotesize (1.51$\times$)}} \\
        \midrule
        Face & $\varepsilon_{\mathrm{true}} = 16$ & $\varepsilon_{\mathrm{true}} = 32$ & $\varepsilon_{\mathrm{true}} = 64$ & $\varepsilon_{\mathrm{true}} = 128$ \\
        \cmidrule(lr){2-5}
        $\varepsilon_{\mathrm{attack}} = 16$ & \textbf{3.08M {\footnotesize (1.45$\times$)}} & 1.19M {\footnotesize (1.13$\times$)} & 581K {\footnotesize (1.11$\times$)} & 286K {\footnotesize (1.11$\times$)} \\
        $\varepsilon_{\mathrm{attack}} = 32$ & 2.59M {\footnotesize (1.22$\times$)} & \textbf{1.53M {\footnotesize (1.45$\times$)}} & 594K {\footnotesize (1.14$\times$)} & 286K {\footnotesize (1.12$\times$)} \\
        $\varepsilon_{\mathrm{attack}} = 64$ & 2.4M {\footnotesize (1.13$\times$)} & 1.3M {\footnotesize (1.23$\times$)} & \textbf{760K {\footnotesize (1.45$\times$)}} & 293K {\footnotesize (1.15$\times$)} \\
        $\varepsilon_{\mathrm{attack}} = 128$ & 2.29M {\footnotesize (1.08$\times$)} & 1.2M {\footnotesize (1.13$\times$)} & 646K {\footnotesize (1.23$\times$)} & \textbf{374K {\footnotesize (1.46$\times$)}} \\
        \midrule
        YCSB & $\varepsilon_{\mathrm{true}} = 16$ & $\varepsilon_{\mathrm{true}} = 32$ & $\varepsilon_{\mathrm{true}} = 64$ & $\varepsilon_{\mathrm{true}} = 128$ \\
        \cmidrule(lr){2-5}
        $\varepsilon_{\mathrm{attack}} = 16$ & \textbf{810K {\footnotesize (11.6$\times$)}} & 144K {\footnotesize (5.74$\times$)} & 47.6K {\footnotesize (6.85$\times$)} & 15.1K {\footnotesize (22.7$\times$)} \\
        $\varepsilon_{\mathrm{attack}} = 32$ & 622K {\footnotesize (8.87$\times$)} & \textbf{361K {\footnotesize (14.4$\times$)}} & 64.7K {\footnotesize (9.30$\times$)} & 24.1K {\footnotesize (36.3$\times$)} \\
        $\varepsilon_{\mathrm{attack}} = 64$ & 328K {\footnotesize (4.67$\times$)} & 264K {\footnotesize (10.5$\times$)} & \textbf{169K {\footnotesize (24.2$\times$)}} & 35.6K {\footnotesize (53.7$\times$)} \\
        $\varepsilon_{\mathrm{attack}} = 128$ & 206K {\footnotesize (2.93$\times$)} & 162K {\footnotesize (6.44$\times$)} & 103K {\footnotesize (14.8$\times$)} & \textbf{79.8K {\footnotesize (120$\times$)}} \\
        \midrule
        Longitudes & $\varepsilon_{\mathrm{true}} = 16$ & $\varepsilon_{\mathrm{true}} = 32$ & $\varepsilon_{\mathrm{true}} = 64$ & $\varepsilon_{\mathrm{true}} = 128$ \\
        \cmidrule(lr){2-5}
        $\varepsilon_{\mathrm{attack}} = 16$ & \textbf{831K {\footnotesize (5.05$\times$)}} & 174K {\footnotesize (2.77$\times$)} & 66.7K {\footnotesize (2.40$\times$)} & 33.1K {\footnotesize (2.41$\times$)} \\
        $\varepsilon_{\mathrm{attack}} = 32$ & 490K {\footnotesize (2.98$\times$)} & \textbf{380K {\footnotesize (6.07$\times$)}} & 79.7K {\footnotesize (2.86$\times$)} & 29.5K {\footnotesize (2.15$\times$)} \\
        $\varepsilon_{\mathrm{attack}} = 64$ & 324K {\footnotesize (1.97$\times$)} & 220K {\footnotesize (3.51$\times$)} & \textbf{185K {\footnotesize (6.66$\times$)}} & 39.6K {\footnotesize (2.89$\times$)} \\
        $\varepsilon_{\mathrm{attack}} = 128$ & 245K {\footnotesize (1.49$\times$)} & 142K {\footnotesize (2.26$\times$)} & 106K {\footnotesize (3.82$\times$)} & \textbf{92.4K {\footnotesize (6.73$\times$)}} \\
        \midrule
        Longlat & $\varepsilon_{\mathrm{true}} = 16$ & $\varepsilon_{\mathrm{true}} = 32$ & $\varepsilon_{\mathrm{true}} = 64$ & $\varepsilon_{\mathrm{true}} = 128$ \\
        \cmidrule(lr){2-5}
        $\varepsilon_{\mathrm{attack}} = 16$ & \textbf{1.1M {\footnotesize (2.44$\times$)}} & 323K {\footnotesize (1.47$\times$)} & 148K {\footnotesize (1.32$\times$)} & 76.1K {\footnotesize (1.31$\times$)} \\
        $\varepsilon_{\mathrm{attack}} = 32$ & 765K {\footnotesize (1.69$\times$)} & \textbf{537K {\footnotesize (2.45$\times$)}} & 164K {\footnotesize (1.46$\times$)} & 74.1K {\footnotesize (1.27$\times$)} \\
        $\varepsilon_{\mathrm{attack}} = 64$ & 611K {\footnotesize (1.35$\times$)} & 375K {\footnotesize (1.71$\times$)} & \textbf{271K {\footnotesize (2.42$\times$)}} & 84.2K {\footnotesize (1.45$\times$)} \\
        $\varepsilon_{\mathrm{attack}} = 128$ & 534K {\footnotesize (1.18$\times$)} & 298K {\footnotesize (1.36$\times$)} & 190K {\footnotesize (1.70$\times$)} & \textbf{138K {\footnotesize (2.37$\times$)}} \\
        \midrule
        Uniform & $\varepsilon_{\mathrm{true}} = 16$ & $\varepsilon_{\mathrm{true}} = 32$ & $\varepsilon_{\mathrm{true}} = 64$ & $\varepsilon_{\mathrm{true}} = 128$ \\
        \cmidrule(lr){2-5}
        $\varepsilon_{\mathrm{attack}} = 16$ & \textbf{930K {\footnotesize (4.57$\times$)}} & 193K {\footnotesize (3.63$\times$)} & 69K {\footnotesize (5.12$\times$)} & 28.1K {\footnotesize (8.21$\times$)} \\
        $\varepsilon_{\mathrm{attack}} = 32$ & 588K {\footnotesize (2.89$\times$)} & \textbf{391K {\footnotesize (7.38$\times$)}} & 79.8K {\footnotesize (5.93$\times$)} & 27.3K {\footnotesize (7.99$\times$)} \\
        $\varepsilon_{\mathrm{attack}} = 64$ & 403K {\footnotesize (1.98$\times$)} & 243K {\footnotesize (4.59$\times$)} & \textbf{177K {\footnotesize (13.2$\times$)}} & 36.7K {\footnotesize (10.7$\times$)} \\
        $\varepsilon_{\mathrm{attack}} = 128$ & 303K {\footnotesize (1.49$\times$)} & 154K {\footnotesize (2.91$\times$)} & 111K {\footnotesize (8.23$\times$)} & \textbf{83.7K {\footnotesize (24.5$\times$)}} \\
        \midrule
        Normal & $\varepsilon_{\mathrm{true}} = 16$ & $\varepsilon_{\mathrm{true}} = 32$ & $\varepsilon_{\mathrm{true}} = 64$ & $\varepsilon_{\mathrm{true}} = 128$ \\
        \cmidrule(lr){2-5}
        $\varepsilon_{\mathrm{attack}} = 16$ & \textbf{929K {\footnotesize (4.58$\times$)}} & 192K {\footnotesize (3.64$\times$)} & 69K {\footnotesize (5.12$\times$)} & 28.3K {\footnotesize (8.06$\times$)} \\
        $\varepsilon_{\mathrm{attack}} = 32$ & 587K {\footnotesize (2.89$\times$)} & \textbf{391K {\footnotesize (7.39$\times$)}} & 79.7K {\footnotesize (5.91$\times$)} & 27.2K {\footnotesize (7.76$\times$)} \\
        $\varepsilon_{\mathrm{attack}} = 64$ & 403K {\footnotesize (1.98$\times$)} & 243K {\footnotesize (4.60$\times$)} & \textbf{177K {\footnotesize (13.1$\times$)}} & 36.7K {\footnotesize (10.5$\times$)} \\
        $\varepsilon_{\mathrm{attack}} = 128$ & 302K {\footnotesize (1.49$\times$)} & 153K {\footnotesize (2.90$\times$)} & 110K {\footnotesize (8.18$\times$)} & \textbf{83.5K {\footnotesize (23.8$\times$)}} \\
        \midrule
        Lognormal & $\varepsilon_{\mathrm{true}} = 16$ & $\varepsilon_{\mathrm{true}} = 32$ & $\varepsilon_{\mathrm{true}} = 64$ & $\varepsilon_{\mathrm{true}} = 128$ \\
        \cmidrule(lr){2-5}
        $\varepsilon_{\mathrm{attack}} = 16$ & \textbf{929K {\footnotesize (4.58$\times$)}} & 192K {\footnotesize (3.63$\times$)} & 69K {\footnotesize (5.12$\times$)} & 28.3K {\footnotesize (8.10$\times$)} \\
        $\varepsilon_{\mathrm{attack}} = 32$ & 587K {\footnotesize (2.89$\times$)} & \textbf{391K {\footnotesize (7.39$\times$)}} & 79.7K {\footnotesize (5.91$\times$)} & 27.2K {\footnotesize (7.81$\times$)} \\
        $\varepsilon_{\mathrm{attack}} = 64$ & 403K {\footnotesize (1.98$\times$)} & 243K {\footnotesize (4.60$\times$)} & \textbf{177K {\footnotesize (13.1$\times$)}} & 36.7K {\footnotesize (10.5$\times$)} \\
        $\varepsilon_{\mathrm{attack}} = 128$ & 302K {\footnotesize (1.49$\times$)} & 153K {\footnotesize (2.90$\times$)} & 110K {\footnotesize (8.16$\times$)} & \textbf{83.4K {\footnotesize (23.9$\times$)}} \\
        \bottomrule
    \end{tabular}
\end{table}

We further evaluate whether \textsc{PGM-attack} remains effective when the attacker does not know the true PGM-index parameter.
Specifically, we generate poison keys using an assumed value $\varepsilon_{\mathrm{attack}}$ and construct the index using the actual value $\varepsilon_{\mathrm{true}}$.
\Cref{tab:gray-box-results} reports the resulting $m_{\mathrm{opt}}$.

As expected, the attack is generally most effective when $\varepsilon_{\mathrm{attack}}=\varepsilon_{\mathrm{true}}$.
Nevertheless, \textsc{PGM-attack} remains effective even under parameter mismatch.
For example, on Amzn with $\varepsilon_{\mathrm{true}}=128$, the attack increases $m_{\mathrm{opt}}$ by $2.50\times$ when the attacker knows the true parameter and sets $\varepsilon_{\mathrm{attack}}=128$.
When the attacker instead assumes $\varepsilon_{\mathrm{attack}}=32$, the increase decreases to $1.36\times$.
Even so, this remains larger than the increases caused by random and random-adjacent poisoning, which are $1.01\times$ and $1.17\times$, respectively (\cref{tab:poisoning_impact_on_m_opt_all}).
Similarly, on YCSB with $\varepsilon_{\mathrm{true}}=128$, mismatched attacks still increase $m_{\mathrm{opt}}$ by at least $22.7\times$, far exceeding the $1.99\times$ and $2.42\times$ increases caused by random and random-adjacent poisoning.

These results suggest that poison keys generated to target a PLA with one value of $\varepsilon$ partially transfer to PLAs constructed with other values.
Thus, exact knowledge of $\varepsilon$ is not necessary to generate effective poison keys.
In this sense, our white-box attack already remains effective in a \textit{gray-box} setting where the attacker knows the data but not the value of $\varepsilon$.
Our method therefore provides a useful foundation for developing more sophisticated attacks under gray-box and black-box settings.

\subsubsection{Ablation Study on $\mu$}
\label{app:ablation_study_mu}

\begin{figure}[t]
    \centering
    \includegraphics[width=\columnwidth]{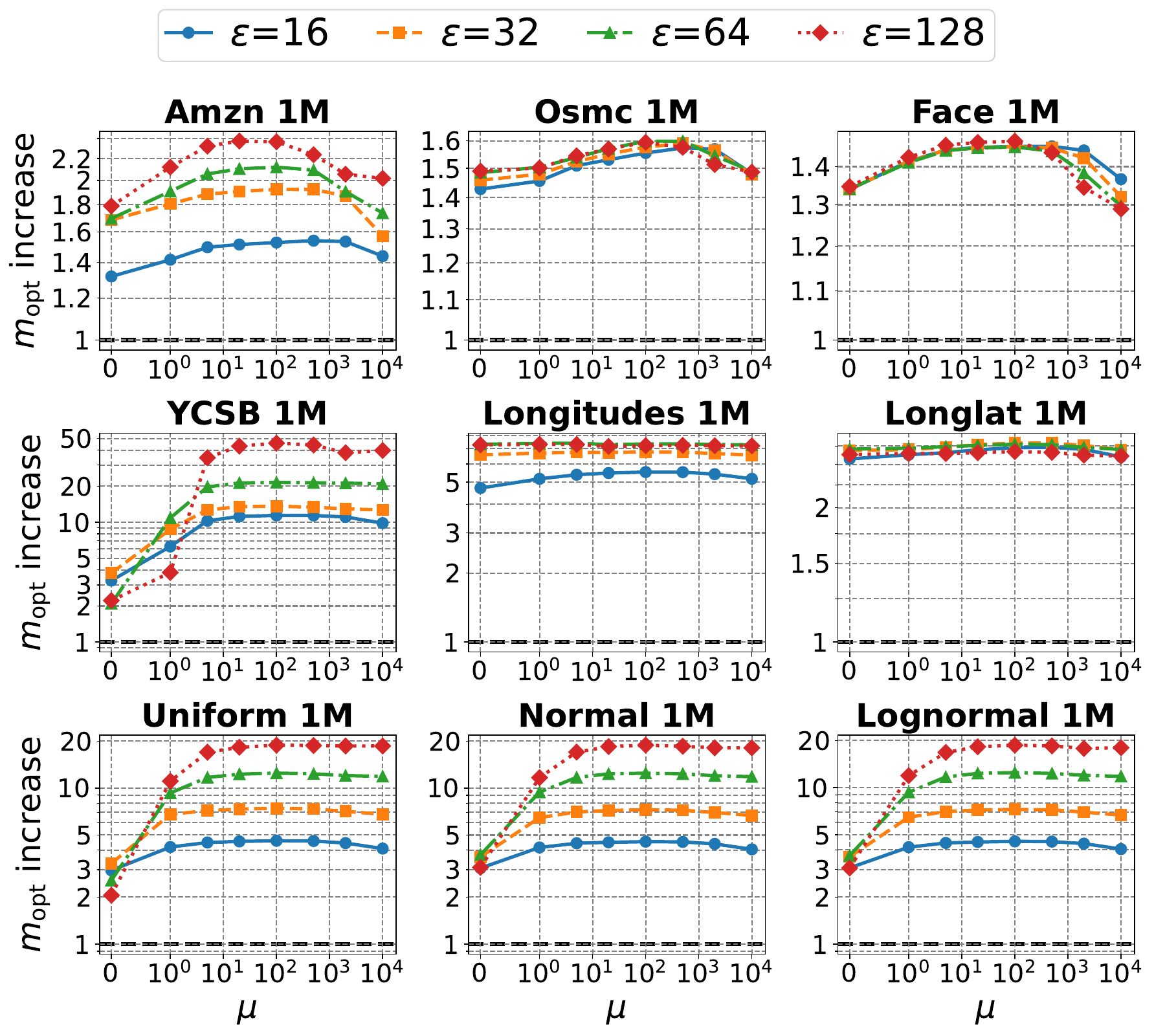}
    \caption{Effect of $\mu$ on $m_{\mathrm{opt}}$ for $\lambda = 100{,}000$.}
    \label{fig:mu_to_mopt_lambda100000_epsilon_overlay}
\end{figure}

To study the effect of $\mu$, we report in \cref{fig:mu_to_mopt_lambda100000_epsilon_overlay} the results averaged over five seeds on $n = 1$M keys. For $\varepsilon \in \{16,32,64,128\}$, we plot the factor by which $m_{\mathrm{opt}}$ increases relative to legitimate keys.
Very small or very large $\mu$ can reduce the amplification: small $\mu$ under-penalizes poison usage and exhausts budget unevenly, while large $\mu$ overreacts to budget imbalance and behaves unstably.
Empirically, $\mu \approx 100$ works robustly; dynamic $\theta$ improves over fixing $\theta = \theta_0$ ($\mu = 0$) by up to $20\times$.
These results show that dynamically updating $\theta$ yields higher measured $m_{\mathrm{opt}}$ than fixing $\theta = \theta_0$.

\subsubsection{Poison Generation Time}
\label{app:poison_generation_time}

\begin{table*}[t]
    \centering
    \caption{Poison generation time (hours).}
    \label{tab:generation-time-full}
    \begin{tabular}{@{}l r r r r r r r r@{}}
        \toprule
         & \multicolumn{4}{c}{$\lambda = 0.01\,n$} & \multicolumn{4}{c}{$\lambda = 0.1\,n$} \\
        \cmidrule(lr){2-5} \cmidrule(lr){6-9}
        Dataset & $\varepsilon = 16$ & $\varepsilon = 32$ & $\varepsilon = 64$ & $\varepsilon = 128$ & $\varepsilon = 16$ & $\varepsilon = 32$ & $\varepsilon = 64$ & $\varepsilon = 128$ \\
        \midrule
        Amzn & 3.1 & 2.9 & 4.3 & 4.7 & 4.9 & 5.8 & 8.7 & 12.3 \\
        Osmc & 2.9 & 2.5 & 2.6 & 2.3 & 4.9 & 4.8 & 4.8 & 5.0 \\
        Face & 0.9 & 0.8 & 0.8 & 0.6 & 1.3 & 1.2 & 1.0 & 1.0 \\
        YCSB & 1.5 & 1.2 & 2.0 & 3.3 & 3.5 & 4.2 & 6.0 & 14.7 \\
        Longitudes & 1.0 & 0.9 & 1.1 & 1.1 & 3.0 & 4.3 & 5.0 & 5.3 \\
        Longlat & 0.7 & 0.7 & 0.7 & 0.6 & 2.2 & 2.7 & 2.7 & 2.6 \\
        Uniform & 1.0 & 1.2 & 1.6 & 1.9 & 2.3 & 3.7 & 5.3 & 7.3 \\
        Normal & 1.0 & 1.2 & 1.6 & 1.8 & 2.3 & 3.9 & 5.5 & 7.1 \\
        Lognormal & 1.0 & 1.2 & 1.6 & 1.8 & 2.3 & 3.6 & 5.3 & 7.2 \\
        \bottomrule
    \end{tabular}
\end{table*}

\Cref{tab:generation-time-full} reports the poison generation time for all combinations of $\varepsilon$ and $\lambda$.
Across all datasets and parameter settings, poison generation takes between $0.6$ and $14.7$ hours.
Although this computational cost is non-negligible, it remains practical for an offline attacker because the poison keys can be generated in advance and subsequently inserted into the target index.

The generation time tends to increase as either $\varepsilon$ or $\lambda$ increases.
The main reason is that these settings allow the selected poison keys to shorten the current segment more substantially.
\textsc{PGM-attack} processes the key sequence from left to right.
At each step, it generates poison candidates for the segment starting at the current segment-start position, selects poison keys, and recomputes the resulting segmentation.
If the selected poisons shorten the segment only slightly, the next segment starts relatively far to the right, and the algorithm soon moves to a mostly new region of the dataset.
In contrast, when the shortening is large, the next segment starts much earlier than before poisoning.
Consequently, the algorithm must again generate and evaluate poison candidates for a segment that substantially overlaps a previously processed region.
This repeated recomputation over overlapping regions increases the total generation time.
Larger $\varepsilon$ permits greater reductions in segment length, while a larger poison budget $\lambda$ increases the cumulative shortening caused by the selected poisons; both lead to more repeated processing.

A secondary reason for the increase with $\varepsilon$ is that the segments before poisoning tend to contain more keys.
For a segment containing $c$ legitimate keys, DI-Consecutive requires $\mathcal{O}(c\log c)$ time to generate a poison candidate.
Thus, as $\varepsilon$ increases and the unpoisoned segment length $c$ becomes larger, the cost of processing each candidate segment also grows superlinearly in $c$.

\subsection{Additional Results on Index Performance}
\label{app:index_performance_results}

\begin{figure*}[p]
    \centering
    \includegraphics[width=\textwidth]{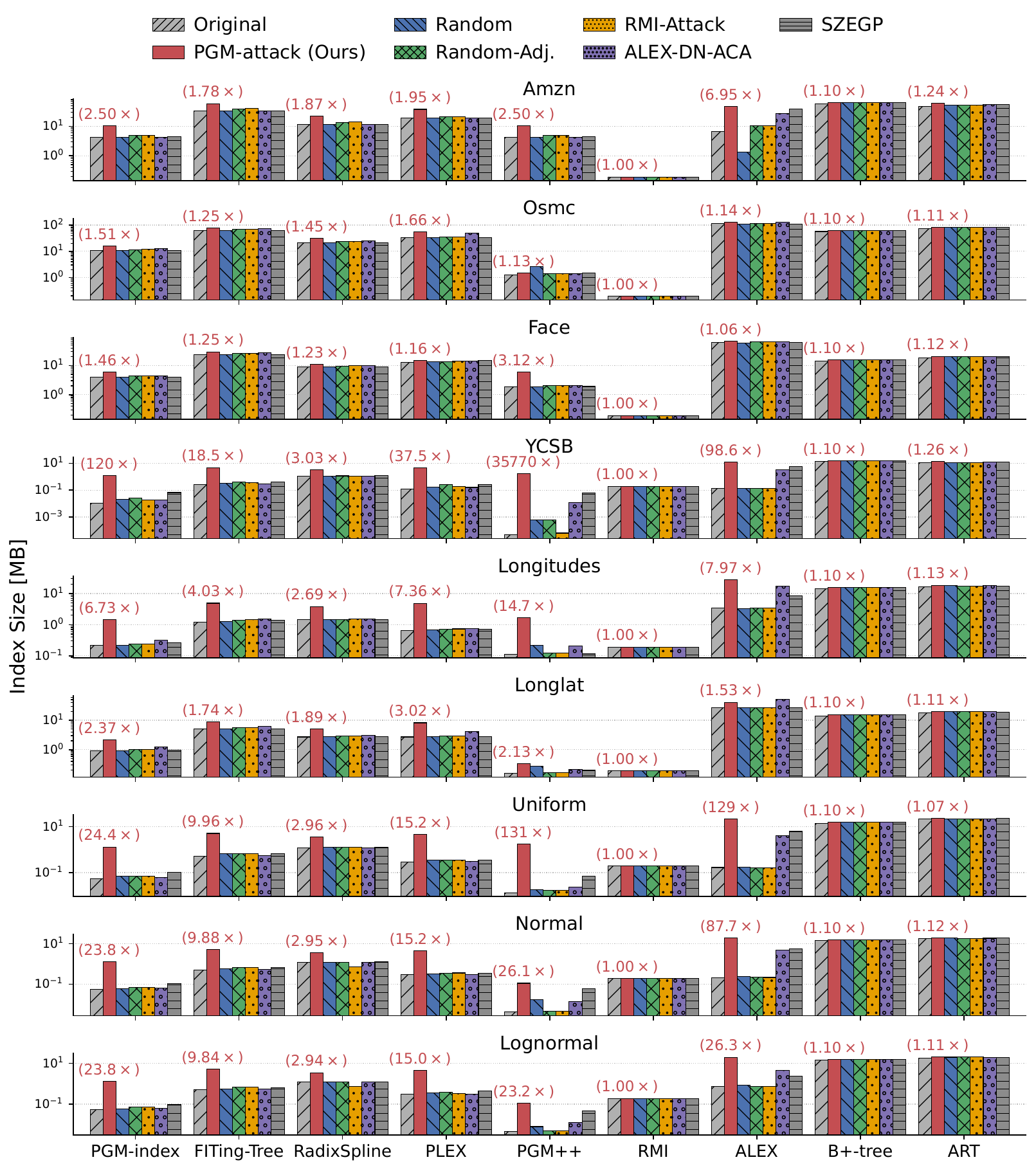}
    \caption{Index sizes before and after 10\% poisoning ($\lambda = 0.1n$) for $\varepsilon = 128$. Numbers above the \textsc{PGM-attack} bars indicate ratios relative to the corresponding original indexes.}
    \label{fig:index_size_eps128_lambda01n}
\end{figure*}

\begin{figure*}[p]
    \centering
    \includegraphics[width=\textwidth]{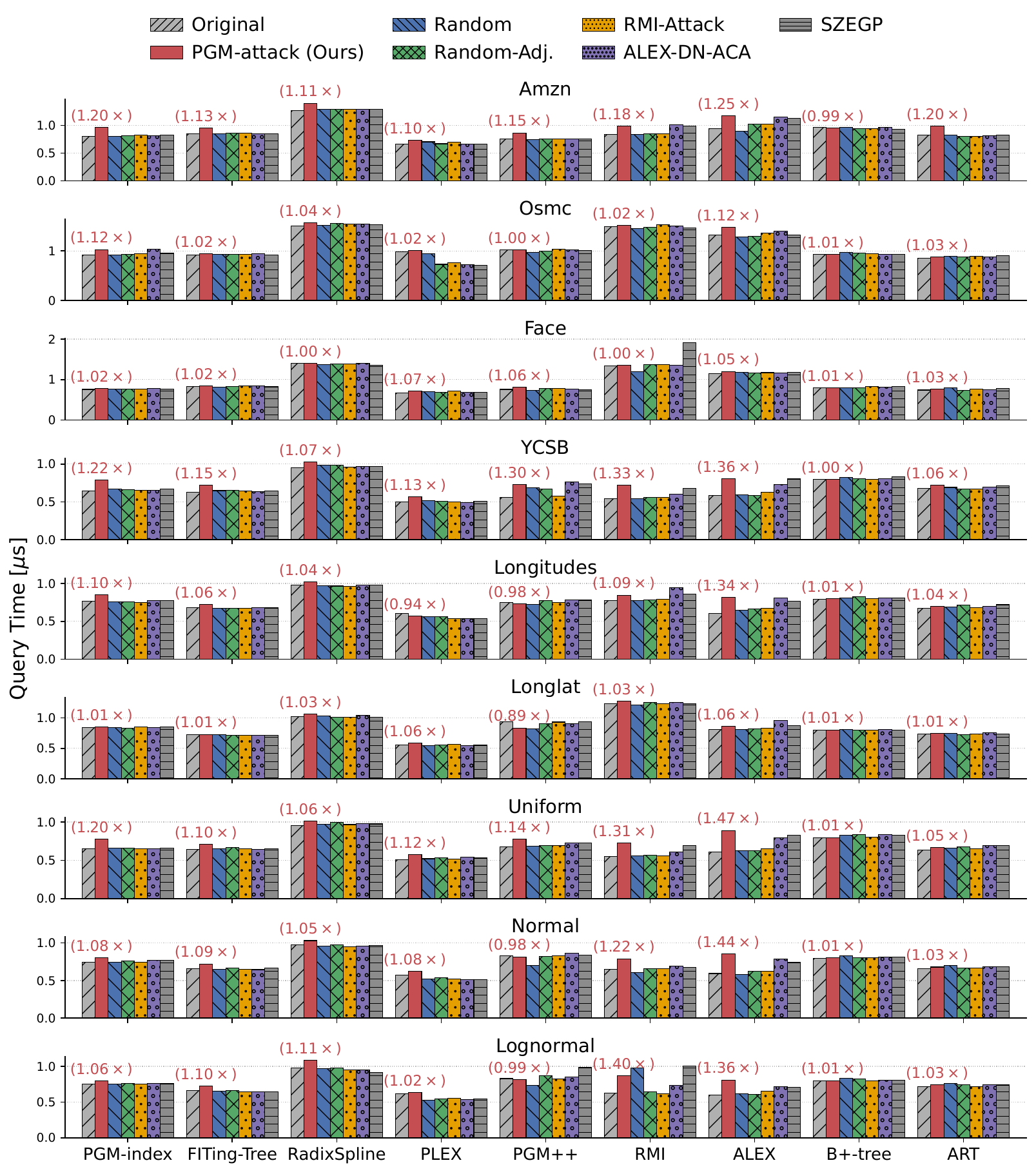}
    \caption{Query times before and after 10\% poisoning ($\lambda = 0.1n$) for $\varepsilon = 128$. Numbers above the \textsc{PGM-attack} bars indicate ratios relative to the corresponding original indexes.}
    \label{fig:query_time_eps128_lambda01n}
\end{figure*}

We further evaluate the practical impact of \textsc{PGM-attack} on index size and query time.
Because this extended evaluation covers more datasets, indexes, and poisoning methods than the evaluation in \cref{sec:experiment_poisoning_index_performance}, we present the results as figures rather than tables for readability.
\Cref{fig:index_size_eps128_lambda01n,fig:query_time_eps128_lambda01n} report the index size and query time, respectively, for $\varepsilon = 128$ and 10\% poisoning ($\lambda = 0.1n$).
For each \textsc{PGM-attack} bar, we additionally display the ratio relative to the corresponding original, unpoisoned index above the bar.

\paragraph{Poisoning Methods.}
In addition to our proposed \textsc{PGM-attack}, we evaluate two random baselines, \textsc{Random} and \textsc{Random-Adj.}
\textsc{Random} selects poison keys uniformly at random, without replacement, from the unoccupied integers within the legitimate key range.
Motivated by \cref{thm:maxerror_attack_structure}, \textsc{Random-Adj.} instead samples uniformly from unoccupied integers adjacent to legitimate keys.
We also evaluate the poisoning attack proposed for RMIs~\cite{kornaropoulos2022price}, which we refer to as \textsc{RMI-attack}.
Although \textsc{RMI-attack} was specifically designed to maximize the squared prediction error of RMI models, we apply the poison keys generated by every method to every evaluated index to examine their transferability across index architectures.

We further evaluate two methods proposed as algorithmic complexity attacks (ACAs) against ALEX.
The first is a space ACA targeting ALEX data nodes, which we denote by \textsc{ALEX-DN-ACA}~\cite{yang2023algorithmic}.
The second, denoted by \textsc{SZEGP}, is a space ACA that induces repeated structural expansion in ALEX through adversarial insertions~\cite{schuster2025learned}.
Both methods were originally designed to increase the memory consumption of an already constructed ALEX by dynamically inserting adversarially selected keys.
This differs from our poisoning setting, in which the attacker injects poison keys into the dataset before index construction.
To ensure a consistent comparison, we evaluate all methods under our poisoning setting: each index is bulk-loaded from the dataset containing the generated poison keys, rather than being attacked through online insertions after construction.

\paragraph{Effect on PLA-Based Learned Indexes.}
\textsc{PGM-attack} is not only effective against PGM-index, but also effective against all four other PLA-based learned indexes.
The largest multiplicative increases tend to occur when the original index is particularly small, that is, when the index benefits strongly from the regularity and linear approximability of the original key distribution.
In such cases, poisoning removes much of this advantage by creating many local regions that cannot be accurately represented using a small number of linear segments.

This effect is most pronounced on YCSB, although \textsc{PGM-attack} produces greater performance degradation than the other poisoning methods in nearly all cases on the remaining real-world and synthetic datasets.
On YCSB, relative to the original index, \textsc{PGM-attack} increases the index size by up to $18.5\times$ for FITing-Tree, $3.03\times$ for RadixSpline, $37.5\times$ for PLEX, and $35{,}770\times$ for PGM++.
The exceptionally large ratio for PGM++ partly reflects its extremely small original index size on YCSB.
Across all other real-world and synthetic datasets, \textsc{PGM-attack} increases the sizes of all PLA-based learned indexes by factors ranging from $1.13\times$ to $130\times$.
In nearly every case, it produces a larger increase than any competing poisoning method.

These results suggest that \textsc{PGM-attack} exploits the underlying PLA approximation problem rather than an implementation-specific property of the PGM-index.
Its poison keys therefore transfer to learned indexes that employ different PLA construction algorithms, recursive organizations, or auxiliary search structures.

The increase in query time is less pronounced than the increase in index size.
Nevertheless, in most cases, \textsc{PGM-attack} causes the largest query-time degradation among the evaluated poisoning methods, increasing query time by up to $1.30\times$.
This slowdown is likely attributable, at least in part, to the larger structures induced by the increased number of segments, which may lead to additional cache misses and traversal overhead.

\paragraph{Effect on Other Learned Indexes.}
\textsc{PGM-attack} can also substantially increase both the index size and query time of ALEX.
In particular, it increases the ALEX index size by $98.6\times$ on YCSB and by $129\times$ on Uniform.
These large increases can be explained by the fact that ALEX's bulk-loading algorithm adaptively determines its index structure based on model accuracy.
During construction, ALEX evaluates the model accuracy at each node and decides whether to retain the keys in a single data node or partition them among multiple child nodes.
Lower model accuracy can therefore lead to a more highly partitioned structure and a larger overall index.

Because ALEX uses linear models to navigate its nodes, the poison keys generated by \textsc{PGM-attack} likely also increase the prediction errors of these models.
These results demonstrate that the transferability of \textsc{PGM-attack} extends beyond the family of PLA-based learned indexes.

\paragraph{Comparison with \textsc{RMI-attack}.}
Notably, \textsc{PGM-attack} is at least as effective as \textsc{RMI-attack}, even on RMI, for which the latter was specifically designed.
On RMI, \textsc{PGM-attack} increases query time by up to $1.40\times$, whereas \textsc{RMI-attack} increases it by at most $1.04\times$.

One possible explanation is the difference between the objectives optimized by the two attacks.
\textsc{RMI-attack} selects poison keys to maximize the squared prediction error of the RMI models.
However, an RMI answers a query by performing a last-mile search, such as linear or exponential search, around the predicted position.
Squared error may therefore be imperfectly aligned with the actual query-processing cost: a small number of very large errors can dominate the objective without proportionally increasing the average last-mile search cost.
In contrast, \textsc{PGM-attack} targets local worst-case approximation error and places poison keys in regions where they substantially disrupt linear approximation.
This objective may be more closely aligned with the last-mile search cost incurred by learned indexes.

The optimization procedure used by \textsc{RMI-attack} may also limit its effectiveness.
In particular, it assumes that leaf models are responsible for equal numbers of keys and employs a hill-climbing procedure initialized from a uniform poison allocation.
Consequently, the resulting poison set may remain far from the optimum for a particular key distribution or RMI configuration.

\paragraph{Comparison with Attacks on ALEX.}
A similar pattern arises for ALEX.
Under our evaluation setting, \textsc{PGM-attack} causes greater degradation than the attacks proposed specifically for ALEX~\cite{yang2023algorithmic,schuster2025learned}.
This difference is largely attributable to the threat models for which these ACA methods were designed.
They exploit algorithmic behavior triggered by online insertions into an already constructed ALEX, such as inefficient resizing, splitting, or structural expansion, rather than constructing a poisoned dataset from which ALEX is subsequently bulk-loaded.

When the keys generated by these attacks are instead included during bulk loading, ALEX can jointly optimize its initial structure over both legitimate and adversarial keys, avoiding many of the transient inefficiencies that arise during online updates.
Bulk loading is therefore generally a more favorable setting for the index than adversarial online insertion.
The effectiveness of \textsc{PGM-attack} even in this setting suggests that it does not merely exploit an inefficient update procedure.
Rather, it reveals a broader vulnerability in the model-based construction objective itself: carefully placed keys can make the data intrinsically difficult to represent using the simple local models on which the index relies.

\paragraph{Transferability across Learned Indexes.}
Finally, \textsc{PGM-attack} exhibits substantially greater transferability than the prior index-specific attacks.
The effects of \textsc{RMI-attack}, \textsc{ALEX-DN-ACA}, and \textsc{SZEGP} are generally limited outside the architectures for which they were designed.
By contrast, \textsc{PGM-attack} consistently degrades a broad range of learned indexes.

We attribute this transferability to a design principle shared by many learned indexes: they partition the key space into regions and represent each region using a small, simple model, often a linear model.
\textsc{PGM-attack} concentrates poison keys at locations where they substantially increase local approximation difficulty and worst-case prediction error.
Because these quantities are closely related to the number of required models, the complexity of the index structure, and the cost of last-mile search, the resulting poison keys can remain effective across substantially different learned-index architectures.

\section{Duplicate-Allowed Poisoning}
\label{app:duplicate_allowed_poisoning}

In this section, we discuss poisoning in the duplicate-allowed setting.
We first describe how duplicates are handled in the actual PGM-index implementation.
We then discuss how to generate poisons under this setting.
Finally, we experimentally evaluate \textsc{PGM-attack} in this setting.

\subsection{PGM-index with Duplicates}

\begin{algorithm}[t]
    \caption{PLA construction in PGM-index with duplicate keys}
    \label{alg:pgm_duplicates_allowed}
    \begin{algorithmic}[1]
    \Require
    \Statex Nondecreasing keys array $\mathcal{X} = (x_1, x_2, \dots, x_n)$;
    \Statex Error parameter $\varepsilon$.
    \Ensure A PLA built for $\mathcal{X}$ with error at most $\varepsilon$.
    \State $\mathrm{PLA} \gets \mathrm{initializePLA}(\varepsilon)$
    \State $\mathrm{PLA}.\mathrm{addPoint}(x_1, 1)$
    \For{$i \gets 2$ \textbf{to} $n$}
        \If{$x_i = x_{i-1}$}
            \If{$i < n ~\land~ x_i + 1 < x_{i+1}$}
                \State $\mathrm{PLA}.\mathrm{addPoint}(x_i+1, i)$
            \EndIf
        \Else
            \State $\mathrm{PLA}.\mathrm{addPoint}(x_i, i)$
        \EndIf
    \EndFor
    \State \Return $\mathrm{PLA}$
    \end{algorithmic}
\end{algorithm}

\begin{figure}[t]
    \centering
    \includegraphics[width=0.9\columnwidth]{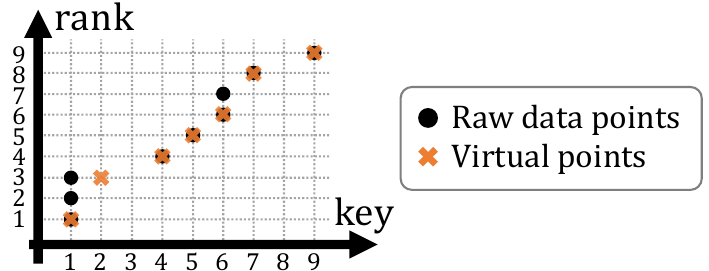}
    \caption{PLA construction with duplicate keys in the PGM-index. Instead of directly constructing the PLA from the original key-rank pairs, the PGM-index constructs the PLA over a set of virtual points introduced to handle duplicates.}
    \label{fig:dup_allowed_pgm}
\end{figure}

When given a sorted integer array containing duplicate keys, the PGM-index does not construct the PLA directly from the original key-rank pairs.
Instead, it constructs the PLA over virtual points introduced to handle duplicates.
Specifically, it first adds $(x_1,1)$.
Then, for each position $i \in \{2,3,\dots,n\}$, if $x_i$ differs from the previous key, it adds $(x_i,i)$.
If $x_i$ is a duplicate, it does not add that point directly; instead, only when $x_i$ is at the end of a duplicate run and there is a gap before the next distinct key, it adds $(x_i+1,i)$.
\Cref{alg:pgm_duplicates_allowed} presents the pseudocode, and \cref{fig:dup_allowed_pgm} illustrates the resulting virtual points.
The PLA is then constructed with error at most $\varepsilon$ on these virtual points.

This keeps the construction feasible even when a key occurs more than $2\varepsilon+1$ times;
if the raw data points are used directly, PLA construction becomes infeasible in this case (see \cref{fig:dup_allowed_pgm}).
At the same time, the resulting PLA can still answer lower-bound and related queries correctly.
In particular, $(x_i+1,i)$ captures the information immediately after the duplicate run of value $x_i$.
As a result, even with duplicate keys, the PGM-index can properly construct the approximate position and correction range used internally, allowing its final search procedure to return the correct result for lower-bound-style queries.

\subsection{Poison Generation with Duplicates}

\begin{table*}[t]
    \centering
    \caption{Results for $m_{\mathrm{opt}}$ after duplicate-allowed poisoning.
    \textsc{PGM-attack} increases $m_{\mathrm{opt}}$ relative to legitimate keys (by up to $4.68\times$ at $\lambda = 0.01\,n$ and up to $36.1\times$ at $\lambda = 0.1\,n$), whereas \textsc{Random} and \textsc{Random-Adj.} barely change $m_{\mathrm{opt}}$ (by at most $1.24\times$ and $1.30\times$, respectively).
    The amplification factor for \textsc{PGM-attack} tends to grow with $\varepsilon$.}
    \label{tab:duplicate_allowed_poisoning_results}
    \begin{tabular}{@{}c l r rrr rrr@{}}
        \toprule
         & & & \multicolumn{3}{c}{$\lambda = 0.01\,n$} & \multicolumn{3}{c}{$\lambda = 0.1\,n$} \\
        \cmidrule(lr){4-6} \cmidrule(lr){7-9}
        $\varepsilon$ & Dataset & Original & \textsc{PGM-attack} & \textsc{Random} & \textsc{Random-Adj.} & \textsc{PGM-attack} & \textsc{Random} & \textsc{Random-Adj.} \\
        \midrule
        16 & Weblogs & 291K & 340K {\footnotesize (1.17$\times$)} & 292K {\footnotesize (1.00$\times$)} & 295K {\footnotesize (1.01$\times$)} & 583K {\footnotesize (2.00$\times$)} & 297K {\footnotesize (1.02$\times$)} & 331K {\footnotesize (1.14$\times$)} \\
         & IoT & 178K & 225K {\footnotesize (1.27$\times$)} & 179K {\footnotesize (1.01$\times$)} & 181K {\footnotesize (1.02$\times$)} & 526K {\footnotesize (2.96$\times$)} & 186K {\footnotesize (1.05$\times$)} & 206K {\footnotesize (1.16$\times$)} \\
         & Wiki & 228K & 334K {\footnotesize (1.46$\times$)} & 230K {\footnotesize (1.01$\times$)} & 233K {\footnotesize (1.02$\times$)} & 937K {\footnotesize (4.11$\times$)} & 245K {\footnotesize (1.07$\times$)} & 280K {\footnotesize (1.23$\times$)} \\
         & Zipf & 83.4K & 151K {\footnotesize (1.81$\times$)} & 85.4K {\footnotesize (1.02$\times$)} & 85.5K {\footnotesize (1.03$\times$)} & 668K {\footnotesize (8.01$\times$)} & 104K {\footnotesize (1.24$\times$)} & 105K {\footnotesize (1.26$\times$)} \\
        \midrule
        32 & Weblogs & 122K & 145K {\footnotesize (1.19$\times$)} & 123K {\footnotesize (1.00$\times$)} & 124K {\footnotesize (1.01$\times$)} & 267K {\footnotesize (2.18$\times$)} & 124K {\footnotesize (1.01$\times$)} & 139K {\footnotesize (1.14$\times$)} \\
         & IoT & 78.3K & 101K {\footnotesize (1.29$\times$)} & 78.5K {\footnotesize (1.00$\times$)} & 79.3K {\footnotesize (1.01$\times$)} & 253K {\footnotesize (3.23$\times$)} & 80.6K {\footnotesize (1.03$\times$)} & 89.6K {\footnotesize (1.14$\times$)} \\
         & Wiki & 85.3K & 130K {\footnotesize (1.53$\times$)} & 85.8K {\footnotesize (1.01$\times$)} & 86.8K {\footnotesize (1.02$\times$)} & 414K {\footnotesize (4.86$\times$)} & 89.4K {\footnotesize (1.05$\times$)} & 101K {\footnotesize (1.18$\times$)} \\
         & Zipf & 24.9K & 58.1K {\footnotesize (2.33$\times$)} & 25.5K {\footnotesize (1.02$\times$)} & 25.6K {\footnotesize (1.03$\times$)} & 326K {\footnotesize (13.1$\times$)} & 30.2K {\footnotesize (1.21$\times$)} & 31.8K {\footnotesize (1.28$\times$)} \\
        \midrule
        64 & Weblogs & 49.3K & 59.8K {\footnotesize (1.21$\times$)} & 49.4K {\footnotesize (1.00$\times$)} & 50.1K {\footnotesize (1.01$\times$)} & 119K {\footnotesize (2.41$\times$)} & 49.8K {\footnotesize (1.01$\times$)} & 56.2K {\footnotesize (1.14$\times$)} \\
         & IoT & 36.3K & 47.2K {\footnotesize (1.30$\times$)} & 36.4K {\footnotesize (1.00$\times$)} & 36.8K {\footnotesize (1.01$\times$)} & 123K {\footnotesize (3.39$\times$)} & 36.9K {\footnotesize (1.02$\times$)} & 41K {\footnotesize (1.13$\times$)} \\
         & Wiki & 37.5K & 56.9K {\footnotesize (1.52$\times$)} & 37.7K {\footnotesize (1.01$\times$)} & 38K {\footnotesize (1.01$\times$)} & 194K {\footnotesize (5.16$\times$)} & 38.7K {\footnotesize (1.03$\times$)} & 42.4K {\footnotesize (1.13$\times$)} \\
         & Zipf & 7.24K & 23.7K {\footnotesize (3.27$\times$)} & 7.34K {\footnotesize (1.01$\times$)} & 7.44K {\footnotesize (1.03$\times$)} & 160K {\footnotesize (22.1$\times$)} & 8.58K {\footnotesize (1.19$\times$)} & 9.32K {\footnotesize (1.29$\times$)} \\
        \midrule
        128 & Weblogs & 20K & 25K {\footnotesize (1.25$\times$)} & 20K {\footnotesize (1.00$\times$)} & 20.3K {\footnotesize (1.01$\times$)} & 53.4K {\footnotesize (2.67$\times$)} & 20.2K {\footnotesize (1.01$\times$)} & 22.7K {\footnotesize (1.14$\times$)} \\
         & IoT & 16.3K & 22.2K {\footnotesize (1.37$\times$)} & 16.3K {\footnotesize (1.00$\times$)} & 16.5K {\footnotesize (1.02$\times$)} & 60.4K {\footnotesize (3.72$\times$)} & 16.5K {\footnotesize (1.01$\times$)} & 18.5K {\footnotesize (1.14$\times$)} \\
         & Wiki & 19.2K & 28K {\footnotesize (1.46$\times$)} & 19.3K {\footnotesize (1.01$\times$)} & 19.4K {\footnotesize (1.01$\times$)} & 94.7K {\footnotesize (4.93$\times$)} & 19.5K {\footnotesize (1.02$\times$)} & 21.1K {\footnotesize (1.10$\times$)} \\
         & Zipf & 2.18K & 10.2K {\footnotesize (4.68$\times$)} & 2.2K {\footnotesize (1.01$\times$)} & 2.25K {\footnotesize (1.03$\times$)} & 78.7K {\footnotesize (36.1$\times$)} & 2.52K {\footnotesize (1.16$\times$)} & 2.83K {\footnotesize (1.30$\times$)} \\
        \bottomrule
    \end{tabular}
\end{table*}

At a high level, our algorithm for generating poisons in the duplicate-allowed setting is the same as the algorithm for the duplicate-forbidden setting (\cref{alg:pgm_poisoning}).
The only difference is the function $\textsc{DI-Consecutive}(\mathcal{K}_\mathrm{seg}, \lambda_\mathrm{cand})$ in \cref{alg:pgm_poisoning}, namely, the function that generates $\lambda$ poisons for a given segment $\mathcal{K}_{\mathrm{seg}}$.

In the duplicate-forbidden setting, we used the discrete-intercept algorithm to efficiently evaluate candidate poison sets consisting of consecutive integers.
In the duplicate-allowed setting, we again use the discrete-intercept algorithm, but now we evaluate poison sets consisting of a single integer.
This design is motivated by \cref{thm:duplicate_allowed_structure}, which shows that, in the duplicate-allowed maximum-error maximization problem, an optimal poison set can consist of a single legitimate key.
Motivated by this result, we efficiently evaluate such singleton candidates using the discrete-intercept algorithm and select the poison set that minimizes the number of covered legitimate keys.

Note that, when selecting this single integer and evaluating the poison set, we use \emph{virtual points} rather than raw data points.
Specifically, for each key appearing among the virtual points introduced in \cref{alg:pgm_duplicates_allowed}, we form a poison set consisting of $\lambda$ copies of that key and evaluate it.
As in Appendix~\ref{app:detail_discrete_intercept_consec}, each candidate can be evaluated in $\mathcal{O}(|\mathcal{I}| \log c)$ time, where $c$ is the number of legitimate keys currently coverable by one segment.
Since the number of candidates is $\mathcal{O}(c)$, the total time complexity is $\mathcal{O}(|\mathcal{I}| c \log c)$.
By using virtual points in this way, the poison-generation algorithm partially reflects the behavior of the PGM-index PLA construction algorithm.

\subsection{Experimental Evaluation}

Here, we evaluate how much \textsc{PGM-attack} can increase the number of segments when duplicates are allowed.

\textbf{Datasets.}
To evaluate the effect of \textsc{PGM-attack} in the duplicate-allowed setting, we use three real-world datasets containing duplicates and one synthetic dataset with duplicates:
\begin{itemize}[leftmargin=1.5em]
    \item From SOSD~\cite{sosd-vldb}, we use the \textbf{Wiki} dataset. It contains 200.0M keys in total, of which 90.0M are unique.
    \item Following the PGM-index paper~\cite{ferragina2020pgm} and subsequent work, we use the \textbf{Weblogs} and \textbf{IoT} datasets. However, the MgBench dataset~\cite{mgbench_old} used in that line of work is no longer publicly available. We therefore use the mgbench dataset released by Andrew Crotty~\cite{crotty_mgbench_github}. In this benchmark, bench2 corresponds to \textbf{Weblogs} and bench3 to \textbf{IoT}. They contain 75.7M and 109.0M keys, respectively, and the numbers of unique keys are 18.7M and 108.9M.
    \item As a synthetic dataset, we generate a \textbf{Zipf} dataset. It contains 200.0M keys in total, of which 34.8M are unique.
\end{itemize}

\textbf{Methods.}
We compare \textsc{PGM-attack} with \textsc{Random} and \textsc{Random-Adj.}
Note, however, that these random baselines differ slightly from those in the duplicate-forbidden setting, although the underlying policy is analogous.

For \textsc{Random}, in the duplicate-forbidden setting we sampled uniformly from the allowed integers, i.e., integers in the key range $[k_1, k_n]$ that do not belong to the legitimate keys.
In the duplicate-allowed setting, by contrast, every integer in the key range is allowed, so we simply sample $\lambda$ integers uniformly at random from that key range.

For \textsc{Random-Adj.}, in the duplicate-forbidden setting we sampled uniformly from integers adjacent to legitimate keys.
In the duplicate-allowed setting, since integers already contained in the legitimate keys are also allowed, we instead sample $\lambda$ integers uniformly at random, with replacement, from the legitimate keys.

\textbf{Results.}
\cref{tab:duplicate_allowed_poisoning_results} reports how much \textsc{PGM-attack} increases the number of segments in the duplicate-allowed setting.

For every dataset, \textsc{PGM-attack} increases $m_{\mathrm{opt}}$ relative to the legitimate-key baseline: by up to $4.68\times$ at 1\% poisoning ($\lambda = 0.01\,n$) and up to $36.1\times$ at 10\% poisoning ($\lambda = 0.1\,n$).
In contrast, \textsc{Random} and \textsc{Random-Adj.} barely change $m_{\mathrm{opt}}$, by at most $1.24\times$ and $1.30\times$, respectively.
These results show that, even when duplicates are allowed, \textsc{PGM-attack} yields larger increases in $m_{\mathrm{opt}}$ than random insertions.

We also observe, as in the duplicate-forbidden setting, that the amplification factor of \textsc{PGM-attack} tends to grow with $\varepsilon$.
For example, when $\lambda = 0.1\,n$, the amplification factor is at most $8.01\times$ for $\varepsilon = 16$, but rises to $36.1\times$ for $\varepsilon = 128$.
As discussed in \cref{sec:experiment_poisoning_segment_number}, this trend can be understood intuitively in the same way as in the duplicate-forbidden setting.

\end{document}